\newcommand{\ket}[1]{{\left\vert{#1}\right\rangle}}
\newcommand{\qw}[1][-1]{\ar @{-} [0,#1]}
\newcommand{\qwx}[1][-1]{\ar @{-} [#1,0]}
\newcommand{\cw}[1][-1]{\ar @{=} [0,#1]}
\newcommand{\cwx}[1][-1]{\ar @{=} [#1,0]}
\newcommand{\gate}[1]{*+<.6em>{#1} \POS ="i","i"+UR;"i"+UL **\dir{-};"i"+DL **\dir{-};"i"+DR **\dir{-};"i"+UR **\dir{-},"i" \qw}
\newcommand{\measureD}[1]{*{\xy*+=<0em,.1em>{#1}="e";"e"+UR+<0em,.25em>;"e"+UL+<-.5em,.25em> **\dir{-};"e"+DL+<-.5em,-.25em> **\dir{-};"e"+DR+<0em,-.25em> **\dir{-};{"e"+UR+<0em,.25em>\ellipse^{}};"e"+C:,+(0,1)*{} \endxy} \qw}
\newcommand{\control}{*!<0em,.025em>-=-<.2em>{\bullet}}
\newcommand{\ctrl}[1]{\control \qwx[#1] \qw}
\newcommand{\targ}{*+<.02em,.02em>{\xy ="i","i"-<.39em,0em>;"i"+<.39em,0em> **\dir{-}, "i"-<0em,.39em>;"i"+<0em,.39em> **\dir{-},"i"*\xycircle<.4em>{} \endxy} \qw}
\newcommand{\push}[1]{*{#1}}
\newcommand{\gategroup}[6]{\POS"#1,#2"."#3,#2"."#1,#4"."#3,#4"!C*+<#5>\frm{#6}}
\newcommand{\rstick}[1]{*!L!<-.5em,0em>=<0em>{#1}}
\newcommand{\lstick}[1]{*!R!<.5em,0em>=<0em>{#1}}
\newcommand{\Qcircuit}{\xymatrix @*=<0em>}
\newtheorem{definitionenv}{Definition}
\newtheorem{remarkenv}[definitionenv]{Remark}
\newenvironment{remark}{\begin{remarkenv}\rm}{\end{remarkenv}}
\newtheorem{mydef}{Definition}
\newtheorem{mytheorem}{Theorem}
\newtheorem{mylemma}{Lemma}
\newtheorem{mycorollary}{Corollary}
\newcommand{\bes} {\begin{subequations}}
\newcommand{\ees} {\end{subequations}}
\newcommand{\bea} {\begin{eqnarray}}
\newcommand{\eea} {\end{eqnarray}}
\newcommand{\beq}{\begin{equation}}
\newcommand{\eeq}{\end{equation}}
\newcommand{\ignore}[1]{}
\def\G{\Gamma}
\def\>{\rangle}
\def\<{\langle}
\def\Tr{\mathrm{Tr}}
\begin{document}
\title{Fault-tolerant Holonomic Quantum Computation in Surface Codes}

\author{Yi-Cong Zheng}
\email{
yicongzh@usc.edu
}
\author{Todd A. Brun}
\email{tbrun@usc.edu}
\affiliation{Ming Hsieh Department of Electrical Engineering, Center for Quantum Information Science \& Technology, University of Southern California, Los Angeles, California, 90089\\}


\begin{abstract}
We show that universal holonomic quantum computation (HQC) can be achieved fault-tolerantly by adiabatically deforming the gapped stabilizer Hamiltonian of the surface code, where quantum information is encoded in the degenerate ground space of the system Hamiltonian. We explicitly propose procedures to perform each logical operation, including logical state initialization, logical state measurement, logical CNOT, state injection and distillation,etc. In particular, adiabatic braiding of different types of holes on the surface leads to a topologically protected, non-Abelian geometric logical CNOT. Throughout the computation, quantum information is protected from both small perturbations and low weight thermal excitations by a constant energy gap, and is independent of the system size. Also the Hamiltonian terms have weight at most four during the whole process. The effect of thermal error propagation is considered during the adiabatic code deformation. With the help of active error correction, this scheme is fault-tolerant, in the sense that the computation time can be arbitrarily long for large enough lattice size. It is shown that the frequency of error correction and the physical resources needed can be greatly reduced by the constant energy gap.
\end{abstract}
\pacs{03.65.Vf, 03.67.Lx, 03.67.Pp}
\maketitle


\section{Introduction}
Quantum computers (QCs) provide the means to solve certain problems that cannot be handled classically; yet they are extremely vulnerable to errors during the computation~\cite{Nielsen:2000:CambridgeUniversityPress}. The threshold theorem indicates that if errors are all local and their rates are below a certain threshold, it is possible to implement large scale quantum computation with arbitrarily small error~\cite{Aharonov:1997:176, Gottesman:9705052, DivencenzoFTPhysRevLett.77.3260, KnillFTNature, QECbook:2013} based on active quantum error correction (QEC). However the threshold is difficult to achieve, and tremendous physical resources are required, making QCs very difficult to build in practice.

In addition to protecting QCs by active QEC, much work has been done on providing inherent robustness through the hardware design, such as
holonomic quantum computation (HQC)~\cite{Zanardi:1999:94}, adiabatic quantum computing (AQC)~\cite{Farhi:0001106,FarhiScience}, topological quantum computation (TQC)~\cite{Kitaev:2003:2,freedman2002TQFsimulation,Nayak:2008:1083}. However, these methods all have advantages and disadvantages, which are detailed below. In this paper, we will combine the good features of these architectures and avoid their weakness by proposing the scheme of fault-tolerant HQC in surface codes.

Holonomic QC uses the non-Abelian generalization of Berry phase~\cite{Wilczek:1984:2111} induced by deforming the Hamiltonian adiabatically and cyclic (closed-loop) to obtain unitary gates in the ground space. These unitary gates depend only on the geometry of the paths in the control manifold. This approach has been shown to be robust against various types of errors during the process~\cite{Solinas:2004:042316,sarandy2006abelian-nonabelian_geometric,solinas2012stability_HQC} and could in principle be done in several different systems~\cite{Duan:2001:1695, Yicongzheng2012geometric, renes2013holonomic_symmetryprotected}. Both closed-loop and open-loop HQCs can be compatible with active QEC~\cite{OgyanHolonomicQCPhysRevLett.102.070502,
OgyanHolonomicQcPhysRevA.80.022325,
Bacon-AGTPhysRevLett.103.120504,
bacon2010adiabaticcluster,Yi-Cong_PhysRevA.89.032317}, and can achieve fault-tolerant QC. However, for small quantum systems, it is difficult to maintain the degeneracy of the ground space, which is easily broken by even small perturbations, causing unavoidable phase errors.

Another method is to use adiabatic quantum computing
(AQC) by slowly changing the Hamiltonian to
a special final Hamiltonian, whose ground state encodes the
solution of the problem to be solved~\cite{Farhi:0001106,FarhiScience}. This method completely drops the standard circuit model. AQC can suppress thermal noise when the evolution is very slow~\cite{TameemNJP:adiabaticMarkovianME}, because of the the non-zero  energy gap between the ground state and the other excited states. While considerable work has been done in this direction, such as in Ref.~\cite{Jordan:2005:052322,LidarTowardFTAdqcPhysRevLett.100.160506},
a fault-tolerance theorem for AQC is still lacking. Typically, the minimum energy gap of the system scales as an inverse polynomial in the problem size~\cite{StewartEquivalenceADCPhysRevA.71.062314,
AriEquivalenceADCPhysRevLett.99.070502}, so that the temperature must be arbitrarily low to prevent thermal excitation.

A third method is the beautiful idea of topological quantum computation (TQC) first introduced by Kitaev~\cite{Kitaev:2003:2}, where excited states of system Hamiltonian behave like particles with exotic statistics, called anyons. By adiabatically braiding anyons around one another in space-time, it induces the unitary operation that depends only on the topology of the anyon world lines. Remarkably, some systems can support non-Abelian anyons, perform universal quantum computation on information encoded in the label space of the anyons~\cite{koenig2010quantumUTQC}, while being protected by an energy gap independent of the system size.
Unlike HQC, TQC is immune to the effect of small perturbations, since quantum information is stored and processed nonlocally, so that the splitting of the degenerate ground space will decrease exponentially with the system size~\cite{Bravyi:2010:093512}. However, this topological protection does not completely eliminate the need for active error correction. The energy gap can protect information only to a certain extent, and unwanted anyons could be created
if the computation time is long enough. Besides, unwanted anyons may be generated during the process of creation, fusion and imperfect adiabatic motion of anyons, and they may not be detectable.
One must measure anyon occupations to
determine when and where unwanted
anyons are created~\cite{wootton2014error_decoding_non_abelianTQC}, but this is usually difficult in most TQC models (like fractional quantum Hall systems).


On the other hand, a combination of ideas from TQC and QEC gives schemes of active error correction architecture based on topological QEC codes, especially the surface codes~\cite{Raussendorf:2007:190504,Raussendorf:2007:199} and color codes~\cite{landahl2011fault_color}, using code deformation~\cite{, Bombin:2009:095302,bombin2011clifford}. In this approach, one works directly with the quantum error correcting code used in TQC, without introducing a Hamiltonian to protect quantum information with energy gap~\cite{Dennis:2002:4452}. In the case of surfaces code, one truncates it by turning off some stabilizer generators in a region to create a hole or defect.
Rather than encoding information in the label space of anyons in TQC,  each hole can be viewed as an encoded qubit. Via a sequence of measurements, the boundary of holes can be deformed. One can then braid holes by using suitable deformations to perform logical operations between logical qubits associated with the holes. Because of its tolerance of local errors~\cite{Dennis:2002:4452}, scalable structure and high threshold ($0.57\%$)~\cite{Fowler:2009:052312, Folwer2012PhysRevA.86.032324}, surface codes have attracted a great deal of attention, and impressive experimental progress in this direction has been made recently with superconducting qubits~\cite{barends_Martinis2014superconducting}.

In this paper, we try to combine the best features of all the architectures mentioned above, and avoid their weakness. We focus on surface codes with a stabilizer Hamiltonian turned on to form a topological quantum memory~\cite{Dennis:2002:4452,Wotton:2012QMemoryReview} on a \emph{single} 2D lattice, to protect quantum information encoded in the degenerate ground space from both thermal errors and perturbations. We explicitly construct all processes needed to do universal holonomic quantum computation (HQC) based on the surface code, by adiabatically deforming this gapped Hamiltonian. By adiabatically braiding different types of holes on the surface, one performs a topologically protected non-Abelian geometric logical CNOT gate. Throughout the entire information processing procedure, including logical state initialization, logical state measurement, logical gates, state injection and distillation, quantum information is protected from local thermal excitations by a constant energy gap, and the weight of the Hamiltonian terms is bounded by 4 during the whole adiabatic code deformation process. To deal with unwanted excitations caused by errors (creation of anyons) during the adiabatic code deformation, we analyze errors propagation, and give conditions when turning off the stabilizer Hamiltonian is needed to do syndrome measurement and error correction. It can be shown that with gap protection the frequency of error correction and the physical resources needed can be greatly reduced. We conclude that the computation procedures are scalable, and that the scheme is fault tolerant.

\section{Preliminary}

\subsection{Surface Code}\label{sec:surface_code}
A good introduction to the surface code can be found in Refs.~\cite{Fowler:2009:052312, Folwer2012PhysRevA.86.032324}. In this section, we follow Ref.~\cite{Folwer2012PhysRevA.86.032324} and give a brief review to establish our notation.
Surface codes can be viewed as a special kind of stabilizer codes defined on a 2D square lattice. In this paper,
we implement the surface code on a two-dimensional $L\times L$ lattice, with qubits on the edges of the lattice, as shown in Fig.~\ref{Fig:surface_code_illustration}
for $L=8$. The stabilizer generators of surface codes are two different kinds of operators:
\beq
X_s=\mathlarger{\prod_{i\in s}\sigma_{x_i}}, \ \ \ Z_p = \mathlarger{\prod_{i\in p}\sigma_{z_i}},
\eeq
that represents vertices ($X_s$) and plaquette operators ($Z_p$) on the square lattice.
\begin{figure}[!ht]
\centering\includegraphics[width=90mm]{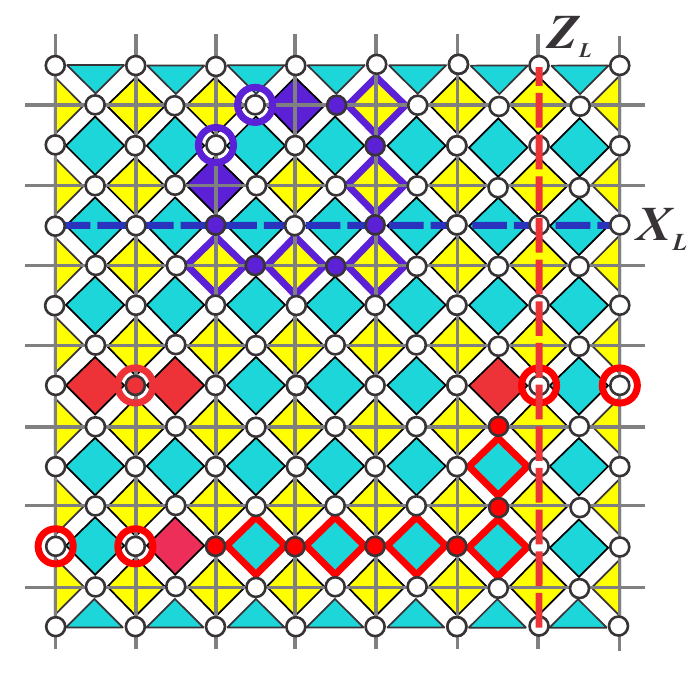}
\caption{\label{Fig:surface_code_illustration} (Color Online) A surface code based on an $8\times8$ lattice with 113 physical qubits on the edges. This code contains 1 logical qubit and has distance $d=L=8$, where $d$ is the distance of the code. The four-body (or three-body) plaquette stabilizer generator ($Z_p$) and vertex stabilizer generator ($X_s$) are indicated as cyan and yellow plaquettes, respectively inside the lattice (or on the boundary). A particular choice of logical operators $X_L$ and $Z_L$ is shown. A number of qubits are affected by $\sigma_x$ (red dots) or $\sigma_z$ (purple dots) errors, leading to excited $Z_p$ operators (or $m$ anyons) and $X_s$ operators (or $e$ anyons). Measuring these operators yields the positions of the excited vertices and plaquettes but reveals no information about the actual physical errors which cause them. A minimum-weight matching error correction procedure applies $\sigma_x$ and $\sigma_z$ to the qubits marked by the larger red and purple circles. While the $\sigma_z$ errors are annihilated properly (up to a trivial loop of multiplication of $Z_p$ operators), the red pair underneath is connected by a topologically non-trivial path across the surface. This introduces a logical error in the state to be protected.}
\end{figure}

\begin{figure}[!ht]
\centering\includegraphics[width=60mm]{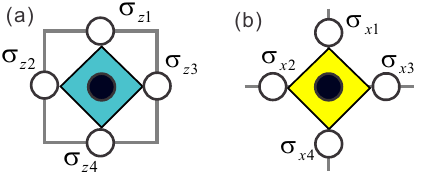}
\caption{\label{Fig:stabilizer_generator} (Color online) Four-body plaquette operator $Z_p$ (a) and vertex operator $X_s$ (b) as stabilizer generators of surface code inside the lattice. The black dot in the center of the plaquettes are syndrome qubits used to do stabilizer measurement.}
\end{figure}

Besides stabilizer generators inside the lattice, there are also ones on the boundaries for each lattice. Typically, for each surface code, there are two kinds of boundaries: $X$ boundaries and $Z$ boundaries. $X$ boundaries comprise three-body $X_s$ operators on the boundary of lattice, while $Z$ boundaries comprise three-body $Z_p$ operators, as in the boundaries shown in Fig.~\ref{Fig:surface_code_illustration}. In general, a lattice with two $X$ boundaries and two $Z$ boundaries has $2L^2-2L+1$ qubits and $2L^2-2L$ stabilizer generators, and encodes 2 degrees of freedom to form a logical qubit. The corresponding logical operators are given by $Z_L=\prod_{k\in l_z}\sigma_{z_k}$ and  $X_L=\prod_{k\in l_x}\sigma_{x_k}$ where $l_z$ and $l_x$ are chains of qubits that support $\sigma_z$ and $\sigma_x$ operators all the way across the lattice (see Fig.~\ref{Fig:surface_code_illustration} for an example).

Not shown in Fig.~\ref{Fig:surface_code_illustration} are additional \emph{syndrome qubits} for each plaquette and vertex, that enable one to check the sign of the associated stabilizer generator, as shown in Fig.~\ref{Fig:stabilizer_generator}. Inside the surface, each syndrome qubit contacts four data qubits and performs four-qubit joint measurement. On the boundaries, each syndrome qubit contacts only three data qubits and performs a three-qubit joint measurement. The corresponding quantum circuit for one stabilizer generator measurement of the $Z_p$ and $X_s$ operators are
\begin{center}$
\Qcircuit @C=1em @R=1em  {
\lstick{\ket{0}} & \targ &\targ & \targ &\targ &\measureD{M_{Z}}\\
\lstick{\text{1}}  & \ctrl{-1}&\qw & \qw &\qw &\qw\\
\lstick{\text{2}} & \qw &\ctrl{-2} &\qw &\qw & \qw \\
\lstick{\text{3}}  & \qw &\qw &\ctrl{-3} &\qw & \qw\\
\lstick{\text{4}}  & \qw &\qw & \qw& \ctrl{-4}&  \qw
}$
\end{center}
and
\begin{center}$
\Qcircuit @C=1em @R=0.6em {
\lstick{\ket{0}}&\gate{H} & \ctrl{1} &\ctrl{2} & \ctrl{3} &\ctrl{4} & \gate{H}&\measureD{M_{Z}}\\
\lstick{\text{1}} &\qw & \targ &\qw & \qw &\qw &\qw &\qw\\
\lstick{\text{2}}&\qw & \qw &\targ &\qw &\qw &\qw  &\qw \\
\lstick{\text{3}} &\qw & \qw &\qw &\targ &\qw &\qw  &\qw\\
\lstick{\text{4}} &\qw & \qw &\qw & \qw& \targ &\qw  &\qw
}$
\end{center}
respectively. The syndrome qubits are always
initialized to $|0\>$ before the measurement.

If no errors of any kind occur, the code remains in the simultaneous $+1$ eigenstate of all stabilizer generators.
We will restrict our attention to $\sigma_x$ bit-flip errors and $\sigma_z$ phase-flip errors, since very general noise can be tolerated with just the ability to correct these two types of error. If $\sigma_x$ or $\sigma_z$ errors occur, the value of the stabilizer generators anticommute with errors will be flipped to $-1$.
Fig.~\ref{Fig:surface_code_illustration} shows the effect of
$\sigma_x$ and $\sigma_z$ errors on the surface.
If we can reliably detect when stabilizer generators
become negative, it is possible for us to detect the errors
and correct them by finding paths that connect the
flipped syndromes of same kind such that the total number of path edges is minimized.  Note that $\sigma_x$ errors can also be matched to
$X$ boundaries and $\sigma_z$ errors can be matched to $Z$ boundaries
of the surfaces. An example of decoding failure is also shown
in Fig.~\ref{Fig:surface_code_illustration}.

However, the syndrome measurement processes are not necessarily perfect. It is possible for the reported measurement outcome to be wrong because of the imperfect CNOT gates and measurement errors. To get around this problem, one needs to keep track of every time the reported eigenvalue of each stabilizer generator changes. Pairs of flipped syndromes are then connected by paths in both space and time, such that total number of edges connected in space-time used to decode the errors is minimal. Polynomial time minimum weight matching algorithms exists \cite{Edmonds:1965:449}, and hence this can be done efficiently.
\begin{figure}[!ht]
\centering\includegraphics[width=70mm]{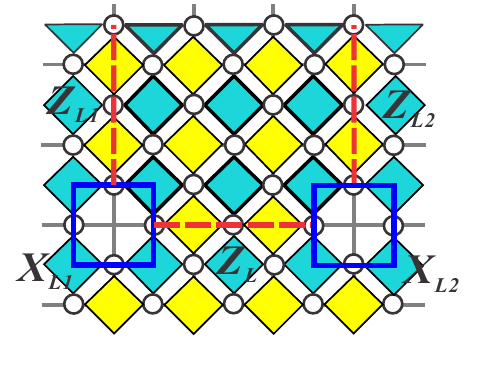}
\caption{\label{Fig:double_X} (Color online). An example of double $X$-cut qubit, with $X_s$ operators turned off. Each $X$-cut hole forms a single $X$-cut logical qubit and there are two kinds of logical operators. $Z_{L_1}$ ($Z_{L_2}$) connects left (right) hole with the $Z$ boundary on the top of lattice, while $X_{L_1}$ ($X_{L_2}$) are any loops encircling left (right) hole. For double $X$-cut qubit, there is a more convenient way to define the logical operators is to set $X_L=X_{L_1}$ and $Z_L=Z_{L_1}Z_{L_2}$. Note that $Z_L$ is equivalent to $Z_{L_1}Z_{L_2}$ up to multiplication by $Z_p$ operators inside the loop and has effect of flip phases for both qubit holes.}
\end{figure}

For the single logical qubit encoded in surface code, its logical operators $X_L$ and $Z_L$ compose chains of $\sigma_x$ and $\sigma_z$ operators crossing the entire lattice. So this way of encoding is not suitable for larger lattices. Besides, no matter how large the lattice, only a single logical qubit can be stored, since the dimension of code space is fixed. A more flexible approach of encoding is to create holes, or defects, inside the lattice to build extra boundaries on lattice. This can be done by turning off one or more of the $X_s$ and $Z_p$ stabilizer generators inside lattice to form a hole. Here, ``turn off" means that syndrome measurement is no longer performed for this operator (see example in Fig.~\ref{Fig:double_X}) in subsequent computation, so that extra degrees of freedom can be obtained to form a logical qubit. We call the logical qubit obtained this way an $X$-cut ($Z$-cut) \emph{single} logical qubit when an $X_s$ ($Z_p$) stabilizer generator is turned off. For the case of Fig.~\ref{Fig:double_X}, any chain of $\sigma_z$ operators connecting this hole to an $X$-boundary on the top of lattice and any chain of $\sigma_x$ operators encircling the $X$-cut hole can be used to manipulate these extra degrees of freedom. We call any such $\sigma_z$ chain $Z_L$, and any $\sigma_x$ ring $X_L$. If the eigenvalue of $X_s$ is $+1\ (-1)$ before it is turned off, the logical qubit is initialized to the $|+\>\ (|-\>)$ state of $X$-cut single logical qubit, we represent it as $|+^X_{SL}\>\ (|-^X_{SL}\>)$.

One can go further by making logical operators for qubit not rely on operator chains that reach the boundary of lattice. In particular, we can use a pair of $X\ (Z)$-cut holes to form a single logical qubits and manipulate them in a correlated way. This pair of holes are called \emph{double} $X \ (Z)$-cut logical qubit. Fig.~\ref{Fig:double_X} shows an example of a double $X$-cut qubit. Four additional degrees of freedom will be added to the lattice when two $X_s$ are turned off, which can be represented as:
\beq
\begin{split}
&|+^X_{SL}\>_1|+^X_{SL}\>_2, \ \ \ \ |-^X_{SL}\>_1|-^X_{SL}\>_2, \\
&|-^X_{SL}\>_1|+^X_{SL}\>_2, \ \ \ \ |+^X_{SL}\>_1|-^X_{SL}\>_2,
\end{split}
\eeq
where 1 denotes the single $X$-cut qubit on the left and 2 denotes the one on the right. Each single $X$-cut qubit can be manipulated by defining $X_{L_1}$ and $Z_{L_1}$ for the left $X$-cut hole and $X_{L_2}$ and $Z_{L_2}$ for the right $X$-cut hole. The effect of each logical operator pair is:
\beq\label{eq:state_transform}
X_{L_j}|\pm^X_{SL}\>_j=\pm|\pm^X_{SL}\>_j, \ \ \ \ Z_{L_j}|\pm^X_{SL}\>_j=|\mp^X_{SL}\>_j.
\eeq
Manipulating the two qubit holes of a double cut logical qubit in a correlated way can greatly simply the forms of logical operators and increase the number of logical qubits encoded on a single lattice. We can define the $|+\>$ and $|-\>$ states for double $X$-cut logical qubits as:
\beq
|+^X_{DL}\>=|+^X_{SL}\>_1|+^X_{SL}\>_2, \ \ \ |-^X_{DL}\>=|-^X_{SL}\>_1|-^X_{SL}\>_2.
\eeq
A chain of $\sigma_z$ operators connecting the two holes is then used as the definition of the $Z_L$ operator for double $X$-cut qubit, as shown in Fig.~\ref{Fig:double_X}. The $X_L$ operator can be defined as any ring of $\sigma_x$ operators around either hole, as can be seen from Eq.~(\ref{eq:state_transform}). We can then find the $|0\>$ state for the double $X$-cut qubit:
\beq
\begin{split}
|0^X_{DL}\>=&\frac{1}{\sqrt{2}}\big(|+^X_{SL}\>_1|+^X_{SL}\>_2 +|-^X_{SL}\>_1|-^X_{SL}\>_2 \big),\\
|1^X_{DL}\>=&\frac{1}{\sqrt{2}}\big(|+^X_{SL}\>_1|+^X_{SL}\>_2 -|-^X_{SL}\>_1|-^X_{SL}\>_2 \big).
\end{split}
\eeq
Similarly, the $|0\>$ and $|1\>$ states of double $Z$-cut qubits can be defined as:
\beq
|0^Z_{DL}\>=|0^Z_{SL}\>_1|0^Z_{SL}\>_2, \ \ \ |1^Z_{DL}\>=|1^Z_{SL}\>_1|1^Z_{SL}\>_2
\eeq
and the corresponding $|+\>$ and $|-\>$ states of double $Z$-cut qubits are
\beq
\begin{split}
|+^Z_{DL}\>&=\frac{1}{\sqrt{2}}\big(|0^Z_{SL}\>_1|0^Z_{SL}\>_2 +|1^Z_{SL}\>_1|1^Z_{SL}\>_2 \big),\\
|-^Z_{DL}\>&=\frac{1}{\sqrt{2}}\big(|0^Z_{SL}\>_1|0^Z_{SL}\>_2 -|1^Z_{SL}\>_1|1^Z_{SL}\>_2 \big).
\end{split}
\eeq
Note that for the logical qubits described here, the distance of the codes is bounded by 4, no matter how far two holes are separated, because the perimeter of hole created by turning off one stabilizer generator is limited by 4 physical qubits. The error correction ability can be significantly improved if we increase both the size and spacing of the two holes, as this will increase the number of physical qubits involved in $Z_L$ and $X_L$. The details of making larger holes for logical qubits will be discussed in Sec.~\ref{sec:adiabatic_qbit_enlarge_move}.


%
%

%
%

\subsection{Holonomic Quantum Computation}

Consider a Hamiltonian family $\{H_\lambda\}$ on an $N-$dimensional Hilbert space. The point $\lambda$, parametrizing the Hamiltonian, is an element of a manifold $\mathcal {M}$ called the control manifold, and the local coordinates of $\lambda$ are denoted by $\lambda^i\ (1\leq i \leq \textrm{dim}\mathcal{M})$. Assume there are only a fixed number of eigenvalues $\{\varepsilon_k(\lambda)\}$ and suppose the $n$th eigenvalue $\varepsilon_n(\lambda)$ is $K_n$-fold degenerate for any $\lambda$. The degenerate subspace at $\lambda$ is denoted by $\mathcal{H}_n(\lambda)$. The orthonormal basis vectors of $\mathcal{H}_n(\lambda)$ are denoted by $\{|\phi_\alpha^n;\lambda\>\}$, satisfying
\beq
H_\lambda|\phi_\alpha^n;\lambda\>=\varepsilon_n(\lambda)|\phi_\alpha^n;\lambda\>,
\eeq
and
\beq
\<\phi_\alpha^n;\lambda|\phi_\beta^m;\lambda\>=\delta_{nm}\delta_{\alpha\beta}.
\eeq
Assume the parameter $\lambda$ is changed adiabatically, which means that
\beq\label{eq:adiabatic_approx}
\left(\varepsilon_n(\lambda(t))-\varepsilon_{n^\prime}(\lambda(t))\right)T\gg1
\eeq
is satisfied for $n\neq n^\prime$ during $0\leq t\leq T$). Suppose the initial state at $t=0$ is an eigenstate $|\psi^n(0)\>=|\phi_\alpha^n;\lambda(0)\>$. The Schr\"{o}dinger equation is
\beq \label{eq:shrodinger}
i\frac{\text{d}}{\text{d}t}|\psi^n(t)\>=H(\lambda(t))|\psi^n(t)\>,
\eeq
whose solution will have the form
\beq \label{eq:solution}
|\psi^n(t)\>=\sum_{\beta=1}^{K_n}|\phi_\beta^n;\lambda(t)\>U_{\beta\alpha}(t).
\eeq
where we have used the adiabatic approximation from Eq.~(\ref{eq:adiabatic_approx}). Substituting Eq.~(\ref{eq:solution}) into Eq.~(\ref{eq:shrodinger}), one finds that $U_{\beta\alpha}$ satisfies
\beq
\begin{split}
\dot{U}_{\beta\alpha}(t)=&-i\varepsilon_n(\lambda(t))U_{\beta\alpha}(t)\\
&-\sum_{\mu}\<\phi_\beta^n;\lambda(t)|\frac{\text{d}}{\text{d}t}|\phi_\mu^n;\lambda(t)\>U_{\mu\alpha}(t).
\end{split}
\eeq
The solution can be expressed as
\beq
\begin{split}
U(t)=&\exp\left(-i\int_0^t\varepsilon_n(\lambda(s))\text{d}s\right)\times\\
&\mathcal{T}\exp\left(-\int_{0}^{t} A^n(\tau)\text{d}\tau\right),
\end{split}
\eeq
where $\mathcal {T}$ is the time-ordering operator and
\beq\label{eq:WZ_connection}
A^n_{\beta\alpha}(t)=\<\phi^n_{\beta};\lambda(t)|\frac{\text{d}}{\text{d}t}|\phi^n_\alpha;\lambda(t)\>
\eeq
is the Wilczek-Zee (WZ) connection~\cite{Wilczek:1984:2111}. Define the connection
\beq
\mathcal{A}^n_{i,\beta\alpha}(t)=\<\phi^n_{\beta};\lambda(t)|\frac{\partial}{\partial \lambda^i}|\phi^n_\alpha;\lambda(t)\>,
\eeq
through which $U(t)$ can be expressed as
\beq\label{eq:general_adiabatic_evolution}
\begin{split}
U(t)=&\exp\left(-i\int_0^t\varepsilon_n(\lambda(s))\text{d}s\right)\times\\
&\mathcal{P}\exp\left(-\int_{\lambda(0)}^{\lambda(t)}\sum_i\mathcal{A}^n_i\text{d}\lambda^i\right),
\end{split}
\eeq
where $\mathcal{P}$ is the path-ordering operator. Eq.~(\ref{eq:general_adiabatic_evolution}) is a general description of both open loop and closed loop adiabatic state evolution. Both are useful for our scheme as will be shown in Sec.~\ref{sec:sketch_scheme} and Sec.~\ref{sec:HQC_surface}. In particular, suppose the path $\lambda(t)$ is a loop $\lambda$ in $\mathcal{M}$ such that $\lambda(0)=\lambda(T)=\lambda_0$ (closed loop). Then after transporting through $\lambda$, states are transformed to
\beq
|\psi^n(T)\>=\sum_{\beta=1}^{K_n}|\psi_\beta^n(0)\>U_{\beta\alpha}(T).
\eeq
The unitary matrix
\beq
\Gamma_\lambda
=\mathcal{P}\exp\left(-\oint_\lambda\sum_i\mathcal{A}^n_i\text{d}\lambda^i\right)
\eeq
is called the holonomy associated with the loop $\lambda(t)$. $\Gamma_\lambda$ is a purely geometric object, and is independent of the parametrization of the path. Note that for a given $\Gamma_\lambda$, there exist infinitely many paths $\lambda$. One of the main objects of the paper to find the proper path in $\mathcal{M}$ that will give us the desired state transformation in the code space of the surface code under adiabatic transformation of the stabilizer Hamiltonian. A geometric formulation of the holonomic problem, which gives an alternative description as shown in Refs.~\cite{Tanimura2004199, tanimura:022101}, is also given in Appendix~\ref{sec:geometric_holonomic}, which is useful in improving the results of the next section.


\section{Sketch of the Scheme}\label{sec:sketch_scheme}
In this scheme, we always regard all physical qubits on the lattice as a single big stabilizer code. We assume that the qubits independently and weakly interact with a thermal bath in the Markovian approximation. The corresponding thermal errors are local and low-weight during a certain period of evolution. Those low-weight thermal excitations will cause transitions from the ground space to excited spaces. Their rate should decrease as $\delta_{\text{thermal}}\sim \exp\left(-c\beta\Delta_{\min}\right)$, where $\Delta_{\text{min}}$ is the minimum spectral gap of the system, $\beta$ is the inverse of temperature, and $c$ is a constant depending on the coupling strength between system and thermal bath~\cite{alicki2009thermalizationKitaev}. This is true even when the Hamiltonian is not static and changes slowly, so long as the system is weakly coupled to the thermal bath ~\cite{TameemNJP:adiabaticMarkovianME}. The goal is to do the whole quantum computation fault-tolerantly, while the code space is protected by an energy gap of the stabilizer Hamiltonian that exponentially suppresses errors at low temperature throughout the information processing procedure.

To analyze the error performance of the architecture, we must first define a fault-tolerant procedure:
\begin{mydef}\label{def:fault-tolerant}
A procedure is \emph{fault-tolerant} if it has the property that if only one component (or more generally, a small number of components) in the procedure fails, the errors produced by this failure are not transformed into an uncorrectable error by the procedure, before error correction is applied.
\end{mydef}
\noindent With this definition, the fault-tolerance of a procedure can be regarded as a property of the procedure itself regardless of the error model of the system.
Before we go deeper, we must impose some requirements to follow in the rest of the paper:
\begin{enumerate}
  \item Procedures like logical state preparation, logical state measurement, encoded gate operations, state injection and state distillation should be done when the system Hamiltonian is ``turned on", so that a constant energy gap protects the information and the error rate for each procedure is low.
  \item All procedures should be done fault-tolerantly according to Def.~\ref{def:fault-tolerant}, whether adiabatic or not.
  \item Syndrome measurements and error correction should be done before uncorrectable errors happen.
  \item Syndrome measurements and error correction should be done as seldom as possible, since they are in general not compatible with the system Hamiltonian and we must turn off the Hamiltonian before doing them. Besides, the syndrome measurement procedure itself is quite expensive. The frequency of error correction is expected to be low if all procedures are gap-protected.
  \item A threshold theorem should exist, in the sense that if the error rate is below the threshold, the computation can be made arbitrarily long by suitably increasing the lattice size.
  \item It is possible to measure $\sigma_x$ and $\sigma_z$ of single physical qubits in certain circumstances even when the Hamiltonian is turned on.
  \item Maximum weight of the Hamiltonian terms should be low, and the Hamiltonian should be geometrically local.
  \item All procedures should be done in a single lattice.
\end{enumerate}

Requirements $1-4$ are crucial to our main objective of reducing the physical resources and 5 guarantees that arbitrarily large-scale computation can be done.
Requirement 6 is physically reasonable, and we will see its importance in Sec.~\ref{sec:HQC_surface}. Requirement 7 comes from the fact that in real experiments, high weight and nonlocal Hamiltonians are difficult or impossible. Requirement 8 is technical rather than fundamental, since it simplifies the the computation architecture.

\subsection{Adiabatic processes}
In most cases, adiabatic processes can be used to simultaneously fulfill most of the requirements above. For the purpose of encoding and measuring logical qubits, we will show that these can be done by open-loop adiabatic processes (for logical measurement, we also need qubit measurement), while the logical CNOT can be done by a closed-loop adiabatic processes to get a holonomy on the code space. Both such processes can be described by Eq.~(\ref{eq:general_adiabatic_evolution}). In this and the following sections, we will focus on a special kind of adiabatic evolution that turns out to be particularly useful. In addition, we will discuss how it can be used to analyze propagation of potential errors and parallelism of the processes.

Assume the total number of qubits on the lattice is $n=2L^2-2L+1$, so the dimension of the Hilbert space is $N=2^n$. The number of logical qubits in our scheme may change over time, since we can create defects on the lattice to create logical qubits.  However, we assume that when an adiabatic process is applied, the dimension of code space is fixed. This can be realized by isospectral deformation of the Hamiltonian.
Denote the number of logical qubits encoded in the ground space by $k$. Assume that at time $t_0$, the initial Hamiltonian can be written as
\beq
H(t_0) = -\sum_{j=1}^{n-k} J S_j,
\eeq
where the $\{S_j\}$ are a set of stabilizer generators of the surface code at time $t_0$ and that $\<S_j\>$ forms the stabilizer group $\mathcal{S}$. Consider the following way to adiabatically deform the Hamiltonian isospectrally:
\beq\label{eq:Hamiltonian_change}
\begin{split}
H(t)=&-\sum_{j=1}^{n-k}J S_j(t)\\
=&-\sum_{j=1}^{n-k}J U(t,t_0)S_j(t_0)U^\dag(t,t_0),
\end{split}
\eeq
%
with $S_j(t)=U(t,t_0)S_jU^\dag(t,t_0)$ and $[S_i(t), S_j(t)]=0$ for all $i$, $j$. The $\{S_j(t)\}$ can be viewed as a set of generators of
an Abelian group, like the stabilizer group.
The Hamiltonian also has a spectral decomposition:
\beq
H(t) = \sum_{\textbf{s}}\varepsilon_{\textbf{s}}P_{\textbf{s}}(t).
\eeq
Here, the $\{P_{\textbf{s}}(t)\}$ are projectors onto the simultaneous eigenspaces of all the $S_j(t)$, with eigenvalues:
\beq
\varepsilon_{\textbf{s}} = -J\sum_j s_j,
\eeq
where the labels $s_j=\pm 1$ form a vector:
\beq
\textbf{s}=\{s_1,s_2,\ldots s_{n-k}\}.
\eeq
The ground space evolves with the system Hamiltonian. This defines a time-dependent code space $\mathcal{C}_t$. Let $P_0(t)=U(t,t_0)P_0(t_0)U^\dag(t,t_0)$ be the projector onto the ground space of $H(t)$, such that $s_j=1$ for all $j$. We emphasize that $U(t+\tau,t)$ should be chosen such that
\beq\label{eq:evolution_condition}
\left[\frac{\partial}{\partial \tau} U(t+\tau,t)|_{\tau=0}, \ P_{\textbf{s}}(t)\right]\neq 0\ \ \ \ \text{for all \textbf{s}},
\eeq
for any time $t$, so that the deformation procedure is nontrivial for all eigenspaces. In other word, $U(t+\tau,t)$ should not belong to the isotropy group of $P_{\textbf{s}}(t)$ for small values of $\tau$.

The adiabatic condition must hold for each eigenspace $P_\textbf{s}$, so that each eigenspace undergoes nontrivial evolution under the adiabatic process, in case an error excites the system to $P_\textbf{s}$ during the process. The standard adiabatic condition~\cite{Messiah:1965:North} for any eigenspace $\{P_{\textbf{s}_\alpha}\}$ can be reformulated as:
\beq\label{eq:adiabatic_condition_general}
\frac{\parallel P_{\textbf{s}_\alpha}(t)\frac{\partial}{\partial t}H(t)P_{\textbf{s}_\beta}(t)\parallel_1}{K\left(\varepsilon_{\textbf{s}_\alpha}(t)
-\varepsilon_{\textbf{s}_\beta}(t)\right)^2}\approx  0, \ \text{for any}\ \alpha\neq\beta.
\eeq
Here, $K$ is the degeneracy of each $P_{\textbf{s}}$. This must hold for all $t\in[t_0,t_p]$, where $\parallel\cdot\parallel_1$ is the trace norm $\left(\parallel A\parallel_1=\Tr\sqrt{A^\dag A}\right)$. It is very likely that for a Hamiltonian of the form Eq.~(\ref{eq:Hamiltonian_change}), several $P_{\textbf{s}}(t)'$s will share the same eigenenergy, so that the adiabatic condition cannot be directly satisfied. Fortunately, for the surface code, we will show later that there is a natural way to cope with this problem, so that each $P_\textbf{s}(t)$ can satisfy the adiabatic condition during the adiabatic code deformation.

As shown in Ref.~\cite{Yi-Cong_PhysRevA.89.032317}, a closed loop adiabatic logical gate operation can be built from a fault-tolerant circuit of the corresponding stabilizer code. However, for the surface code, we in general don't know the exact fault-tolerant circuit for encoded gate operations. Moreover, we wish to do encoding and logical state measurement with gap protection, so the result in Ref.~\cite{Yi-Cong_PhysRevA.89.032317} cannot be applied here directly. Instead, in this paper, we consider a special kind of quantum circuit $\mathcal{G}$ composed of a sequence of gate operations $\{g_1, g_2 \ldots g_p \}$ giving the unitary operation $\Omega_p=\prod_{l=1}^p g_l$. Here, $g_l = \exp\left(i\frac{\pi}{4}Q_l\right)$ for some Hermitian operator $Q_q\in G_n$, where $G_n$ is the Pauli group acting on $n$ qubits. For simplicity, when we talk about a ``circuit" in the rest of paper, we means the circuit of this type. We divide the information processing time $[t_0, t_p]$ into $p$ small steps and represent the $q$th time segment as $[t_{q-1},t_{q}]$. Now, set the unitary operator
\beq
U_q(t,t_{q-1})=\exp\big(i f_q(t) Q_q\big),
\eeq
for $t\in[t_{q-1},t_q]$ and let
$f_q:[t_{q-1}, t_q]\rightarrow[0,\pi/4]$ be a monotonic smooth function with boundary conditions $f_q(t_{q-1})=0$ and $f_q(t_q)=\pi/4$.
For each time segment $[t_{q-1},t_q]$, we adiabatically deform the Hamiltonian:
\beq\label{eq:Hamiltonian_change_each_step}
H(t,t_{q-1}) = U_q(t,t_{q-1})H(t_{q-1})U_q^\dag(t,t_{q-1}), \ \ \ t\in[t_{q-1},t_q]
\eeq
and assume $[Q_q,H(t_{q-1})]\neq 0$ so that Eq.~(\ref{eq:evolution_condition}) is satisfied. A state in the ground space will evolve as described by the following lemma:
\begin{mylemma}\label{lemma:state_evolution}
(State Evolution) Consider a circuit composed of gates $\{g_q\}$ and an initial state $|\psi(t_0)\>\in \mathcal{C}(t_0)$, with $H(t_0)=-\sum_j J S_j$. We apply a sequence of Hamiltonian deformations as in Eq.~(\ref{eq:Hamiltonian_change_each_step}), for $1\leq q\leq p$. Then,  under the adiabatic approximation, the final state will be:
\beq
\begin{split}
|\psi(t_p)\>&=e^{-i\varepsilon_0(t_p-t_0)}\left(\prod_{l=1}^{p}g_{l}\right)|\psi(t_0)\>\\ &=e^{-i\varepsilon_0(t_p-t_0)}\Omega_p |\psi(t_0)\>.
\end{split}
\eeq
\end{mylemma}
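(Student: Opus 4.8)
The plan is to prove the lemma by induction on the number of time segments, establishing the single-segment claim first and then composing. For a single segment $[t_{q-1},t_q]$, the Hamiltonian is $H(t,t_{q-1})=U_q(t,t_{q-1})H(t_{q-1})U_q^\dag(t,t_{q-1})$ with $U_q(t,t_{q-1})=\exp(if_q(t)Q_q)$, which is a manifestly isospectral deformation: the instantaneous eigenprojectors are $P_{\mathbf{s}}(t)=U_q(t,t_{q-1})P_{\mathbf{s}}(t_{q-1})U_q^\dag(t,t_{q-1})$, and the eigenvalues $\varepsilon_{\mathbf{s}}$ are constant in $t$. In particular the spectral gaps are constant, so the adiabatic condition Eq.~(\ref{eq:adiabatic_condition_general}) can be met by taking $f_q$ to vary slowly enough, and Eq.~(\ref{eq:evolution_condition}) holds since $[Q_q,H(t_{q-1})]\neq0$ guarantees $[\partial_\tau U_q(t_{q-1}+\tau,t_{q-1})|_{\tau=0},P_{\mathbf{s}}(t_{q-1})]\neq0$. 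First I would invoke the general adiabatic evolution formula Eq.~(\ref{eq:general_adiabatic_evolution}): since the initial state lies in the ground space, the dynamical phase contributes $\exp(-i\varepsilon_0(t_q-t_{q-1}))$ and the remaining factor is the Wilczek--Zee holonomy $\mathcal{P}\exp(-\int A^0)$ built from the connection $A^0_{\beta\alpha}=\langle\phi^0_\beta;\lambda(t)|\tfrac{d}{dt}|\phi^0_\alpha;\lambda(t)\rangle$.

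The key computation is that for this particular unitary-conjugation deformation the holonomy is exactly $U_q(t_q,t_{q-1})$ restricted to the ground space. Choosing the moving frame $|\phi^0_\alpha;t\rangle=U_q(t,t_{q-1})|\phi^0_\alpha;t_{q-1}\rangle$, one gets
\beq
A^0_{\beta\alpha}(t)=\langle\phi^0_\beta;t_{q-1}|U_q^\dag\dot U_q|\phi^0_\alpha;t_{q-1}\rangle
= i\dot f_q(t)\,\langle\phi^0_\beta;t_{q-1}|Q_q|\phi^0_\alpha;t_{q-1}\rangle.
\eeq
Because this connection is proportional to a single fixed Hermitian matrix (the restriction of $Q_q$ to $\mathcal{C}(t_{q-1})$) for all $t$, the path ordering is trivial and
\beq
\mathcal{P}\exp\!\left(-\int_{t_{q-1}}^{t_q}A^0(t)\,dt\right)=\exp\!\left(-i\,\frac{\pi}{4}\,Q_q\big|_{\mathcal{C}(t_{q-1})}\right)=g_q^{-1}\big|_{\mathcal{C}(t_{q-1})},
\eeq
using $f_q(t_{q-1})=0$, $f_q(t_q)=\pi/4$. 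One must be a little careful with the sign/orientation convention relating the moving-frame holonomy to the physical state transformation $|\psi(t_q)\rangle=\sum_\beta|\phi^0_\beta;t_q\rangle U_{\beta\alpha}(t_q)$; tracking this through shows the net action on the physical state is multiplication by $g_q=\exp(i\tfrac{\pi}{4}Q_q)$ together with the frame vectors themselves moving by $U_q$, and these combine so that $|\psi(t_q)\rangle=e^{-i\varepsilon_0(t_q-t_{q-1})}g_q|\psi(t_{q-1})\rangle$. Equivalently and more cleanly, one can note that in the rotating frame defined by $U_q$ the effective Hamiltonian is the static $H(t_{q-1})$ plus the term $-\dot f_q Q_q$, and in the adiabatic limit the ground-space evolution in the lab frame is just the frame rotation $U_q(t_q,t_{q-1})$ up to the dynamical phase.

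For the induction step, note that the endpoint Hamiltonian of segment $q$ is $H(t_q)=U_q(t_q,t_{q-1})H(t_{q-1})U_q^\dag(t_q,t_{q-1})$, which is of the same stabilizer form $-\sum_j J S_j(t_q)$ with $S_j(t_q)=g_q S_j(t_{q-1}) g_q^\dag$ (since $U_q(t_q,t_{q-1})=g_q$ up to an irrelevant global phase on the full Hilbert space), so the hypothesis of the single-segment argument is reinstated at the start of segment $q+1$ with code space $\mathcal{C}(t_q)=g_q\mathcal{C}(t_{q-1})$. Composing the $p$ segments, the dynamical phases add to $e^{-i\varepsilon_0(t_p-t_0)}$ and the holonomies compose in order to $\prod_{l=1}^p g_l=\Omega_p$, giving $|\psi(t_p)\rangle=e^{-i\varepsilon_0(t_p-t_0)}\Omega_p|\psi(t_0)\rangle$. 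The main obstacle I anticipate is purely bookkeeping rather than conceptual: getting the conventions consistent between (i) the sign in the WZ connection and the direction of path ordering in Eq.~(\ref{eq:general_adiabatic_evolution}), (ii) the distinction between how the frame vectors move and how the component vector $U_{\beta\alpha}$ evolves, and (iii) the fact that $U_q(t_q,t_{q-1})$ equals $g_q$ only up to a global phase which must be shown not to leak into the code-space action — so that the final answer is exactly $\Omega_p$ and not $\Omega_p^{-1}$ or $\Omega_p$ times a spurious phase. I would handle this by doing the one-segment case with $Q_q$ a single Pauli (so everything is $2\times2$ block-diagonalizable) as a sanity check before writing the general argument.
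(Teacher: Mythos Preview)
There is a genuine gap in your single-segment computation. You correctly identify the Wilczek--Zee connection in the moving frame as $A^0_{\beta\alpha}(t)=i\dot f_q(t)\,\langle\phi^0_\beta;t_{q-1}|Q_q|\phi^0_\alpha;t_{q-1}\rangle$, but then you mishandle the matrix $Q_q\big|_{\mathcal{C}(t_{q-1})}$. The crucial algebraic fact---and the heart of the paper's argument---is that this matrix \emph{vanishes identically}. Since $Q_q\in G_n$ and (by Lemma~\ref{lemma:operator}) each $S_j(t_{q-1})\in G_n$, the hypothesis $[Q_q,H(t_{q-1})]\neq0$ forces $Q_q$ to anticommute with at least one stabilizer generator $S_j(t_{q-1})$. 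Hence $Q_q$ maps the simultaneous $+1$ eigenspace of the $S_j$'s into the $S_j=-1$ sector, which is orthogonal to the ground space, so $P_0(t_{q-1})\,Q_q\,P_0(t_{q-1})=0$. Consequently $A^0\equiv0$, the path-ordered exponential is the identity (not $g_q^{-1}\big|_{\mathcal{C}}$), and the moving frame $U_q(t,t_{q-1})V_0(t_{q-1})$ is already the horizontal lift. The net action on the state is therefore purely the frame rotation $g_q$ together with the dynamical phase.

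Your displayed equation $\exp\!\big(-i\tfrac{\pi}{4}Q_q\big|_{\mathcal{C}}\big)=g_q^{-1}\big|_{\mathcal{C}}$ fails on both sides: the left side is the identity because $Q_q\big|_{\mathcal{C}}=0$, while $P_0\,g_q^{-1}P_0=\cos(\pi/4)\,P_0$ is not even unitary on $\mathcal{C}$ (since $g_q^{-1}$ does not preserve $\mathcal{C}$, restriction does not commute with exponentiation). The subsequent claim that a $g_q^{-1}$ from the holonomy ``combines'' with the frame motion to produce $g_q$ is thus built on a false premise; you land on the correct final formula, but not by a valid route. Your rotating-frame alternative is closer to correct, yet it too needs exactly the observation $P_0Q_qP_0=0$ to justify that the extra generator $-\dot f_q\,Q_q$ has no intra-ground-space component and contributes only to the suppressed non-adiabatic couplings. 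Once you insert that one line, your induction/composition argument goes through and matches the paper's proof.
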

\begin{proof}
See Appendix~\ref{sec:proof_lemma_state_evolution}.
\end{proof}

In the case of a many-body system like the surface code, it is difficult to follow the change of the state in code space since it is hard to represent the state. One normally uses the stabilizer formalism (Heisenberg picture) to track the change of the logical $Z_L$ and $X_L$ operators during the process. The following theorem is a direct consequence of Lemma~\ref{lemma:state_evolution}:
\begin{mytheorem}\label{them:them1}
Suppose the initial state $|\psi(t_0)\>$ is in the code space of a stabilizer code with generators $\{S_j\}$ and logical operators $\{X^i_L, Z^i_L\}$, and that $H(t_0)=-\sum_j J S_j$.  Under the adiabatic Hamiltonian deformation described in Eq.~(\ref{eq:Hamiltonian_change_each_step}) for $1\leq q \leq p$, the logical operators will map to $X^i_L\rightarrow \Omega_pX^i_L\Omega^\dag_p, Z^i_L\rightarrow\Omega_pZ^i_L\Omega^\dag_p$, and the system Hamiltonian will become $H(t_p)=-\sum_j J S_j^\prime=-\sum_jJ\Omega_p S_j\Omega^\dag_p$.
\end{mytheorem}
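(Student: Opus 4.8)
The plan is to obtain Theorem~\ref{them:them1} as an essentially bookkeeping corollary of Lemma~\ref{lemma:state_evolution} together with the explicit form of the deformation in Eq.~(\ref{eq:Hamiltonian_change_each_step}). First I would handle the Hamiltonian. At the endpoint of the $q$th segment the driving unitary is $U_q(t_q,t_{q-1}) = \exp\!\big(i f_q(t_q) Q_q\big) = \exp\!\big(i\tfrac{\pi}{4}Q_q\big) = g_q$, so composing the single-step conjugations gives $H(t_p) = g_p H(t_{p-1}) g_p^\dag = \cdots = \Omega_p H(t_0)\Omega_p^\dag = -\sum_j J\,\Omega_p S_j\Omega_p^\dag$. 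Since each $g_l = \exp(i\tfrac{\pi}{4}Q_l)$ with $Q_l\in G_n$ is a Clifford operation, conjugation by $\Omega_p$ is an automorphism of $G_n$; hence $S_j' := \Omega_p S_j\Omega_p^\dag$ are again Hermitian Pauli operators generating an Abelian group with $-\ident\notin\langle S_j'\rangle$, i.e. the code at $t_p$ is a genuine stabilizer code with generators $\{S_j'\}$, as asserted.

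Next I would identify the code space. Applying Lemma~\ref{lemma:state_evolution} to an arbitrary $|\psi(t_0)\>\in\mathcal{C}(t_0)$, the adiabatic evolution sends it to $e^{-i\varepsilon_0(t_p-t_0)}\Omega_p|\psi(t_0)\>$; since this holds for every vector of $\mathcal{C}(t_0)$ and $\Omega_p$ is unitary, $\mathcal{C}(t_p) = \Omega_p\,\mathcal{C}(t_0)$. This is consistent with, but stronger than, the static fact that the simultaneous $+1$ eigenspace of $\{S_j'\}$ equals $\Omega_p$ times that of $\{S_j\}$: Lemma~\ref{lemma:state_evolution} further guarantees that the dynamics actually implement this isometry on code states, up to the irrelevant global phase.

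Then I would check that $\Omega_p X_L^i\Omega_p^\dag$ and $\Omega_p Z_L^i\Omega_p^\dag$ are precisely the logical operators of the $t_p$ code corresponding to $X_L^i$ and $Z_L^i$ at $t_0$. Because conjugation by $\Omega_p$ preserves all commutation relations, $\{S_j',\,\Omega_p X_L^i\Omega_p^\dag,\,\Omega_p Z_L^i\Omega_p^\dag\}$ obey exactly the algebra of $\{S_j,X_L^i,Z_L^i\}$: the primed logical operators commute with every $S_j'$, reproduce the same (anti)commutation pattern among themselves, and lie outside $\langle S_j'\rangle$. Moreover their action on $\mathcal{C}(t_p)$ matches the original action on $\mathcal{C}(t_0)$: for $|\psi\>\in\mathcal{C}(t_0)$, $(\Omega_p X_L^i\Omega_p^\dag)\big(\Omega_p|\psi\>\big) = \Omega_p\big(X_L^i|\psi\>\big)$, so on $\Omega_p|\psi\>\in\mathcal{C}(t_p)$ the operator $\Omega_p X_L^i\Omega_p^\dag$ does exactly what $X_L^i$ does on $|\psi\>$, and likewise for $Z$. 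This yields the claimed maps $X_L^i\mapsto\Omega_p X_L^i\Omega_p^\dag$ and $Z_L^i\mapsto\Omega_p Z_L^i\Omega_p^\dag$.

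Essentially all of the analytic difficulty is already absorbed into Lemma~\ref{lemma:state_evolution} (the adiabatic approximation and the accumulation of the $g_q$ factors segment by segment). The only point needing real care in the present step is conceptual rather than computational: one must argue that the conjugated Paulis are the \emph{logical} operators of the deformed code, not merely some operators commuting with $\{S_j'\}$, which is exactly why the code-space identification $\mathcal{C}(t_p)=\Omega_p\,\mathcal{C}(t_0)$ is the linchpin of the argument.
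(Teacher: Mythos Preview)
Your proposal is correct and matches the paper's approach: the paper simply states that Theorem~\ref{them:them1} is a ``direct consequence of Lemma~\ref{lemma:state_evolution}'' and gives no separate proof, so what you have written is precisely the bookkeeping the paper leaves implicit. Your observation that each $g_l$ normalizes $G_n$ is the content of the paper's auxiliary Lemma~\ref{lemma:operator}, and your identification $\mathcal{C}(t_p)=\Omega_p\,\mathcal{C}(t_0)$ via Lemma~\ref{lemma:state_evolution} is exactly the intended mechanism.
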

If the process is cyclic for the ground space, which means $\Omega_p P_0(t_0)\Omega_p^\dag=P_0(t_0)$, then $\Omega_p$ can be viewed as an encoded gate operation, and we have following conclusion:
\begin{mycorollary}
If $\Omega_p\in N(\mathcal{S})\backslash\mathcal{S}$, where $N(\mathcal{S})$ is the normalizer of $\mathcal{S}$ in $\text{U}(N)$, then $\Omega_p$ is a closed-loop holonomic operation under the adiabatic process.
\end{mycorollary}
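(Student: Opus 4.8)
The plan is to obtain the corollary as an immediate consequence of Theorem~\ref{them:them1} and Lemma~\ref{lemma:state_evolution}, the only real content being a short group-theoretic observation about $N(\mathcal{S})$. First I would apply Theorem~\ref{them:them1} to the circuit with unitary $\Omega_p$: after the whole sequence of deformations in Eq.~(\ref{eq:Hamiltonian_change_each_step}), the Hamiltonian is $H(t_p)=-\sum_j J\,\Omega_p S_j\Omega_p^\dag$ and every logical operator is conjugated by $\Omega_p$. Since $\Omega_p\in N(\mathcal{S})$, conjugation by $\Omega_p$ restricts to an automorphism of the elementary abelian group $\mathcal{S}$; hence each $\Omega_p S_j\Omega_p^\dag$ lies in $\mathcal{S}$ (with no extra sign, as $-I\notin\mathcal{S}$) and $\{\Omega_p S_j\Omega_p^\dag\}_{j=1}^{n-k}$ is again a generating set of $\mathcal{S}$. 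Therefore the common $(+1)$-eigenspace of the new generators equals that of the old ones, i.e.\ $P_0(t_p)=P_0(t_0)$ and $\mathcal{C}(t_p)=\mathcal{C}(t_0)$: the family of code spaces $t\mapsto\mathcal{C}(t)=U(t,t_0)\mathcal{C}(t_0)$ traced out during the deformation is a \emph{closed loop}, which is exactly the sense in which the operation is ``closed-loop''. (If in addition $\Omega_p$ commutes with every element of $\mathcal{S}$ --- the typical situation for a logical gate acting within a fixed code --- then even $H(t_p)=H(t_0)$, a genuine loop in the Hamiltonian control manifold $\mathcal{M}$, but this stronger statement is not needed.)

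Second, I would identify the resulting code-space transformation with a holonomy. By Lemma~\ref{lemma:state_evolution}, any $|\psi(t_0)\>\in\mathcal{C}(t_0)$ evolves adiabatically to $|\psi(t_p)\>=e^{-i\varepsilon_0(t_p-t_0)}\,\Omega_p|\psi(t_0)\>$, where $\varepsilon_0=-J(n-k)$ is the common ground energy. The factor $e^{-i\varepsilon_0(t_p-t_0)}$ is a global dynamical phase acting uniformly on all of $\mathcal{C}$, so it may be dropped; what remains is the unitary $\Omega_p|_{\mathcal{C}}$. Comparing with Eq.~(\ref{eq:general_adiabatic_evolution}) and the definition of the Wilczek--Zee connection in Eq.~(\ref{eq:WZ_connection}), this residual unitary is precisely the path-ordered exponential of the connection around the closed loop $t\mapsto\mathcal{C}(t)$, i.e.\ its holonomy. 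Hence $\Omega_p$ restricted to the code space is realized as a purely geometric (holonomic) operation of the adiabatic process.

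Third, I would check nontriviality: since every element of $\mathcal{S}$ acts as the identity on $\mathcal{C}$, the hypothesis $\Omega_p\notin\mathcal{S}$ (more precisely, $\Omega_p\notin\mathcal{S}$ even up to a global phase) guarantees that $\Omega_p$ represents a nontrivial logical operator, so the holonomy $\Omega_p|_{\mathcal{C}}$ is not a scalar. Putting the three steps together gives that $\Omega_p$ is a nontrivial closed-loop holonomic operation, which is the claim.

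I do not anticipate a genuine obstacle: the corollary is essentially a restatement of Theorem~\ref{them:them1}. The one place that deserves care is pinning down what ``closed loop'' means here --- closure of the \emph{code space}, which follows from $\Omega_p\in N(\mathcal{S})$ alone, versus return of the full Hamiltonian $H(t_p)=H(t_0)$, which needs the stronger hypothesis that $\Omega_p$ commute with $\mathcal{S}$ --- and making sure the adiabatic-holonomy argument invokes only the former. A minor secondary point is the harmless global-phase ambiguity appearing both in ``$\Omega_p\notin\mathcal{S}$'' and in the dynamical phase $e^{-i\varepsilon_0(t_p-t_0)}$, which should be noted but does not affect the conclusion.
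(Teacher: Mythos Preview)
Your proposal is correct and matches the paper's reasoning: the paper does not give a separate proof of the corollary but presents it as an immediate consequence of Theorem~\ref{them:them1} and Lemma~\ref{lemma:state_evolution}, noting just before it that the relevant notion of ``closed loop'' is $\Omega_p P_0(t_0)\Omega_p^\dag=P_0(t_0)$, i.e.\ closure of the code space rather than of the Hamiltonian. Your three steps simply make explicit the observation that $\Omega_p\in N(\mathcal{S})$ forces $P_0(t_p)=P_0(t_0)$, that Lemma~\ref{lemma:state_evolution} identifies the residual action on $\mathcal{C}$ with the holonomy, and that $\Omega_p\notin\mathcal{S}$ ensures nontriviality; the distinction you flag between code-space closure and $H(t_p)=H(t_0)$ is well taken and consistent with the paper's convention.
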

\begin{remark}
These results build a relationship between the special kind of circuits $\mathcal{G}$ we are interested in and the corresponding adiabatic process. If we can find a circuit in $\mathcal{G}$ giving a particular unitary, then
we can translate it to an adiabatic process.
However, in general, the weight of the Hamiltonian terms changes with time, and it is quite possible that during the adiabatic process, the Hamiltonian terms will become both nonlocal and high weight. Fortunately, as we will see, in the case of surface codes this can be avoided.
\end{remark}

\subsection{Error propagation}
Although in the process described by Eq.~(\ref{eq:Hamiltonian_change}), the ground space is protected by a constant energy gap $2J$, the lifetime is about $e^{2c\beta J}$ in the presence of a thermal bath. This lifetime doesn't grow with the lattice size $L$, so the the thermal gap does not guarantee fault-tolerance. We still need to do active error correction to make the computation time arbitrarily long. We must analyze how an error caused by thermal excitation will propagate during the adiabatic process to choose the proper circuit from $\mathcal{G}$ and design the subsequent error correction procedure.

Without loss of generality, we assume that an error $E_{t_q}$ happens at time $t_q\ (q\leq l)$. Since any error operator $E_{t_q}$ on an $n$-qubit system can be decomposed into a sum of Pauli operators  $E_{t_q}=\sum_\alpha c_\alpha F_\alpha$, it is sufficient to analyze Pauli errors. We have following lemma:
\begin{mylemma}\label{lemma:error_propagation}
(Error Propagation) If an error $E_{t_q}=\sum_\alpha c_\alpha F^q_\alpha$  $(F^q_\alpha \in G_n)$ happens at time $t_q$ in the procedure described by Eq.~(\ref{eq:Hamiltonian_change_each_step}), and there is an \emph{odd} number of stabilizer generators $S_j(t_r)$ such that $[Q_{r},S_{j}(t_{r-1})]\neq0$ for all times $1 \leq r \leq p$, then
\beq
|\psi(t_p)\>= \mathlarger{\sum}_\alpha c_\alpha e^{-i\varepsilon_{\textbf{s}_\alpha}(t_p-t_q)} F^{pq}_\alpha\left(\prod_{l=1}^{p}g_{l}\right)|\psi(t_0)\>
\eeq
where $F_\alpha^{pq}=\mathscr{U}^{pq} F_\alpha^q \left(\mathscr{U}^{pq}\right)^\dag$ with $\mathscr{U}^{pq}=\prod_{l=q+1}^{p}g_{l}$.
\end{mylemma}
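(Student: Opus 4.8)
The plan is to cut the evolution at the error time $t_q$ and propagate each Pauli component of $E_{t_q}$ forward on $[t_q,t_p]$ through its own eigenspace of $H(t)$, recombining at the end by linearity of the Schr\"odinger equation; so it suffices to follow one term $F^q_\alpha$ and restore the weights $c_\alpha$ afterwards. First I would apply Lemma~\ref{lemma:state_evolution} on $[t_0,t_q]$, so that just before the error $|\psi(t_q)\>=e^{-i\varepsilon_0(t_q-t_0)}\Omega_q|\psi(t_0)\>\in\mathcal{C}(t_q)$, the ground space of $H(t_q)=-\sum_j J S_j(t_q)$ with $S_j(t_q)=\Omega_q S_j\Omega_q^\dagger$ (Theorem~\ref{them:them1}). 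Because $F^q_\alpha$ is a Pauli operator, $S_j(t_q)F^q_\alpha|\psi(t_q)\>=\pm F^q_\alpha S_j(t_q)|\psi(t_q)\>=\pm F^q_\alpha|\psi(t_q)\>$, so $F^q_\alpha|\psi(t_q)\>$ is a joint $\pm1$ eigenvector of all the $S_j(t_q)$ and therefore lies in a \emph{single} eigenspace $P_{\textbf{s}_\alpha}(t_q)\mathcal{H}$, with $\textbf{s}_\alpha$ fixed by which generators anticommute with $F^q_\alpha$ and energy $\varepsilon_{\textbf{s}_\alpha}$.

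Next I would invoke the adiabatic theorem on $[t_q,t_p]$, now starting inside $P_{\textbf{s}_\alpha}$ rather than inside $P_0$. On that interval the schedule deforms $H$ by conjugation with the product of the step unitaries $U_{q+1},\dots,U_p$, whose net effect is $\mathscr{U}^{pq}=\prod_{l=q+1}^p g_l$ since $U_l(t_l,t_{l-1})=g_l$; in particular it carries every eigenspace along by the same unitary, so the syndrome-$\textbf{s}_\alpha$ sector of $H(t_p)$ is $\mathscr{U}^{pq}P_{\textbf{s}_\alpha}(t_q)(\mathscr{U}^{pq})^\dagger$. Because nothing in the proof of Lemma~\ref{lemma:state_evolution} (Appendix~\ref{sec:proof_lemma_state_evolution}) uses a special property of the ground space, repeating that computation verbatim with $P_{\textbf{s}_\alpha}$ in place of $P_0$ shows the branch at $t_p$ is $e^{-i\varepsilon_{\textbf{s}_\alpha}(t_p-t_q)}\mathscr{U}^{pq}F^q_\alpha|\psi(t_q)\>$. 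Writing $\mathscr{U}^{pq}F^q_\alpha=\big(\mathscr{U}^{pq}F^q_\alpha(\mathscr{U}^{pq})^\dagger\big)\mathscr{U}^{pq}=F^{pq}_\alpha\mathscr{U}^{pq}$ and using $\mathscr{U}^{pq}\Omega_q=\Omega_p$, this branch equals $e^{-i\varepsilon_0(t_q-t_0)}e^{-i\varepsilon_{\textbf{s}_\alpha}(t_p-t_q)}F^{pq}_\alpha\Omega_p|\psi(t_0)\>$; summing over $\alpha$ with weights $c_\alpha$ and dropping the common (hence physically irrelevant) global phase $e^{-i\varepsilon_0(t_q-t_0)}$ yields the stated formula. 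When $F^q_\alpha$ commutes with every $S_j(t_q)$ one has $P_{\textbf{s}_\alpha}=P_0$, the branch never leaves the code space, and $F^{pq}_\alpha$ is an undetectable logical fault; the formula is unchanged.

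The step I expect to be the real obstacle, and where the hypothesis about the \emph{odd} number of anticommuting generators is consumed, is justifying that adiabatic theorem for the excited branch. Condition~(\ref{eq:adiabatic_condition_general}) must hold for the pair $(\textbf{s}_\alpha,\textbf{s}_\beta)$ for every $\textbf{s}_\beta\neq\textbf{s}_\alpha$, and it is ill-posed whenever $\varepsilon_{\textbf{s}_\alpha}=\varepsilon_{\textbf{s}_\beta}$ while the numerator $\|P_{\textbf{s}_\alpha}(t)\partial_t H(t)P_{\textbf{s}_\beta}(t)\|_1$ is nonzero. On step $r$ one computes $\partial_t H(t,t_{r-1})\propto\dot{f}_r(t)\,U_r(t)\,Q_r\big(\sum_{j\in A_r}S_j(t_{r-1})\big)U_r^\dagger(t)$ with $A_r=\{j:\{Q_r,S_j(t_{r-1})\}=0\}$, and every term of this operator sends $P_{\textbf{s}}$ to the single sector obtained by flipping precisely the syndromes indexed by $A_r$, whose energy differs from that of $P_{\textbf{s}}$ by $2J\sum_{k\in A_r}s_k$. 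If $|A_r|$ is odd this is a sum of an odd number of $\pm1$'s, hence never zero, so $\partial_t H$ never couples two degenerate eigenspaces; therefore~(\ref{eq:adiabatic_condition_general}) is well-defined for every branch and can be met by taking the $f_r$ slow enough, which is the adiabatic hypothesis. Carrying out this bookkeeping carefully — identifying the unique sector $\partial_t H$ couples to and verifying the energy denominators never vanish — is the only part beyond the calculation already done for Lemma~\ref{lemma:state_evolution}.
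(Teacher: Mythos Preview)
Your proposal is correct and follows essentially the same route as the paper's proof. Both arguments split the evolution at $t_q$, observe that each Pauli component $F^q_\alpha$ carries the ground state into a single syndrome sector $P_{\textbf{s}_\alpha}$, verify that on each subsequent step $\partial_t H$ couples any $P_{\textbf{s}}$ only to the sector with all syndromes in $A_r=\mathscr{C}_{Q_r}$ flipped (so that the odd-cardinality hypothesis forces a nonzero energy denominator $2J\sum_{k\in A_r}s_k$), and then rerun the horizontal-lift computation of Lemma~\ref{lemma:state_evolution} with $P_{\textbf{s}_\alpha}$ in place of $P_0$ to obtain $\mathscr{U}^{pq}F^q_\alpha|\psi(t_q)\rangle$ up to the dynamical phase. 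The paper carries out the same bookkeeping with explicit index sets $\mathscr{A}_\alpha,\mathscr{B}_\alpha,\mathscr{C}_{Q_l},\mathscr{D}_{Q_l}$ and the Stiefel-bundle language of Appendix~\ref{sec:geometric_holonomic}, but the content is identical to what you outlined.
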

\begin{proof}
See
Appendix~\ref{sec:proof_error_propagation}.
\end{proof}
Lemma~\ref{lemma:error_propagation} gives the condition that the error will just propagate to some other error under the expected unitary evolution.  The condition that at each step $r$ the number of $S_j(t_{r-1})$ such that $[Q_r, S_j(t_{r-1})]\neq 0$ should be $\emph{odd}$ is crucial. In general, an error will excite the ground space to another eigenspace $P_{\textbf{s}_\alpha}$, which will usually share the same energy with some other eigenspaces, so that the adiabatic condition will not hold. This condition guarantees that even when this is the case, the degenerate eigenspaces will still satisfy the adiabatic condition Eq.~(\ref{eq:adiabatic_condition_general}) and adiabatic evolution will not fail.

Also, note that if $\Omega_p$ is a logical gate operator, although for the ground space $P_0$ the process is a cyclic evolution, e.g, $\Omega_p P_0(t_0)\Omega_p^\dag = P_0(t_0)$, this is not true for the other eigenspaces. In general, $\Omega_p P_{\textbf{s}}(t_0)\Omega_p^\dag \neq P_{\textbf{s}}(t_0)$ for $\textbf{s}\neq 0$. This means that after an error excites the ground space $P_0$ to $P_{\textbf{s}}$, the adiabatic process becomes open loop for $P_{\textbf{s}}$.

\subsection{Parallelism of adiabatic operation}
The method described in the previous sections is basically a serial operation, meaning that we need to adiabatically deform the Hamiltonian according to the gates in the circuit $\mathcal{G}$ step by step. However, for a large scale QC on a lattice (not only the surface code), we expect that many operations can be done in parallel, so that operations which commute with each other can be done simultaneously. Here we give the condition for those operations to parallelize.
\begin{mylemma}\label{lemma:parrallelism}
(Parallelism) Suppose that at time $t_q$, $|\psi(t_q)\>$ is in the ground space $\mathcal{C}(t_q)$. Define $\mathscr{C}_{Q_r}=\{j\ |\ 1\leq j\leq n-k , \{S_j(t_q),Q_r\}=0\}$. Suppose the set of operators $\mathscr{P}_q=\{Q_r|q+1 \leq r\leq q+M\}$ satisfies the following conditions:
\begin{enumerate}
  \item $[Q_r,Q_m]=0$, for any $Q_r,Q_m \in \mathscr{P}_q$,
  \item $\mathscr{C}_{Q_{r}} \bigcap \mathscr{C}_{Q_{m}}=\emptyset$ for any $Q_r,Q_m \in \mathscr{P}_q$,
  \item $|\mathscr{C}_{Q_r}|$ is odd for all $Q_r\in \mathscr{P}_q$.
\end{enumerate}
and set $U_{q+1}(t,t_q)=\prod_{r=q+1}^{q+M}\exp\left(if(t)Q_r\right)$ with $f(t)=f_{q+1}(t)$ for $t\in[t_q,t_{q+1}]$. Assume the Hamiltonian changes adiabatically as $H(t)=U_{q+1}(t,t_q)H(t_q)U_{q+1}^\dag(t,t_q)$. Then we have:
\begin{enumerate}
  \item The state at time $t_{q+1}$ will be:
  \beq
  |\psi(t_{q+1})\>=e^{-i\varepsilon_0(t_{q+1}-t_{q})}\left(\mathlarger{\prod}_{l=q+1}^{q+M}g_{l}\right) |\psi(t_q)\>.
  \eeq
  \item If an error $E_{t_q}=\sum_\alpha c_\alpha F^q_\alpha$ $(F^q_\alpha\in G_n)$ occurs at time $t_q$, then the state at time $t_{q+1}$ will be:
      \beq
      |\psi(t_{q+1})\>=\mathlarger{\sum}_\alpha c_\alpha e^{-i\varepsilon_{\textbf{s}_\alpha}(t_{q+1}-t_q)}F_\alpha^{q+1,q}
      \left(\mathlarger{\prod}_{l=q+1}^{q+M}g_{l}\right) |\psi(t_q)\>,
      \eeq
      where $F_\alpha^{q+1,q}=\mathscr{U}^{q+1,q} F_\alpha^q \left(\mathscr{U}^{q+1,q}\right)^\dag$ with $\mathscr{U}^{q+1,q}=\prod_{r=q+1}^{q+M}g_{r}$.
\end{enumerate}
\end{mylemma}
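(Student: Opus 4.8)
The plan is to carry the evolution into the rotating frame generated by the deformation operator itself, exactly as (I expect) is done for Lemmas~\ref{lemma:state_evolution} and \ref{lemma:error_propagation} in the single-$Q$ case, and to show that in this frame the dynamics is trivial up to the dynamical phase; the conclusions then follow from the reverse rotation. Write $U(t)\equiv U_{q+1}(t,t_q)=\prod_{r=q+1}^{q+M}\exp\!\big(if(t)Q_r\big)$ and set $|\tilde\psi(t)\>=U^\dagger(t)|\psi(t)\>$. Since $U(t_q)=\ident$ and, by condition~1, $U(t_{q+1})=\prod_{r=q+1}^{q+M}\exp\!\big(i\tfrac{\pi}{4}Q_r\big)=\prod_{l=q+1}^{q+M}g_l=\mathscr{U}^{q+1,q}$, it suffices to determine $|\tilde\psi(t_{q+1})\>$ and then act with $U(t_{q+1})$. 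This is also the precise sense in which one parallel step is the ``same'' as $M$ consecutive single-generator steps of Lemma~\ref{lemma:state_evolution}/Lemma~\ref{lemma:error_propagation}.

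Next I would compute the rotating-frame Hamiltonian. Condition~1 lets me write $U(t)=\exp\!\big(if(t)\sum_r Q_r\big)$, so $U^\dagger(t)\dot U(t)=i\dot f(t)\sum_r Q_r$ and the Schr\"odinger equation becomes $i|\dot{\tilde\psi}\>=\tilde H(t)|\tilde\psi\>$ with $\tilde H(t)=H(t_q)+\dot f(t)\sum_r Q_r$. The decisive observation is that $\sum_r Q_r$ has a vanishing block on every eigenspace $P_{\textbf{s}}(t_q)$: each $Q_r$ anticommutes with precisely the generators $\{S_j(t_q):j\in\mathscr{C}_{Q_r}\}$ and commutes with the rest, so $Q_r$ maps $P_{\textbf{s}}(t_q)$ onto the eigenspace obtained by flipping exactly the labels $s_j$, $j\in\mathscr{C}_{Q_r}$, shifting the energy by $2J\sum_{j\in\mathscr{C}_{Q_r}}s_j$; this is a sum of $|\mathscr{C}_{Q_r}|$ terms equal to $\pm1$, hence it is nonzero, with magnitude $\ge 2J$, precisely because $|\mathscr{C}_{Q_r}|$ is odd (condition~3). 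Therefore $P_{\textbf{s}}(t_q)\big(\sum_r Q_r\big)P_{\textbf{s}}(t_q)=0$ for every $\textbf{s}$, and every eigenspace the perturbation reaches is separated from $P_{\textbf{s}}(t_q)$ by a gap $\ge 2J$, so that Eq.~(\ref{eq:adiabatic_condition_general}) holds for \emph{all} eigenspaces during the deformation. Condition~2 (the disjointness of the $\mathscr{C}_{Q_r}$) is used to guarantee that $U(t)(\cdot)U^\dagger(t)$ rotates each $S_j(t_q)$ by at most a single $Q_r$, so the deformed stabilizer generators — and the propagated error operators below — stay geometrically local; this is what makes the step a genuine parallelization of $M$ operations on disjoint supports.

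With $\tilde H(t)$ block-diagonal over the $P_{\textbf{s}}(t_q)$ and equal to $\varepsilon_{\textbf{s}}\ident$ on each block up to corrections that vanish in the adiabatic limit, a rotating-frame state starting in a single eigenspace $P_{\textbf{s}_\alpha}(t_q)$ merely acquires the dynamical phase $e^{-i\varepsilon_{\textbf{s}_\alpha}(t_{q+1}-t_q)}$, with trivial Wilczek--Zee holonomy. For the first assertion, the initial state lies in $P_0(t_q)$, so $|\tilde\psi(t_{q+1})\>=e^{-i\varepsilon_0(t_{q+1}-t_q)}|\psi(t_q)\>$ and applying $U(t_{q+1})=\prod_{l=q+1}^{q+M}g_l$ gives the stated formula. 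For the second assertion, write the post-error state as $\sum_\alpha c_\alpha F^q_\alpha|\psi(t_q)\>$; each Pauli $F^q_\alpha$ sends the ground space into a \emph{single} eigenspace $P_{\textbf{s}_\alpha}(t_q)$ (with $\varepsilon_{\textbf{s}_\alpha}$ fixed by which $S_j(t_q)$ it anticommutes with), so the argument applies term by term, yielding $e^{-i\varepsilon_{\textbf{s}_\alpha}(t_{q+1}-t_q)}U(t_{q+1})F^q_\alpha|\psi(t_q)\>$; inserting $U^\dagger(t_{q+1})U(t_{q+1})=\ident$ turns this into $e^{-i\varepsilon_{\textbf{s}_\alpha}(t_{q+1}-t_q)}\,\mathscr{U}^{q+1,q}F^q_\alpha(\mathscr{U}^{q+1,q})^\dagger\big(\prod_{l=q+1}^{q+M}g_l\big)|\psi(t_q)\>$, and summing over $\alpha$ gives the claim.

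The main obstacle is the adiabaticity bookkeeping rather than the algebra: one must certify that Eq.~(\ref{eq:adiabatic_condition_general}) really holds for every excited eigenspace, not merely the ground space, so that an error excited into $P_{\textbf{s}_\alpha}$ is transported rigidly — this is exactly where condition~3 is indispensable, since oddness of each $|\mathscr{C}_{Q_r}|$ is what forbids $\sum_r Q_r$ from coupling energy-degenerate eigenspaces at leading order. A secondary technical point is that a level $\varepsilon_{\textbf{s}}$ may be highly degenerate (many $\textbf{s}$, and $K$-fold within each $P_{\textbf{s}}$); one checks that the reduced perturbation inside such a level vanishes to first order by the odd-support argument above and that the residual higher-order mixing is $O\!\big((\dot f/J)^2\big)$, hence negligible in the adiabatic limit, so each $P_{\textbf{s}_\alpha}$ is still tracked individually and no geometric phase is generated beyond the frame rotation $\prod_l g_l$.
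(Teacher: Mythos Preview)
Your argument is correct and complete, but it is not the route the paper takes. The paper proves both parts using the Stiefel-bundle formalism of Appendix~\ref{sec:geometric_holonomic}: for part~1 it writes the horizontal lift $V_{\textbf{s}}(t)=U_{q+1}(t,t_q)V_{\textbf{s}}(t_q)h(t,t_q)$ and shows directly from Eq.~(\ref{eq:ht_evolution}) that $\dot h\propto V^\dagger P_{\textbf{s}}\big(\sum_r Q_r\big)P_{\textbf{s}}V=0$, hence $h\equiv I$ and the only nontrivial factor is the frame rotation $\prod_l g_l$; for part~2 it stays in the lab frame and verifies Eq.~(\ref{eq:adiabatic_condition_general}) by computing $P_{\textbf{s}_\alpha}(t)\partial_t H(t)P_{\textbf{s}_\beta}(t)$ explicitly. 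Your co-rotating frame with $\tilde H(t)=H(t_q)+\dot f(t)\sum_r Q_r$ collapses both steps into one observation --- the same vanishing-diagonal-block fact $P_{\textbf{s}}\big(\sum_r Q_r\big)P_{\textbf{s}}=0$ --- and is arguably more transparent, since it avoids the fibre-bundle machinery entirely. One point of divergence: you invoke condition~2 for geometric locality of the deformed generators, whereas the paper uses it inside the adiabatic-condition check, to guarantee that the target spaces $P_{\textbf{s}_{\beta_r}}$ reached by different $Q_r$ are pairwise distinct so that the trace-norm estimate $\|P_{\textbf{s}_\alpha}\partial_t H\,P_{\textbf{s}_{\beta_r}}\|_1=K\,\dot f\,|\varepsilon_{\textbf{s}_\alpha}-\varepsilon_{\textbf{s}_{\beta_r}}|$ involves a single $Q_r$ at a time. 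In your rotating-frame argument condition~2 is in fact not needed for adiabaticity (condition~3 alone gives the gap), so your attribution is consistent with your own proof, but it differs from how the paper deploys the hypothesis.
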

\begin{proof}
See Appendix~\ref{sec:proof_parallesim}.
\end{proof}
Lemma~\ref{lemma:parrallelism} suggests that it is possible to do $M$ steps of the adiabatic transformation described in Lemma~\ref{lemma:state_evolution} in one step, and gives the conditions for the adiabatic evolution to still be valid when errors occur. This property is extremely important. Since we need to apply our scheme to surface codes of large size, operations applied simultaneously on different parts of the surface can greatly improve the efficiency of computation.


\section{HQC in Surface Codes}\label{sec:HQC_surface}
We are ready to show how to do QC fault-tolerantly by adiabatically deforming the stabilizer Hamiltonian of the surface code. As mentioned in the previous section, our goal is that all the procedures, including state preparation, ancilla preparation, logical gate operations and logical state measurements, be implemented fault-tolerantly with constant energy gap protection. In the next few subsections, we discuss how to construct these procedures, and discuss error propagation and error detection in detail.

State measurement is a special case worth more discussion here. At the end in the computation, when we want to read all of the data in the logical qubits, we can just turn off the Hamiltonian and measure everything. However, during the computation, when the stabilizer Hamiltonian exists, we still may need to measure logical qubits from time to time, so that actions conditioned on those classical measurement outcomes of logical qubit can be applied. We must put some restrictions on the kinds of measurements we can do that are compatible with the existence of the stabilizer Hamiltonian. The first requirement is that the observable $\mathcal{O}$ we want to measure should commute with the Hamiltonian:
\beq\label{eq:measurement_req}
\left[H, \mathcal{O}  \right]=0.
\eeq
This requirement guarantees that if a state encoding quantum information is in one of the eigenspaces $P_{\textbf{s}}$ before the measurement, then after the projective measurement, the state will still be in $P_{\textbf{s}}$. If Eq.~(\ref{eq:measurement_req}) is not satisfied, the measurement will lead to excitations out of the eigenspace. The second requirement is that the observable should be geometrically local, so that the measurement procedure will not introduce non-local interactions. Note that when the Hamiltonian is turned on, we do not do $X_s$ or $Z_p$ stabilizer measurements even though they commute with the system Hamiltonian and are local. The reason for this is that to projectively measure these many-body observables, we would need to introduce CNOT gates and syndrome qubits, which are not compatible with the system Hamiltonian. So in our scheme, syndrome measurements are always done when the system Hamiltonian is turned off. However, as stated in requirement 6 in the previous section, we do allow single physical qubit measurements as long as they commute with the system Hamiltonian.

Errors can happen during the single qubit measurement process. There are two kinds of measurement errors. The first kind is that, instead of an ideal measurement, some quantum process occurs during the measurement process which is equivalent to one of the following circuits:
\begin{center}$
\Qcircuit @C=1em @R=1em {
\lstick{\ket{\psi}} & \gate{\sigma_x} &\measureD{M_{Z}} \gategroup{1}{2}{1}{3}{.7em}{--} & & & &\lstick{\ket{\psi}} & \gate{\sigma_z} &\measureD{M_{X}}\gategroup{1}{8}{1}{9}{.7em}{--}
}$
\end{center}
for $\sigma_z$ measurement and $\sigma_x$ measurement, respectively. The second kind of error can be regarded as a software error: even though the measurement is perfect, some classical noise corrupts the measurement result and we get the wrong outcome. This can be modeled by the circuits
\begin{center}$
\Qcircuit @C=1em @R=1em {
\lstick{\ket{\psi}} &\measureD{M_{Z}}& \cw &\push{X}   &  & & &\lstick{\ket{\psi}} &\measureD{M_{X}}& \cw &\push{X}\gategroup{1}{2}{1}{4}{.7em}{--}\gategroup{1}{9}{1}{11}{.7em}{--}
}$
\end{center}
\vspace{3mm}
In this paper, we assume we can completely overcome errors of the second kind, and focus only on the first kind of errors.

Finally, note that in the process of computation, we are frequently required to do logical $X_L$ and logical $Z_L$ gates. We do not  necessarily implement these gates physically; rather, we can simply keep a record of it, and apply $X_L$ and $Z_L$ to that logical qubit in ``software", as described in Secs. IX and XVI.A of Ref.~\cite{Folwer2012PhysRevA.86.032324}.

\subsection{Creation of $|+\> \ (|0\>)$ state for $X\ (Z)$-cut double qubit }
Before computation begins, we assume the system is already prepared with the eigenvalues of all stabilizer generators equal to $+1$. This can be done by several methods. One of them is preparing all qubits in the $|0\>$ state and then measuring all $X_s$ stabilizer generators and resetting their eigenvalues to $+1$. After that, we turn on the stabilizer Hamiltonian:
\beq
H(t_0)=-J\mathlarger{\sum}_{i}X_{s_i}-J\mathlarger{\sum}_{j}Z_{p_j}.
\eeq
\begin{figure}[!ht]
\centering\includegraphics[width=85mm]{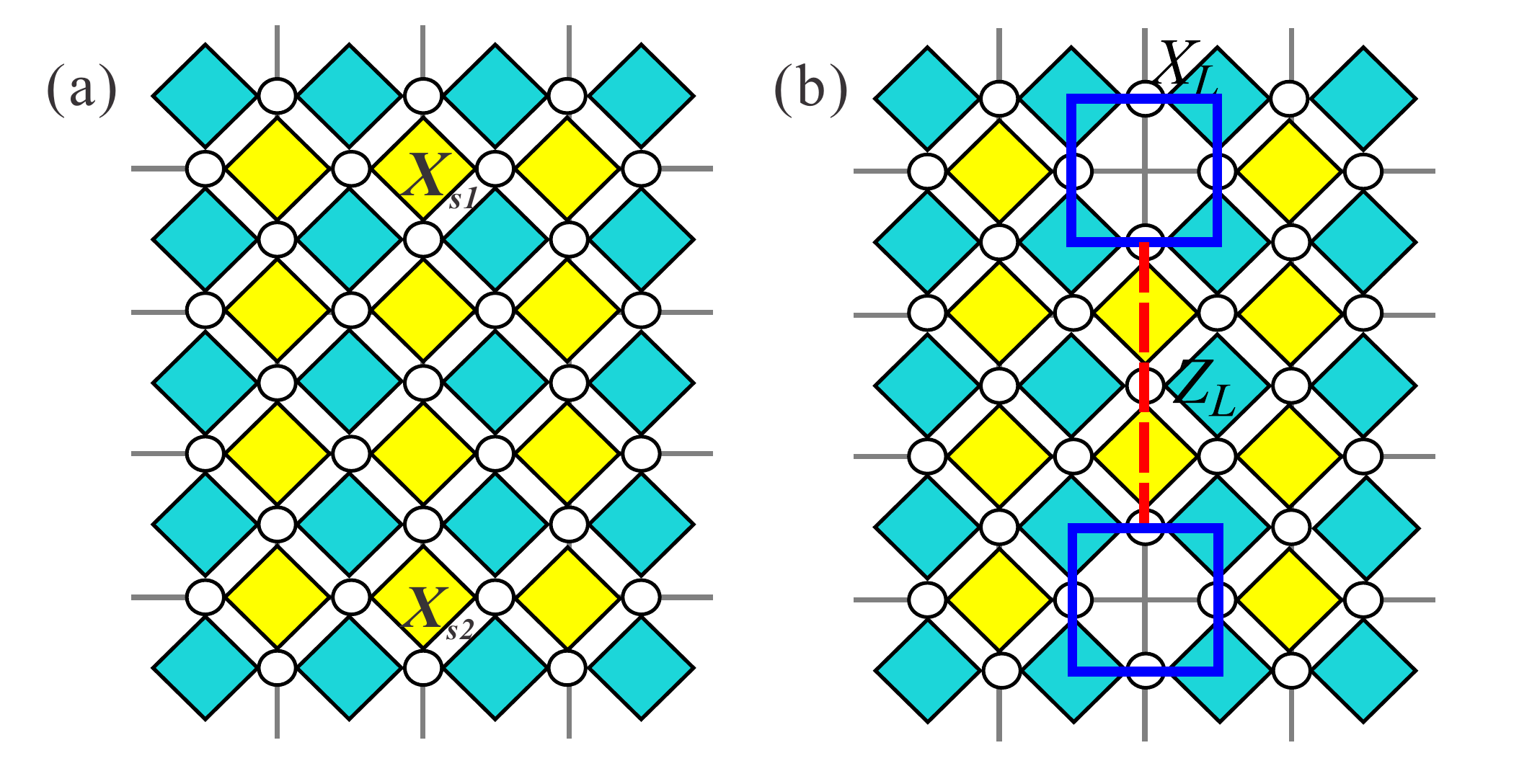}
\caption{\label{fig:easy_initialization} (Color online) Creation of $|+\>$ for $X$-cut double qubit.  System Hamiltonians before and after are shown in (a) and (b) respectively. Colored squares indicate that the corresponding $X_s$ (yellow) and $Z_p$ (cyan) stabilizer generators are turned on. }
\end{figure}

There are two types of initialization procedures. The first is the creation of a $|+\>\ (|0\>)$ state for a $X\ (Z)$-cut and second is the creation a $|+\>\ (|0\>)$ state for $Z\ (X)$-cut qubit. Here, we give an example of preparing a $|+\>$ state for $X$-cut double logical qubit; the $Z$-cut case is similar. We will see that if we can do the first type of preparation fault-tolerantly, we can do the second type fault-tolerantly as well, as will be shown in Sec.~\ref{sec:difficult_initialization}. Suppose initially the state of the system is shown in panel (a) of Fig.~\ref{fig:easy_initialization} with a fully stabilized array, and the stabilizer Hamiltonian terms in this area are all turned on. Turning off the $X_{s_1}$ and $X_{s_2}$ terms and makes the Hamiltonian:
\beq
H(t_1)=-J\mathlarger{\sum}_{i\neq 1,2}X_{s_i}-J\mathlarger{\sum}_{j}Z_{p_j}.
\eeq
This will make the state $|+_{DL}^X\>=|+_{SL}^X\>_1|+_{SL}^X\>_2$. This process can be done either adiabatically or instantaneously. If errors occur, they will leave nonzero syndromes for future correction, and no errors will be propagated when the $X_{s_1}$ and $X_{s_2}$ terms are turned off.

We can see that the distance for $\sigma_x$ errors is restricted by 4, no matter how far the pair of holes are separated. To increase the error protection ability of $\sigma_x$ errors, we need to enlarge the size of the holes. We will describe in detail the adiabatic procedure to enlarge the holes with gap protection in Sec.~\ref{sec:adiabatic_qbit_enlarge_move}.

Also note that all state preparations of this type are done right after the initialization of the whole surface, such that $X_{s_1}$ and $X_{s_2}$ are known to be $+1$ for certain. During the computation, $X_{s_1}$ and $X_{s_2}$ can be flipped to $-1$ before they are turned off, and we have no way to know their values except by doing syndrome measurement, which we try to avoid. So all qubits needed in the computation are prepared at the beginning.

\subsection{Enlarging the hole}\label{sec:adiabatic_qbit_enlarge_move}
After holes are created, we need to enlarge the size of the hole to improve the ability to correct $\sigma_z \ (\sigma_x)$ errors for $Z \ (X)$-cut double qubits. In this section, we will show how to enlarge the hole adiabatically with gap protection. First, we will assume that no error  occurs on any qubits during the process. Then we will analyze how errors propagate, and the fault-tolerance of the process. Since this is the first example where we apply the results of Sec.~\ref{sec:sketch_scheme}, we will follow the state transformations based on stabilizer formalism in detail.
\subsubsection{Scheme}
\begin{figure}[!ht]
\begin{center}
\centering\includegraphics[width=75mm]{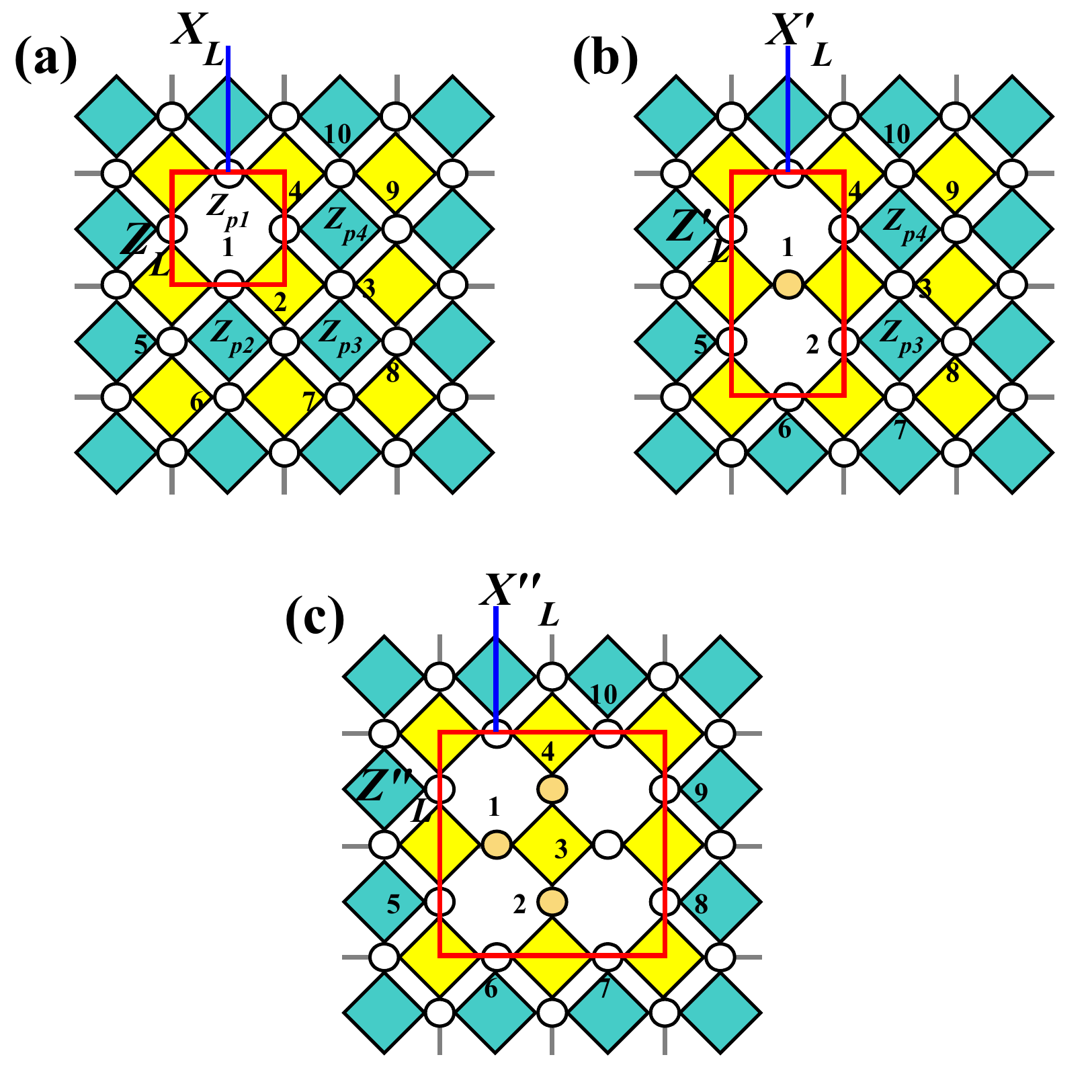}
\caption{\label{fig:hole_enlargement} (Color online) Enlarging a hole of an $X$-cut logical double qubit adiabatically. Colored squares indicate that the corresponding $X_s$ (yellow) and $Z_p$ (cyan) stabilizer generators are turned on. The yellow qubits in (b) and (c) indicate that $\sigma_x$ for that qubit is turned on in the Hamiltonian. Adiabatic evolution between (a) and (b) maps $X_L$ to $X_L'$ and $Z_L$ to $Z_L'$. Similarly, adiabatic evolution between (b) and (c) maps $X_L'$ to $X_L''$ and $Z_L'$ to $Z_L''$. }
\end{center}
\end{figure}
Consider the case of a $Z$-cut qubit, the situation for $X$-cut qubits is similar. Right after the creation of the pair of holes, we first expand one of the two holes vertically down and then horizontally right, as shown in Fig.~\ref{fig:hole_enlargement}. Following the spirit of Sec.~\ref{sec:sketch_scheme}, consider a circuit $\mathcal{G}$ composed of three gates of the form
$g_l=\exp\left(i\frac{\pi}{4}Q_l\right)$,
where $Q_l$ are defined as:
\beq
\begin{split}
Q_1 =&\sigma_{y_1}\sigma_{z_2}\sigma_{z_5}\sigma_{z_6},\\
Q_2 =&\sigma_{y_2}\sigma_{z_3}\sigma_{z_7}\sigma_{z_8},\\
Q_3 =&\sigma_{y_4}\sigma_{z_3}\sigma_{z_9}\sigma_{z_{10}}.
\end{split}
\eeq
In this case , the state is stabilized by $Z_{s_2}$, $Z_{s_3}$, $Z_{s_4}$
(and other stabilizer generators) with logical operators $X_L$
and $Z_{s_1}$. The transformation of the stabilizer generators and logical operators under $\mathcal{G}$ is listed in Table.~\ref{tab:hole_enlargement}. We can see that the circuit $\mathcal{G}$ maps logical operator $X_L$ and $Z_L$ to $X_L''$ and $Z_L''$ in panel (c) of Fig.~\ref{fig:hole_enlargement}, and also maps the system Hamiltonian in panel (a) to the ones shown in panel (c).
\begin{table}
\begin{tabular}{cccc}
\begin{tabular}{ c|c   }
  \hline
  \hline
  $\mathbb{L}_1(t_0)$      & $Z_{s_1}$ \\
  $\mathbb{L}_2(t_0)$      & $X_{L}$\\
  \hline
  $S_1(t_0)$ & $Z_{s_2}$  \\
  $S_2(t_0)$ & $Z_{s_3}$  \\
  $S_3(t_0)$ & $Z_{s_4}$  \\
  \hline
  \hline
\end{tabular}
&{\Large$\substack{g_1 \\ \Rightarrow}$}&
\begin{tabular}{ c|c  }
  \hline
  \hline
  $ \mathbb{L}_1(t_1) $ & $Z_{s_1}Z_{s_2}$ \\
  $\mathbb{L}_2(t_1)$      & $X_{L}$\\
  \hline
  $S_1(t_1)$ & $\sigma_{x_1}$  \\
  $S_2(t_1)$ & $Z_{s_3}$  \\
  $S_3(t_1)$ & $Z_{s_4}$  \\
  \hline
  \hline
\end{tabular}
&{\Large$\substack{g_2 \\ \Rightarrow}$}\\
& & &\\
\begin{tabular}{ c|c   }
  \hline
  \hline
  $\mathbb{L}_1(t_2)$   & $Z_{s_1}Z_{s_2}Z_{s_3}$ \\
  $\mathbb{L}_2(t_2)$      & $X_{L}$\\
  \hline
  $S_1(t_2)$ & $\sigma_{x_1}$  \\
  $S_2(t_2)$ & $\sigma_{x_2}$ \\
  $S_3(t_2)$ & $Z_{s_4}$  \\
  \hline
  \hline
\end{tabular}
&{\Large$\substack{g_3 \\ \Rightarrow}$}&
\begin{tabular}{ c|c  }
  \hline
  \hline
  $ \mathbb{L}_1(t_3) $ & $Z_{s_1}Z_{s_2}Z_{s_3}Z_{s_4}$ \\
  $\mathbb{L}_2(t_3)$      & $X_{L}$\\
  \hline
  $S_1(t_3)$ & $\sigma_{x_1}$  \\
  $S_2(t_3)$ & $\sigma_{x_2}$  \\
  $S_3(t_3)$ & $\sigma_{x_3}$  \\
  \hline
  \hline
\end{tabular}
\end{tabular}
\caption{The related transformation of stabilizer generators $\{S_i\}$ and logical operators $\{\mathbb{L}_i\}$ of a $Z$-cut qubit in Fig.~\ref{fig:hole_enlargement} is shown under gate operation $\{g_i\}$.}\label{tab:hole_enlargement}
\end{table}

Now we transform this procedure to an adiabatic one that gives the same state evolution following Theorem ~\ref{them:them1}. Set
$U_l(t,t_{l-1})=\exp\big(i\pi/4 f_l(t) Q_l\big)$ for time segment $t\in[t_{l-1},t_l]$, and adiabatically deform the Hamiltonian as in Eq.~(\ref{eq:Hamiltonian_change_each_step}). Note that $Q_1$ only anticommutes with the $Z_{p_2}$ term in the system Hamiltonian, which guarantees that even if errors occur, the adiabatic evolution is still valid (Lemma~\ref{lemma:error_propagation}). The situation is the same for $Q_2$ and $Q_3$. We first consider the adiabatic transformation generated by $U_1(t,t_0)$:
\beq
\begin{split}
H(t)=& -J\cos\left[f_1(t)\right]Z_{p_2}-J\sin\left[f_1(t)\right]\sigma_{x_1}\\
&-J\mathlarger{\sum}_{j\neq 1,2} Z_{p_j}-J\mathlarger{\sum}_i X_{s_i},
\end{split}
\eeq
for $t\in[t_0,t_1]$, with
\beq
H(t_1)= -J\sigma_{x_1}-J\mathlarger{\sum}_{j\neq 1,2} Z_{p_j}-J\mathlarger{\sum}_i X_{s_i}.
\eeq
At this time, qubit 1 is in the state $|+\>$.
For $U_2$ and $U_3$, we see that $Q_2$ commutes with $Q_3$, while $Q_2$ only anticommutes with $Z_{p_3}$, and $Q_3$ only anticommutes with $Z_{p_4}$.
According to Lemma~\ref{lemma:parrallelism}, the adiabatic procedures generated by $U_2$ and $U_3$ can be done simultaneously with the same state transformation as if done serially. The corresponding Hamiltonian deformation is
\beq
\begin{split}
H(t)=&-J\sigma_{x_1}-J\cos\left[f_2(t)\right]Z_{p_3}-J\sin\left[f_2(t)\right]\sigma_{x_2}\\
&-J\cos\left[f_2(t)\right]Z_{p_4}-J\sin\left[f_2(t)\right]\sigma_{x_4}\\
&-J\mathlarger{\sum}_{j\neq 1,2,3,4} Z_{p_j}-J\mathlarger{\sum}_i X_{s_i}
\end{split}
\eeq
for $t\in[t_1,t_2]$, with
\beq
H(t_2)=-J\sigma_{x_1}-J\sigma_{x_2}-J\sigma_{x_4}-J\mathlarger{\sum}_{j\neq 1,2,3,4} Z_{p_j}-J\mathlarger{\sum}_i X_{s_i},
\eeq
with qubits 1, 2, 3, and 4 all in the state $|+\>$, while they are all protected from $\sigma_z$ errors by the energy gap.

This procedure can be generalized to obtain arbitrarily large square hole with distance equal to the perimeter $d$ (assuming $d$ is a multiple of 4). We first adiabatically expand $d/4$ times vertically down to form a long strip like that in panel (b) of Fig.~\ref{fig:hole_enlargement}, and then adiabatically expanding horizontally right parallel $d/4$ times as in panel (c). In all, we need about $d/2$ time steps of adiabatic evolution.

\subsubsection{Error propagation}
Even though the ground space is protected by an energy gap, there is still a nonzero probability that thermal excitations will occur at finite temperature. In this section, we apply the the result of Lemma~\ref{lemma:error_propagation} to study the propagation of these errors. If errors occur outside the hole or inside the hole, they will not be affected by the adiabatic process at all. However, if errors occur on the boundary of the hole before the adiabatic process, they may potentially propagate during the adiabatic procedure and cause uncorrectable logical errors.
\begin{figure}[!htp]
\begin{center}
\centering\includegraphics[width=70mm]{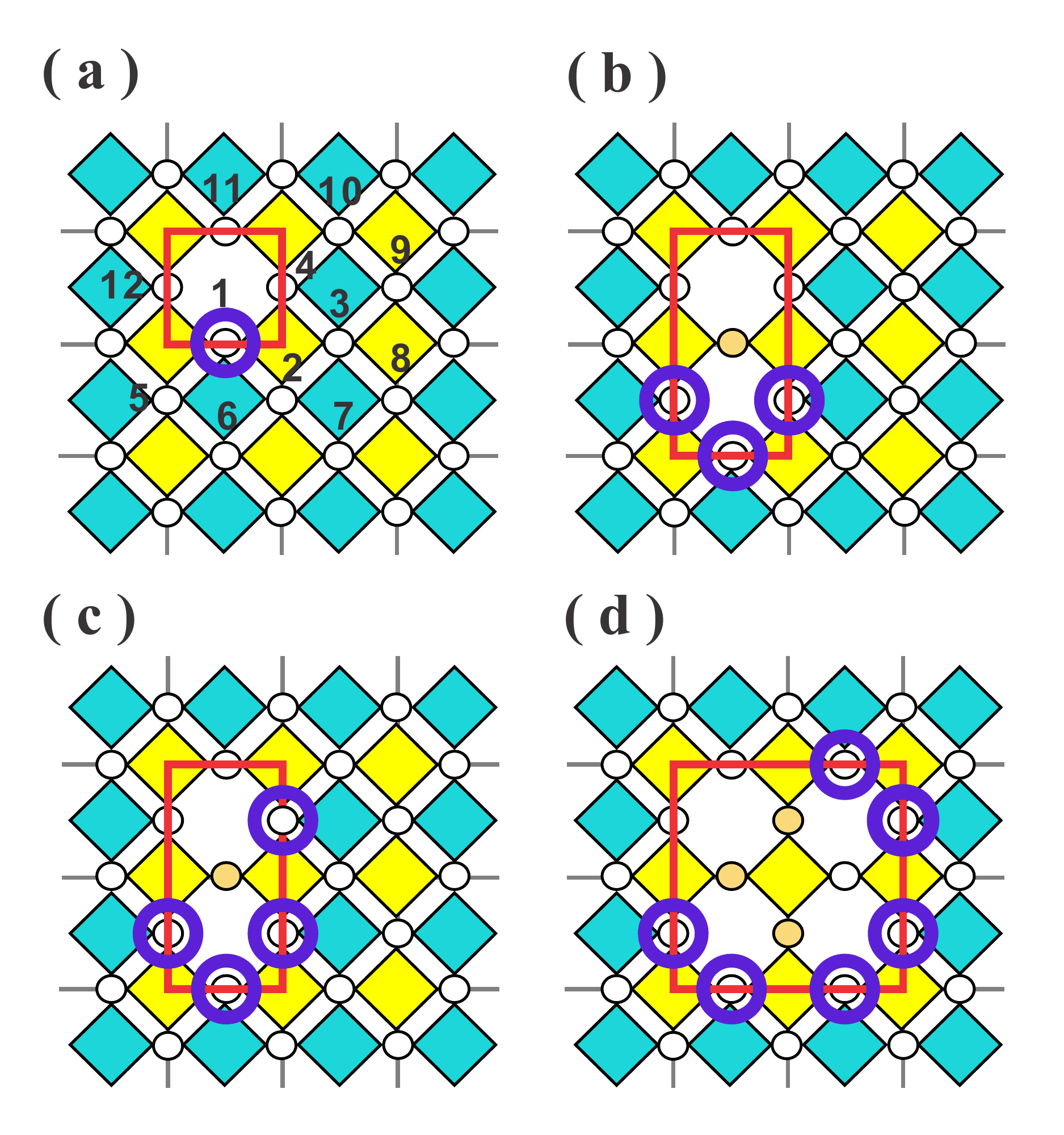}
\caption{\label{fig:hole_enlargement_error} (Color online) Error propagation during an adiabatic process to enlarge a hole of
an $X$-cut logical qubit. Colored squares indicate that
the corresponding $X_s$, $\sigma_x$ (yellow) and $Z_p$, $\sigma_z$ (cyan) operators are turned on. The purple circle around a qubit indicates a $\sigma_z$ error occurs on that qubit. (a) A $\sigma_z$ error occurs on qubit 1. (b) Effective errors after the adiabatic process.
(c) An additional $\sigma_z$ error occurs on qubit $4$.
(d) Effective errors after the adiabatic procedure to enlarge
the hole will cause a logical error after decoding.}
\end{center}
\end{figure}
Consider the case in Fig.~\ref{fig:hole_enlargement_error}. Before expanding the hole vertically down, assume a $\sigma_z$ error occurs on qubit 1. Then according to Lemma~\ref{lemma:error_propagation}, the $\sigma_z$ error will propagate to $\sigma_{x_1}\sigma_{z_2}\sigma_{z_5}\sigma_{z_6}$, as shown in panel (b). The effective errors are $\sigma_{z_2}\sigma_{z_5}\sigma_{z_6}$, since $\sigma_{x_1}$ has no effect because state of qubit 1 is $|+\>$. However, if another $\sigma_{z}$ error occurs on qubit 4, as shown in panel (c), then after
expanding horizontally rightward, we get effective errors $\sigma_{z_5}\sigma_{z_6}\sigma_{z_7}\sigma_{z_8}\sigma_{z_9}\sigma_{z_{10}}$, which occupy majority of the qubits around the hole. If the minimum-weight error correction is taken, it will close the path by applying $\sigma_{z_{11}}\sigma_{z_{12}}$ and cause a logical $Z$ error. So in general, this procedure is not fault-tolerant by the meaning of Def.~\ref{def:fault-tolerant}. However, we can get around this problem by the following observation: if before the hole expansion, the system is prepared in the $|0_{DL}^Z\>$, then a logical $Z$ error has no effect on the state. The situation is the same for the $|+_{DL}^X\>$ state for an$X$-cut double qubit.  Fortunately, as we will see later, in this scheme we only need to expand a $Z$-cut hole after creation a $|0_{DL}^Z\>$ state and $X$-cut hole after creation a $|+_{DL}^X\>$ state, so the non fault-tolerance of this procedure can be overcome.

\subsection{Moving logical qubits}\label{sec:hole_movement}
We now turn to the realization of logical gate operations in surface codes, like logical CNOT, $S$, Hadamard and $T$ gates. An element way to do these logical gates is by adiabatically moving the holes around each other on a single 2D lattice. In this section, we focus on the details of hole movement by adiabatically deforming the system Hamiltonian. We start with a scheme free of errors at first and then discuss the corresponding error propagation and fault-tolerance.

\subsubsection{Scheme}
We focus on the $Z$-cut qubit in this section, the method for the $X$-cut is similar. Consider a $Z$-cut qubit hole as shown in Fig.~\ref{fig:hole_movement}. Initially, the system Hamiltonian is
\beq
H(t_0)=-J\mathlarger{\sum}_{i=5}^{8}\sigma_{x_i}-J\mathlarger{\sum}_{i=12}^{14}\sigma_{x_i}
-J\mathlarger{\sum}_{j=1}^{4}Z_{p_j}+H_{\text{rest}},
\eeq
\begin{figure}[!ht]
\begin{center}
\centering\includegraphics[width=90mm]{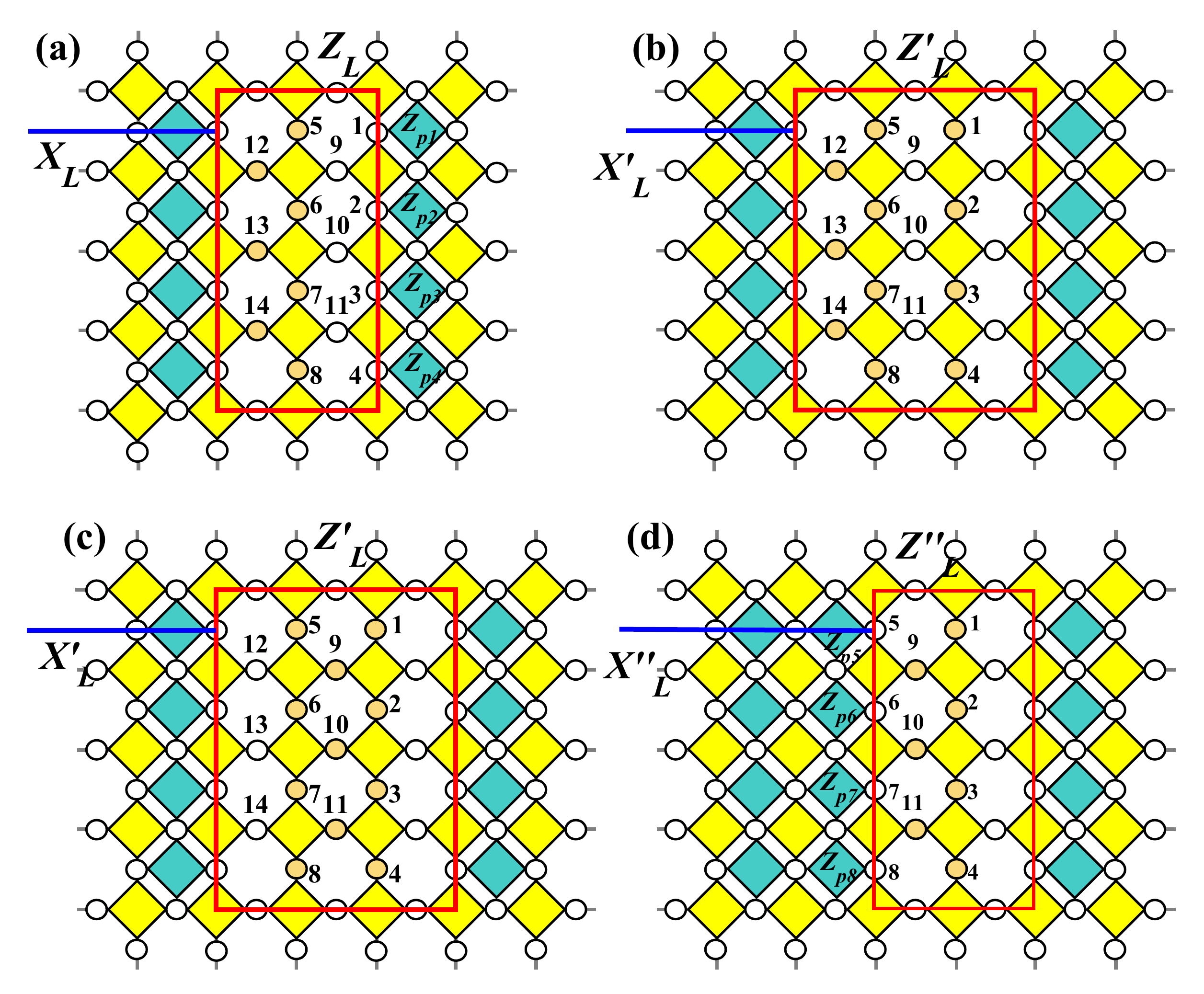}
\caption{\label{fig:hole_movement} (Color online) Adiabatic process for moving a $Z$-cut logical qubit hole horizontally right. Colored squares indicate that the corresponding $X_s$, $\sigma_x$ (yellow) and $Z_p$, $\sigma_z$ (cyan) operator are turned on. Logical operators of the qubit
are $X_L$ and $Z_L$ in (a). An adiabatic process between (a) and (b) maps $X_L$ to $X_L'$ and $Z_L$ to $Z_L'$. Similarly, an adiabatic process between (c) and (d) maps $X_L'$ to $X_L''$ and $Z_L'$ to $Z_L''$.}
\end{center}
\end{figure}
where $H_{\text{rest}}$ represents terms which are not altered in this process but are shown in Fig.~\ref{fig:hole_movement}. We start with a  circuit $\mathcal{G}$ composed of gates $\{g_l\}$ generated by $\{Q_l\}$. For illustration purposes, we divide them into two groups. We first expand the hole horizontally right as shown from panel (a) to panel (b), and then we shrink the hole rightward, as shown from panel (c) to panel (d). Consider the expansion procedure generated by:
\beq
Q_l =i\sigma_{x_l} Z_{p_l}, \ \ \ \ 1\leq l \leq 4,
\eeq
and the corresponding unitary transformations of the Hamiltonian $U_l = \exp\left(if_l(t)Q_l\right)$,
for $l$ from 1 to 4. We can see that each $Q_l$ anticommutes only with $Z_{p_l}$, so we can apply the adiabatic procedures generated by $Q_1$, $Q_2$, $Q_3$, $Q_4$ simultaneously
\beq
\begin{split}
H(t)=&-J\mathlarger{\sum}_{j=1}^{4}\big\{\cos[f_1(t)]Z_{p_j}+\sin[f_1(t)]\sigma_{x_j}\big\}\\
&-J\mathlarger{\sum}_{i=5}^{8}\sigma_{x_i}-J\mathlarger{\sum}_{i=12}^{14}\sigma_{x_i}+H_{\text{rest}},
\end{split}
\eeq
for $t\in[t_0,t_1]$, and obtain
\beq
H(t_1)=-J\mathlarger{\sum}_{i=1}^{4}\sigma_{x_i}-J\mathlarger{\sum}_{i=5}^{8}\sigma_{x_i}
-J\mathlarger{\sum}_{i=12}^{14}\sigma_{x_i}+H_{\text{rest}}
\eeq
at time $t_1$ as shown in panel (b). At this time, all qubits inside the hole are set to the $|+\>$ state. To contract the hole, rightward, we follow the circuit generated by $Q_l$,
\beq
Q_l =i Z_{p_l}\sigma_{x_l}, \ \ \ \ 5 \leq l \leq 8,
\eeq
and the corresponding unitary transformation of the Hamiltonian
$U_l = \exp\left(if_l(t)Q_l\right)$.
We need to be a little careful here, since $Q_l$ here anticommutes with two terms in the Hamiltonian. For example, $i\sigma_{x_{5}}Z_{p_5}$ anticommutes with both $\sigma_{x_5}$ and $\sigma_{x_{12}}$. To get around this, we turn off the terms $-J\sigma_{x_{12}}$, $-J\sigma_{x_{13}}$, $-J\sigma_{x_{14}}$ in the above equation, and turn on $-J\sigma_{x_9}$, $-J\sigma_{x_{10}}$, $-J\sigma_{x_{11}}$ instead. We can see that this procedure doesn't change the state of the system and can be done either adiabatically or instantaneously, making the Hamiltonian to be:
\beq
H^\prime(t_1)=-J\mathlarger{\sum}_{i=1}^{11}\sigma_{x_i}+H_{\text{rest}}.
\eeq
$Q_l$ now anticommutes with just one stabilizer generator (which is $\sigma_{x_l}$). Like the expansion process, we can adiabatically deform the Hamiltonian:
\beq
\begin{split}
H(t)=&-J\mathlarger{\sum_{j=5}^8}\big\{\cos[f_2(t)]\sigma_{x_j}+\sin[f_2(t)]Z_{p_j}\big\}\\
&-J\mathlarger{\sum}_{i=1}^{4}\sigma_{x_i}-J\mathlarger{\sum}_{i=9}^{11}\sigma_{x_i}+H_{\text{rest}},
\end{split}
\eeq
for $t\in[t_1,t_2]$, and obtain
\beq
H(t)=-J\mathlarger{\sum_{j=5}^8}Z_{p_j}
-J\mathlarger{\sum}_{i=1}^{4}\sigma_{x_i}-J\mathlarger{\sum}_{i=9}^{11}\sigma_{x_i}+H_{\text{rest}},
\eeq
which completes a full cycle of hole movement and leaves us ready for the next cycle of Hamiltonian deformation. The original ground space will be mapped to the one with a hole sitting one unit rightward of the original one (see panel (d)), and $X_L$ and $Z_L$ will be mapped to $X_L^{\prime\prime}$ and $Z_L^{\prime\prime}$ following Theorem.~\ref{them:them1}.
\begin{remark}
Note that the two steps of the adiabatic expansion and contraction of the hole can be combined into one step, if we turn off $-\sigma_{x_{12}}$, $-\sigma_{x_{13}}$, $-\sigma_{x_{14}}$ while turning on $-\sigma_{x_9}$, $-\sigma_{x_{10}}$ and $-\sigma_{x_{11}}$ at the beginning. So we need just one time step to adiabatically deform the system Hamiltonian to move a hole by one unit.
\end{remark}

\subsubsection{Error propagation and fault tolerance}
\begin{figure}[!ht]
\begin{center}
\centering\includegraphics[width=90mm]{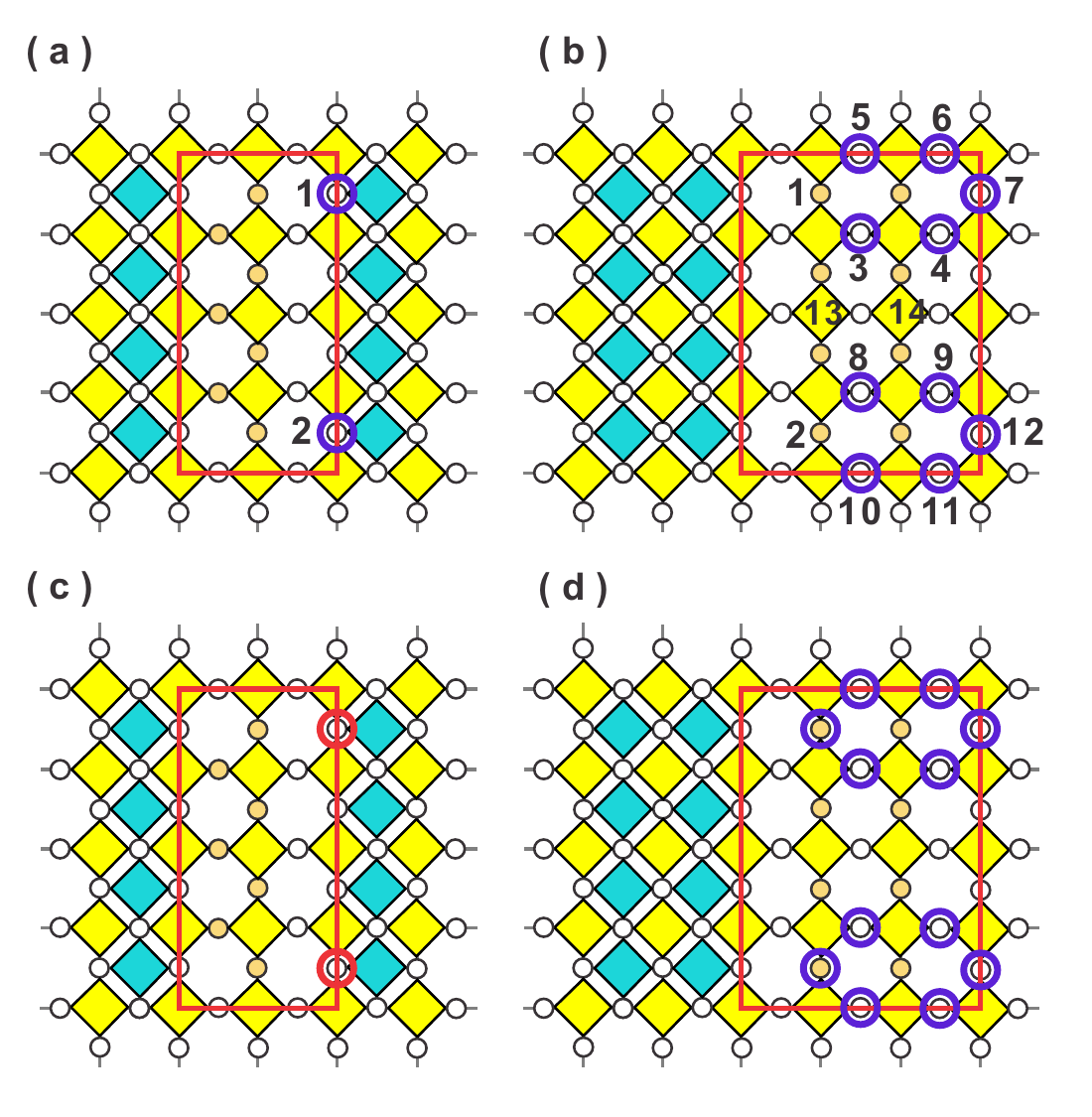}
\caption{\label{fig:hole_movement_error} (Color online) Error propagation during adiabatic process to move a hole of an $X$-cut logical double qubit horizontally right. Colored squares and qubits indicate that the corresponding $X_s$, $\sigma_x$ (yellow) and $Z_p$, $\sigma_z$ (cyan) operators are turned on. The purple circle around qubit indicates a $\sigma_z$ error occurs on that qubit and a red one indicates a $\sigma_x$ error occurs. (a) $\sigma_z$ errors occur on qubit 1 and 2. (b) Effective errors caused by $\sigma_{z_1}$ and $\sigma_{z_2}$ after adiabatic process. (c) $\sigma_x$ errors occur on qubit 1 and 2. (d) Effective errors caused by $\sigma_{x_1}$ and $\sigma_{x_2}$ after adiabatic process.}
\end{center}
\end{figure}
Like the case of hole enlargement, there's chance that thermal errors will cause an excitation. Errors outside or inside the holes will not be propagated by the process. However, if errors occur on the boundary of the hole before moving, they may potentially propagate to uncorrectable logical errors. Consider the case in Fig.~\ref{fig:hole_movement_error} for a 2 units movement rightward. Before expanding the hole horizontally right, assume $\sigma_z$ errors occurs on qubit 1 and qubit 2, as shown in panel (a). They will be propagated to:
\beq \sigma_{z_1}\sigma_{z_2}\mapsto\sigma_{z_3}\sigma_{z_4}\sigma_{z_5}\sigma_{z_6}\sigma_{z_7}\sigma_{z_8}
\sigma_{z_9}\sigma_{z_{10}}\sigma_{z_{11}}\sigma_{z_{12}},
\eeq
by the subsequent adiabatic operation, as shown in panel (b).
If we keep expanding the hole rightward, the errors will occupy more than half of the qubits on the perimeter of the hole, and cause a logical $Z$ error after later decoding. Similarly, if $\sigma_{x}$ errors occur on qubit 1 and qubit 2, the effective errors after the adiabatic procedure will be
\beq
\sigma_{x_1}\sigma_{x_2}\mapsto\sigma_{z_1}\sigma_{z_2}\sigma_{z_3}\sigma_{z_4}\sigma_{z_5}\sigma_{z_6}\sigma_{z_7}\sigma_{z_8}
\sigma_{z_9}\sigma_{z_{10}}\sigma_{z_{11}}\sigma_{z_{12}},
\eeq
as shown in panel (d). In general, the adiabatic procedure to move the hole on its own is not fault-tolerant, since the circuit $\mathcal{G}$ we follow to build the adiabatic procedure is not a fault-tolerant one, and the results from Ref.~\cite{Yi-Cong_PhysRevA.89.032317} cannot be
used here directly.

Fortunately, we can still make this process fault-tolerant. Errors that occur on the boundary of the hole, like qubit 1 and qubit 2 in this example, can be detected after each step of hole movement by measuring the qubits inside the hole after the expansion, since they are correlated, as shown in Fig.~\ref{fig:hole_movement_error}. In this case, we will do $\sigma_x$ measurement on qubit 3, 4, 8, 9, 13 and 14, when we are in panel (b). If any of these measurements give $-1$, it indicates that errors (which could be $\sigma_x$ or $\sigma_z$) occurred on the boundary's right side before the hole expansion, and we need to turn off the system Hamiltonian and do a full cycle of syndrome measurement and error correction before they become uncorrectable. A $\sigma_z$ error happens on the boundary with probability about $\exp\left(-4c\beta J\right)$ per time step, while $\sigma_x$ happens on the boundary with probability about $\exp(-2c\beta J)$, so the probability that we must do a full cycle of error correction during hole movement is low.

In practice, measurements themselves involve errors whose effect was discussed earlier in this section. Here, we need to check the probability that the measurement outcomes cause us to make a wrong decision about error correction. As an example, if a $\sigma_z$ error occurs on qubit 1 in panel (a), qubit 3 and 4 in panel (b) will not be protected by an energy gap, and we assume that the probability of a wrong measurement outcome in these cases is $p$ each time step.
Meanwhile, if a $\sigma_x$ error occurs on qubit 1 in panel (c),
qubit 3 and 4 in panel (d) are protected by an energy gap $4J$.
\begin{figure}[!ht]
\centering\includegraphics[width=85mm]{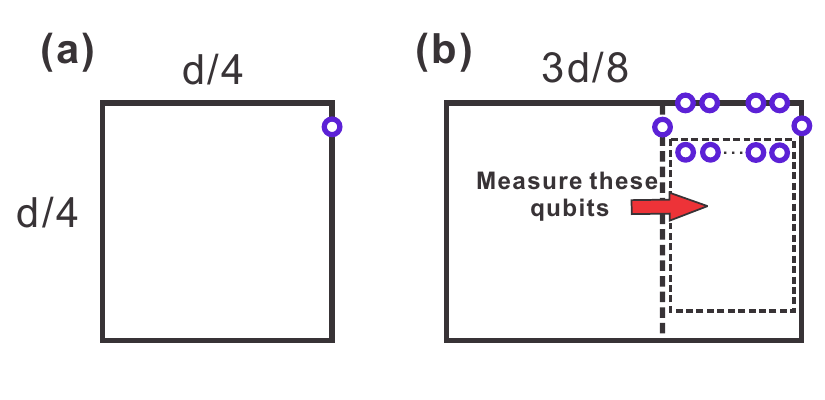}
\caption{\label{fig:FT_measurement_movement} (Color online) Scheme to fault-tolerantly detect errors occurring on the boundary. (a) Before the movement, an error occurs on the boundary. (b) After expanding the hole $d/8$ units rightward, the error propagates to a strip of errors. We measure all qubits in the dashed box and determine if the corresponding error happened on the boundary based on the majority vote of the measurement outcomes of each row of qubits.}
\end{figure}
Fortunately, we can make the uncorrectable error rate arbitrarily small by growing the lattice size and hole size, using majority vote. Consider a square hole with perimeter $d$ as shown in panel (a) of Fig.~\ref{fig:FT_measurement_movement}. Now we expand the hole $d/8$ units rightward and measure $\sigma_x$ in the dashed area of panel (b). The number $d/8$ is chosen so that error detection can be applied before an error can propagate to an uncorrectable error.
If an error occurs on the boundary of the hole before moving, it will corrupt an entire row of qubits in the dashed area of panel (b) in Fig.~\ref{fig:FT_measurement_movement}. So, for each row of qubits, we do a majority vote based on the measurement outcomes to determine whether an error happened on the boundary. For any row, if more than half of the measurement outcomes are $-1$, we infer that an corresponding error occurred at the boundary of the hole before moving, and therefore error correction must be applied. Let $E$ be the event that errors happened on the boundary before movement, and let $D$ be the event that we decide to do decoding and error correction based on the majority vote. Then the probability that such errors occurred on the boundary and is not detected is roughly
\beq
P_L=P(\bar{D},E)=P(\bar{D}|E)P(E)\sim O\left(\frac{d}{4}p^{\lfloor\frac{d}{16}\rfloor+1}e^{-4c\beta J}\right).
\eeq
Here $d/4$ indicates that misidentification can occur on any of $d/4$ rows. This gives a rough bound on the probability of logical errors during the $d/8$ unit hole movement.

On the other hand, the probability that no error occurred on the boundary, but we do an unnecessary decoding can be estimated as
\beq
P_{U}=P(D,\bar{E})=P(D|\bar{E})P(\bar{E})\sim O\left(\frac{d}{4}p^{\lfloor\frac{d}{16}\rfloor}e^{-4c \beta J}\right).
\eeq
We can see that both $P_L$ and $P_U$ can be made arbitrarily small
with the growth of hole size, and thus the adiabatic movement process
can be rendered fault-tolerant.
\begin{remark}
We only analyzed the error propagation for the case of hole expansion.
It is worth noting that for the procedure to adiabatically contract
the hole, errors occurring on the boundary of the hole will
\emph{not} accumulate to uncorrectable logical errors, and thus
can be left for future error correction.
\end{remark}


\subsection{Creation of $|0\> \ (|+\>)$ state for $X \ (Z)$-cut double qubit}\label{sec:difficult_initialization}
The second type of logical state initialization is to prepare the $|0\>$ state for an $X$-cut qubit or $|+\>$ state for a $Z$-cut qubit. We show an example for an $X$-cut qubit in detail. For a $Z$-cut qubit, the procedure is similar.

This can be done using a logical Hadamard after initializing the $|+\>$ state for $X$-cut qubit. However, we have not shown how to perform a logical Hadamard yet, and it is also extremely useful to directly initialize the $|0\>$ state for an $X$-cut qubit, as we will see in next few sections.

Suppose we have created a $|+\>$ state for an $X$-cut qubit with two holes attached to each other, as shown in panel (a) of Fig.~\ref{fig:difficult_initialization}.
\begin{figure}[!ht]
\centering\includegraphics[width=90mm]{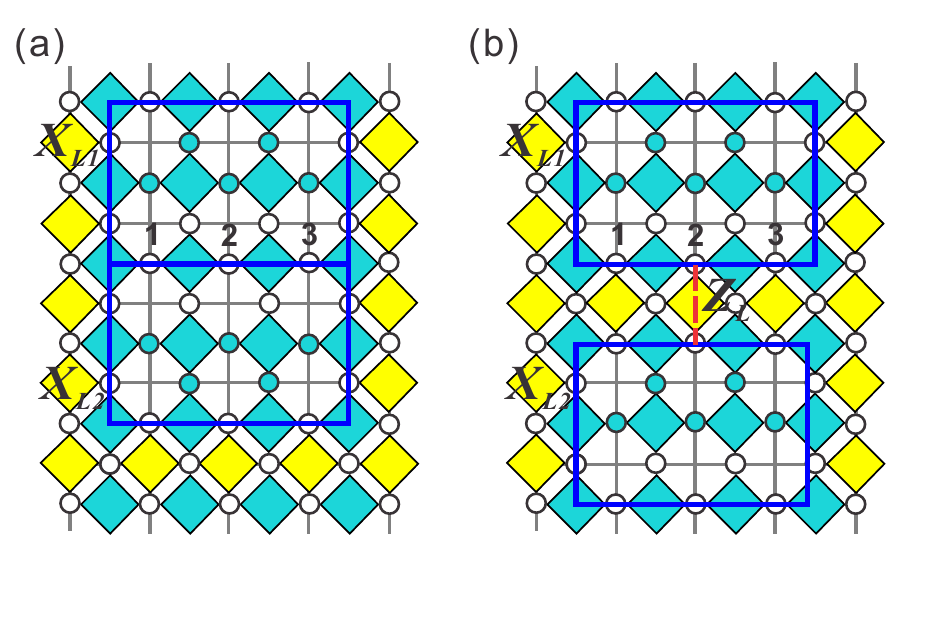}
\caption{\label{fig:difficult_initialization} (Color online) Creation of $|0\>$ state for $X$-cut double qubit. (a) Create a $|+_{DL}^X\>$ with two holes attached to each other. Measure $\sigma_{z_1}$, $\sigma_{z_2}$ and $\sigma_{z_3}$ and do majority vote to determine whether $|0_{DL}^X\>$ or $|1_{DL}^X\>$ is prepared. (b) Move two holes apart to increase error correction ability of $\sigma_z$ errors of the logical qubit. Note that both $\sigma_z$ and $\sigma_x$ errors on qubits between two holes during the adiabatic movement have no uncorrectable effect on logical state $|0_{DL}^X\>$ or $|1_{DL}^X\>$ and can be left for future error correction.}
\end{figure}
The logical $Z$ operator in this case can be $\sigma_{z_1}$, $\sigma_{z_2}$ or $\sigma_{z_3}$, and they all commute with the system Hamiltonian. If we measure any one of them, we can prepare the logical state $|0\>$ or $|1\>$. Either one is useful as long as we know which state it is for certain. If any $\sigma_z$ errors occur on these qubits, it will have no effect, and any single $\sigma_x$ errors on these qubits suffers an energy penalty of $4J$ and leaves $Z_p$ operators nearby flipped and correctable by a future error correction procedure. However, when a $\sigma_x$ happens on these qubits, it will give an incorrect measurement outcome, and will affect any future operations conditioned on whether the state is $|0\>$ or $|1\>$. This can also be resolved by measuring $\sigma_z$ on all three qubits and taking the majority vote to determine the measurement outcome. This procedure can be extended to the square hole with perimeter $d$, where there are $d/4$ qubits shared by two holes. Note that the first measurement error is suppressed by the energy penalty, and occurs with probability $\exp(-4c\beta J)$, while the subsequent measurement errors may not suffer an energy penalty. We assume that the probability to obtain a wrong measurement result is $p$ each time step. The probability that we prepare a $|0_{DL}^X\>\ (|1_{DL}^X\>)$ state with an erroneous measurement $-1\ (+1)$ can be estimated to be:
\beq
P_L\sim O(p^{\lfloor \frac{d}{8}\rfloor+1}e^{-4c\beta J}),
\eeq
which decreases rapidly with the growth of the hole size, and can be made arbitrarily small. After the measurement, we separate the two holes by distance $d$, as illustrated in panel (b) of Fig.~\ref{fig:difficult_initialization} for a single time step of movement. It takes about $d/2$ time steps in total to move the pairs of holes apart by distance $d$ if the two holes
move simultaneously. Any $\sigma_x$ and $\sigma_z$ errors on qubit 1, 2, 3 will not propagate to uncorrectable errors during the movement. The hole movement process can be done adiabatically and fault-tolerantly with gap protection, as described in the previous section. Thus, the whole state preparation process can be made fault-tolerant.

\subsection{Logical $Z\ (X)$ measurement for $X\ (Z)$-cut double qubit}\label{sec:easy_measurement}
Like the case of initialization, there are two types of measurement procedures. The first is measuring in the $Z\ (X)$ basis for an $X\ (Z)$-cut qubit while the second is measuring in the $Z\ (X)$ basis for a $Z\ (X)$-cut qubit.

The first type of measurement is essentially the reverse process of creating the state $|0\>\ (|+\>)$ for an $X\ (Z)$-cut qubit. For an $X$-cut qubit shown in Fig.~\ref{fig:difficult_initialization}, we first move two holes that are initially $d$ units apart together to contact each other, and then measure $\sigma_z$ on all qubits shared by the two holes and take a majority vote of the outcomes. After that, we separate the two holes back to their original positions. Note that unlike traditional measurement-based QC on the surface code, this measurement is non-destructive and we do not annihilate the holes. The measurement procedure can also be viewed as a logical state preparation that will be used in the future computation. The second type of measurement procedure will be discussed in Sec.~\ref{sec:difficult_measurement}.

\subsection{Holonomic Logical CNOT}
The logical CNOT gate is one of the most important logical operations in the surface code HQC scheme. Based on our results on adiabatic hole movement, we can realize the logical CNOT gate. In this section, we show that by adiabatically braiding one hole around a different type of hole, we can get a closed loop holonomy which can be recognized as a logical CNOT. Starting from panel (a) of Figs.~\ref{fig:braiding_logical_X} and ~\ref{fig:braiding_logical_Z}, the adiabatic movement procedure is shown in details from panel (b) to panel (f). In Fig.~\ref{fig:braiding_logical_X}, following the discussion in Sec.~\ref{sec:hole_movement} and Theorem.~\ref{them:them1}, $X_{L_1}\otimes I_{L_2}$ transforms to $X_{L_1}\otimes X_{L_2}$ up to a multiplication by $X_s$ stabilizer generators inside the dashed square. We can conclude that $X_L$ operators transform in the following way:
\begin{figure}[!htp]
\centering\includegraphics[width=83mm]{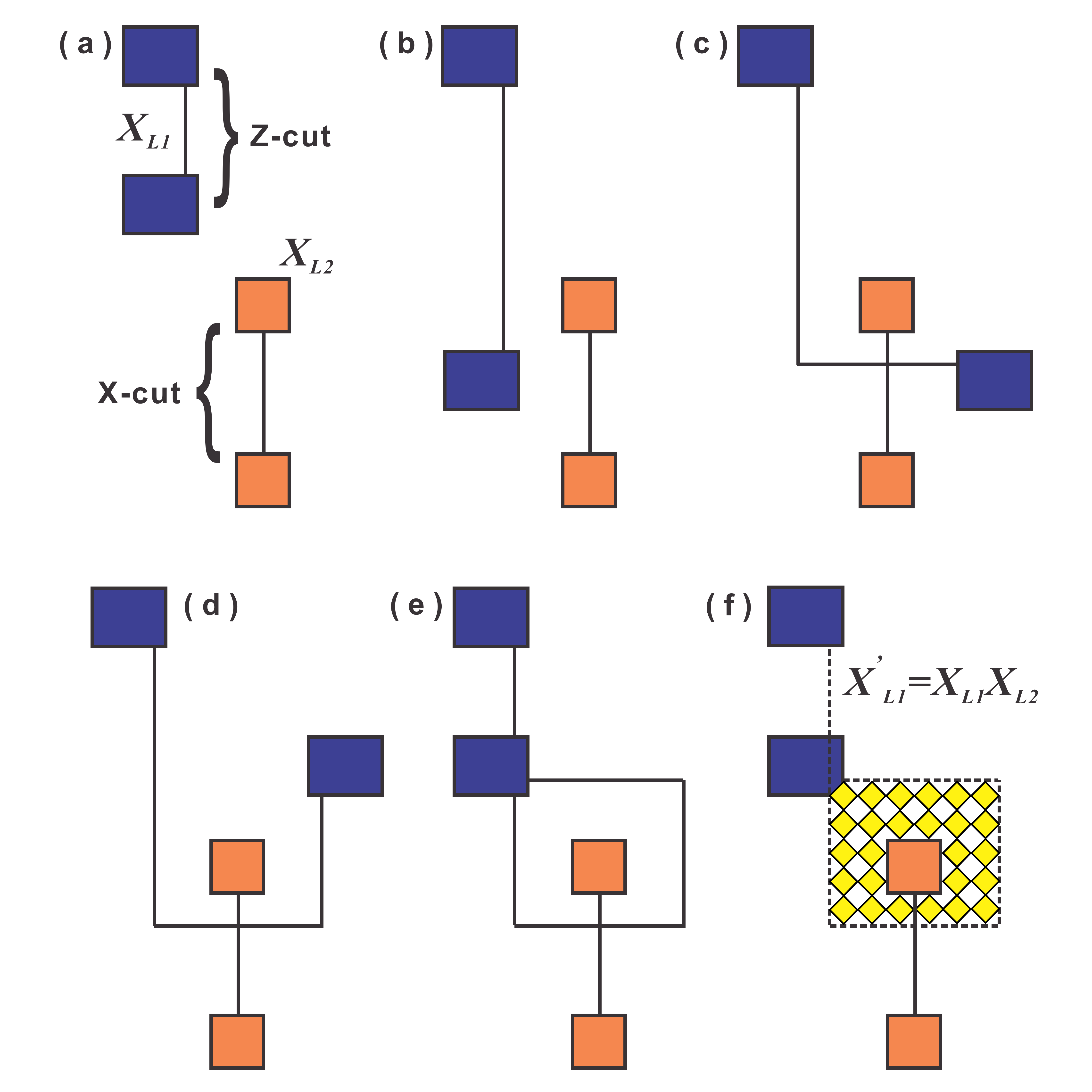}
\caption{\label{fig:braiding_logical_X} (Color online) Adiabatic braiding process of a $Z$-cut hole (dark blue) around an $X$-cut hole (orange) . The operator $X_{L_1}$ has been stretched to multiply a loop of $\sigma_x$ operators which is equivalent to $X_{L_2}$ up to multiplication by $X_s$ stabilizer generators (yellow) inside the loop, while $X_{L_2}$ remains the same under the transformation.}
\end{figure}
\begin{figure}[!htp]
\centering\includegraphics[width=83mm]{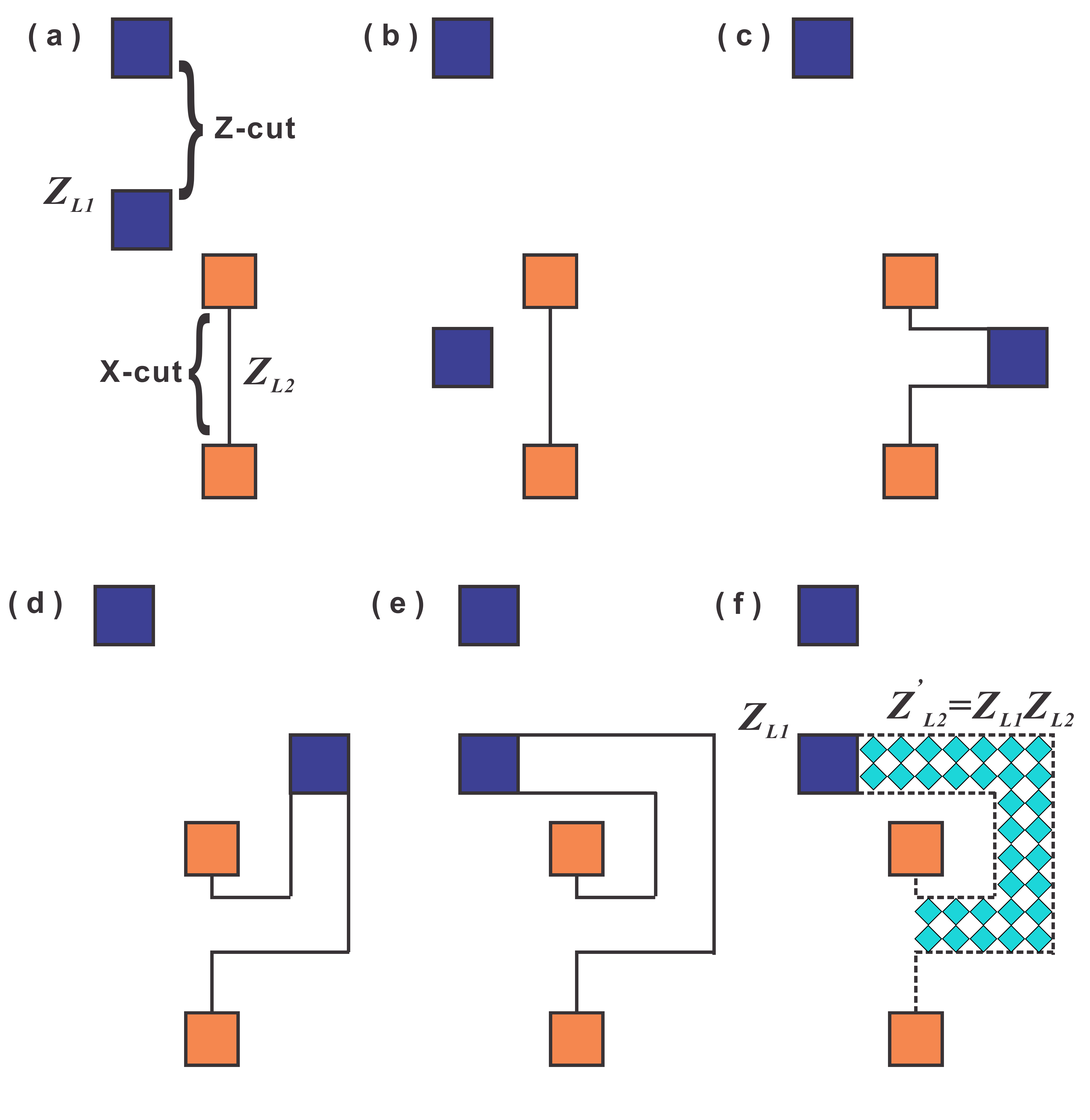}
\caption{\label{fig:braiding_logical_Z} (Color online) Adiabatic braiding process of a $Z$-cut hole (dark blue) around an $X$-cut hole (orange). The operator $Z_{L_2}$ has been stretched to form a strip of $\sigma_z$ operators, which is equivalent to $Z_{L_1}$ up to multiplication by $Z_p$ stabilizer generators (cyan) inside the strip, while $Z_{L_1}$ remains the same under the transformation.}
\end{figure}
\beq
\begin{split}
X_{L_1}\otimes I_{L_2}&\rightarrow X_{L_1}\otimes X_{L_2},\\
I_{L_1}\otimes X_{L_2}&\rightarrow I_{_L1}\otimes X_{L_2}.
\end{split}
\eeq
Similarly, from Fig.~\ref{fig:braiding_logical_Z}, we can see that $I_{L_1}\otimes Z_{L_2}$ transforms to $Z_{L_1}\otimes Z_{L_2}$ up to  multiplication by $Z_p$ stabilizer generators inside the strip. The $Z_L$ operators transform as:
\beq
\begin{split}
Z_{L_1}\otimes I_{L_2}&\rightarrow Z_{L_1}\otimes I_{L_2},\\
I_{L_1}\otimes Z_{L_2}&\rightarrow Z_{L_1}\otimes Z_{L_2}.
\end{split}
\eeq
The closed loop adiabatic evolution can be recognized as a closed loop holonomy which gives a logical CNOT with a $Z$-cut qubit as the control and an $X$-qubit as the target. It also reflects the topological property of braiding on 2D lattice since local deformation of movement path does not have effects on the state. Note that the fault-tolerance of this operation is guaranteed by the fault-tolerance of adiabatic hole movement.

CNOTs from $Z$-cut qubits to $X$-cut qubits are not enough. We need to extend to CNOTs between logical qubits of the same type. For $Z$-cut qubits, we have the following circuit:
\begin{center}
$\Qcircuit @C=1em @R=1em {
\lstick{\text{Z-cut control in}}&\qw       & \ctrl{1} &\qw      & \rstick{\text{Z-cut control out}}\qw \\
\lstick{\ket{0^X_{DL}}}&\targ     & \targ    &\targ    & \measureD{M_{Z}}\\
\lstick{|+^Z_{DL}\>} &\qw &\qw &\ctrl{-1} &\rstick{\text{Z-cut target out}} \qw \\
\lstick{\text{Z-cut target in}}&\ctrl{-2} & \qw      &\qw      & \measureD{M_{X}}\\
}$
\end{center}
which is equivalent to $Z_L^{(1-M_X)/2}$ on the target qubit followed by a CNOT, then followed by $X_L^{(1-M_Z)/2}$ on the target qubit. Similarly, the CNOT between two $X$-cut logical qubits can be built from following circuit:
\begin{center}
$\Qcircuit @C=1em @R=1em {
\lstick{\ket{0^X_{DL}}}&\qw       & \qw    &\targ     & \rstick{\text{X-cut control out}}\qw \\
\lstick{\text{X-cut control in}}&\targ &\qw     & \qw    & \measureD{M_{Z}}\\
\lstick{\ket{+^Z_{DL}}}&\ctrl{-1} &\ctrl{1}      &\ctrl{-2}& \measureD{M_{X}} \\
\lstick{\text{X-cut target in}}&\qw       & \targ  &\qw    & \rstick{\text{X-cut target out}}\qw\\
}$
\end{center}
up to a correction of logical $X$s and $Z$s. The last kind of CNOT, with an $X$-cut qubit as control and a $Z$-cut as target, can be obtained from the circuit realizing CNOT between $Z$-cut qubits:
\begin{center}
$\Qcircuit @C=1em @R=1em {
\lstick{\ket{0^X_{DL}}}&\qw       & \qw    &\targ     & \rstick{\text{X-cut control out}}\qw \\
\lstick{\text{X-cut control in}}&\targ &\qw     & \qw    & \measureD{M_{Z}}\\
\lstick{\ket{+^Z_{DL}}}&\ctrl{-1} &\ctrl{1}      &\ctrl{-2}& \measureD{M_{X}} \\
\lstick{\text{Z-cut target in}}&\qw       & \targ  &\qw    & \rstick{\text{Z-cut target out}}\qw\\
}$
\end{center}
Note that for all four different logical CNOTs, the building block is the CNOT from $Z$-cut to $X$-cut. In addition, we also need to prepare ancillas in logical $|0_{DL}^X\>$ and $|+^Z_{DL}\>$ (which is shown in Sec.~\ref{sec:difficult_initialization}), and to do $Z$ measurements of $X$-cut qubits and $X$ measurements of $Z$-cut qubit (as discussed in Sec.~\ref{sec:easy_measurement}). All of these procedures can be done fault-tolerantly, and thus make all kinds of
logical CNOT fault-tolerant.

\subsection{Measurement of $Z\ (X)$ basis for $Z\ (X)$-cut double qubit}\label{sec:difficult_measurement}
This type of measurement is necessary when doing state distillation (discussed later). Naively, this process can be done by contracting the size of the hole and doing stabilizer measurements. However, stabilizer measurement is not compatible with the system Hamiltonian. What is worse, we close the hole after the measurement to destroy the logical qubit, and we cannot reuse it later. To avoid these problems, we can use the following circuits for $Z$ and $X$ measurement of $Z$-cut and $X$-cut qubits, respectively:
\begin{center}$
\Qcircuit @C=1em @R=1em {
\lstick{\ket{\psi^Z_{DL}}} & \ctrl{1} & \qw  &\qw  &  &  &  & &\lstick{\ket{\psi^X_{DL}}} & \targ & \qw &\qw \\
\lstick{\ket{0_{DL}^{X}}} & \targ & \qw &\measureD{M_{Z}} & & & &  & \lstick{\ket{+_{DL}^{Z}}} & \ctrl{-1} & \qw &\measureD{M_{X}}
}$
\end{center}
These circuits take an ancilla state $|0_{DL}^X\>$ or $|+_{DL}^Z\>$, and a logical CNOT with a $Z$-cut qubit as the control and an $X$-cut qubit as the target, which can both be realized fault-tolerantly. Thus, this type of measurement procedure is fault-tolerant.
Note that, like the measurement of the first type in Sec.~\ref{sec:easy_measurement}, this measurement procedure doesn't annihilate the hole after measurement. The ancilla qubits
after measurement are effectively prepared to $|0_{DL}^X\>$ (or $|1_{DL}^X\>$) fault-tolerantly, which can be used again as ancillas for future computation.

\subsection{Ancilla recycling}
As we have seen so far, to implement different types of CNOTs, we need to frequently create and measure logical qubits. Moreover, state distillation procedures also need large number of fresh ancilla qubits and logical state measurements. We have discussed two different types of state creation---$|0\>\ (|+\>)$ for $X\ (Z)$-cut and $|0\>\ (|+\>)$ for $Z\ (X)$-cut---and two different types of measurement---$X\ (Z)$ measurement for $Z\ (X)$-cut qubit and $Z\ (X)$ measurement for $Z\ (X)$-cut qubit. All can be done fault-tolerantly with constant gap protection, and both kinds of logical state measurement can be made non-destructive, so states after measurement can be reused as ancillas to avoid having to create a new logical qubits. This is particularly important, as we have seen that to create a logical state we need to turn off some $X_s$ or $Z_p$ operators, whose eigenvalues are uncertain when stabilizer Hamiltonian is turned on. With this ancilla recycling process, we can prepare all logical qubits, data or ancilla, right after we turn on the system Hamiltonian at the very beginning of the computation and never create new logical qubits during the computation.

\subsection{State injection}
As will be seen in Sec.~\ref{sec:S_T} and \ref{sec:Hadamard}, to get the logical $S$, $T$ and Hadamard gates, we need to create particular logical ancilla states $|Y_{DL}\>=\frac{1}{\sqrt{2}}(|0_{DL}\>+i|1_{DL}\>)$ and $|A_{DL}\>=\frac{1}{\sqrt{2}}(|0_{DL}\>+e^{i\pi/4}|1_{DL}\>)$. However, there's no obvious way to perform arbitrary rotation of logical qubit with large distance and local Hamiltonians transformation. To deal with this problem, we need to create a logical qubit in which the logical $Z$ operator is just one $\sigma_z$ on single qubit, with the stabilizer Hamiltonian turned on.
\begin{figure}[!ht]
\centering\includegraphics[width=90mm]{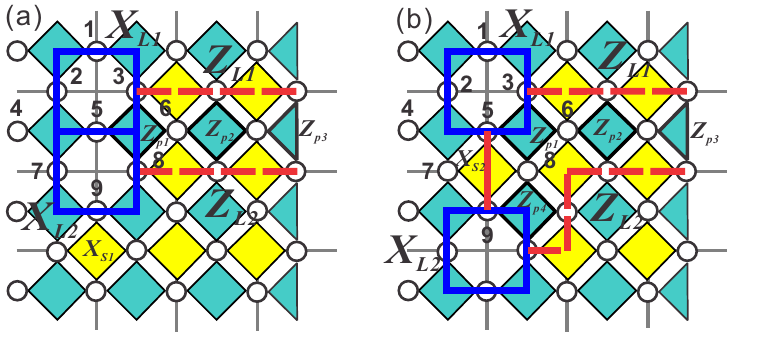}
\caption{\label{fig:state_injection} (Color online) State injection for a $X$-cut qubit. Colored squares indicate that the corresponding $X_s$ (yellow) and $Z_p$ (cyan) operator is turned on.}
\end{figure}
We focus on $X$-cut double qubits. We first put an \emph{existing} $X$-cut qubit into the state $|+^X_{DL}\>$ with the two holes attached to each other, as in panel (a) of Fig.~\ref{fig:state_injection}. This can be done by doing a logical $X$ measurement on an existing $X$-cut qubit (Sec.~\ref{sec:difficult_measurement}) and moving the two holes together. Without loss of generality, assume the state after measurement to be $|+_{DL}^X\>$. Note that $Z_L=\sigma_{z_5}$ is equivalent to $Z_{L_1}Z_{L_2}$ up to multiplication by $Z_p$ operators, as shown in panel (a), which gives:
\beq
\sigma_{z_5}=Z_{p_1}Z_{p_2}Z_{p_3}Z_{L_1}Z_{L_2}.
\eeq
For the $|\pm_{DL}^X\>$ state, the effect of $\sigma_{z_5}$ is
\beq
\begin{split}
\sigma_{z_5}|\pm^X_{DL}\>&=\sigma_{z_5}|\pm_{SL}^X\>_1|\pm_{SL}^X\>_2\\
&=Z_{L_1}Z_{L_2}|\pm_{SL}^X\>_1|\pm_{SL}^X\>_2\\
&=|\mp_{SL}^X\>_1|\mp_{SL}^X\>_2.
\end{split}
\eeq
Applying a pulse $V_c=g\sigma_{z_5}$ for a short time $\tau$, with Hamiltonian
\beq
H=H_{\text{stab}}+V_c,
\eeq
we can see that $[V_c,H_{\text{stab}}]=0$, where  $H_{\text{stab}}$ is the stabilizer Hamiltonian shown in panel (a). The pulse will not cause a transition from the ground space to another eigenspace of $H_{\text{stab}}$. If $\tau$ is chosen such that $g\tau=\theta/2$, we have the state evolution:
\beq
\begin{split}
&\exp\left(-i\frac{\theta}{2}\sigma_{z_5}\right)|+^X_{SL}\>_1|+^{X}_{SL}\>_2\\
=&\frac{e^{-i\frac{\theta}{2}}}{\sqrt{2}}\bigg(\frac{|+_{SL}^X\>_1|+_{SL}^X\>_2+|-_{SL}^X\>_1|-_{SL}^X\>_2}{\sqrt{2}}\\
+&e^{i\frac{\theta}{2}}\frac{|+_{SL}^X\>_1|+_{SL}^X\>_2-|-_{SL}^X\>_1|-_{SL}^X\>_2}{\sqrt{2}}\bigg)\\
=&\frac{e^{-i\frac{\theta}{2}}}{\sqrt{2}}\left(|0^X_{DL}\>+e^{i\theta}|1^{X}_{DL}\>\right),
\end{split}
\eeq
which gives the desired state we want to inject. Note that if a $\sigma_{x_5}$ error occurs, it will suffer from the energy penalty, and cause the $Z_p$s adjacent to it to be flipped, leaving the syndrome for future error correction. On the other hand, the imprecise control of the pulse $V_c$ can affect the state injected and cannot be detected. However, as long as rate of $\sigma_{z_5}$ error is lower than a threshold, logical states $|Y_{DL}\>$ and $|A_{DL}\>$ can be obtained with sufficient precision by state distillation~\cite{reichardt2005quantum}. Then two holes can be adiabatically separated to distance $d$ to better protect against errors, as illustrated in panel (b) of Fig.~\ref{fig:state_injection}.

The process of state injection for a $Z$-cut qubit is slightly more complicated. We first inject state the $|\psi\>=|Y\>$ or $|A\>$ for an $X$-cut qubit and prepare a $Z$-cut qubit in state $|+\>$ and then we swap the state of these two logical qubits using following circuit:
\begin{center}$
\Qcircuit @C=1em @R=1em {
\lstick{\ket{\psi^X_{DL}}}&\qw &\targ &\ctrl{1} & \targ &\qw &\rstick{|+_{DL}^X\>}\qw \\
\lstick{\ket{+_{DL}^{Z}}} &\qw &\ctrl{-1} &\targ & \ctrl{-1}& \qw &\rstick{|\psi_{DL}^Z\>}\qw
}$
\end{center}
Note that the $|+_{DL}^X\>$ is ready to be reused for state injection, and all process included here can be done fault-tolerantly.

%
%

\subsection{State Distillation}
The logical ancilla states, $|Y\>=|0\>+i|1\>$ and $|A\>=|0\>+e^{i\pi/4}|1\>$ after injection are not good enough in general for the purpose of fault-tolerant QC. Fortunately, they can be distilled to much higher fidelity~\cite{Bravyi:2005:022316}. The reversed encoding circuit for $7$-qubit Steane code can be used to distill the $|Y\>$ state, with seven input logical states approximately equal to $|Y\>$~\cite{Fowler:2009:052312} as shown in Fig.~\ref{fig:Y_distillation}. The output $|\psi\>$ will be closer to the logical $|Y\>$ state.
\begin{figure}[!ht]
\centering\includegraphics[width=55mm]{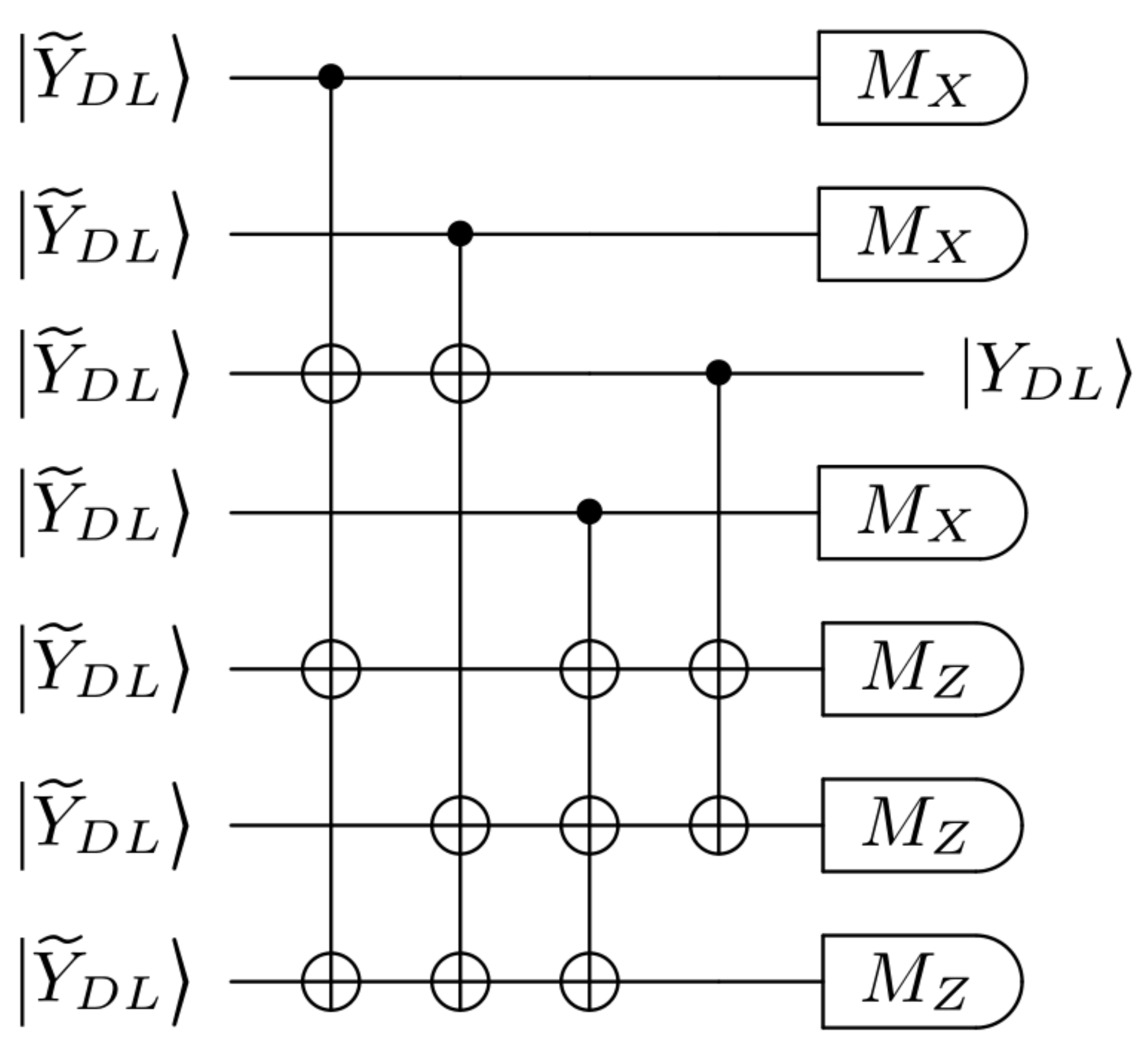}
\caption{\label{fig:Y_distillation} Circuits for logical $|Y\>$ distillation from imperfect $|\widetilde{Y}\>$ states.}
\end{figure}
\begin{figure}[!ht]
\centering\includegraphics[width=58mm]{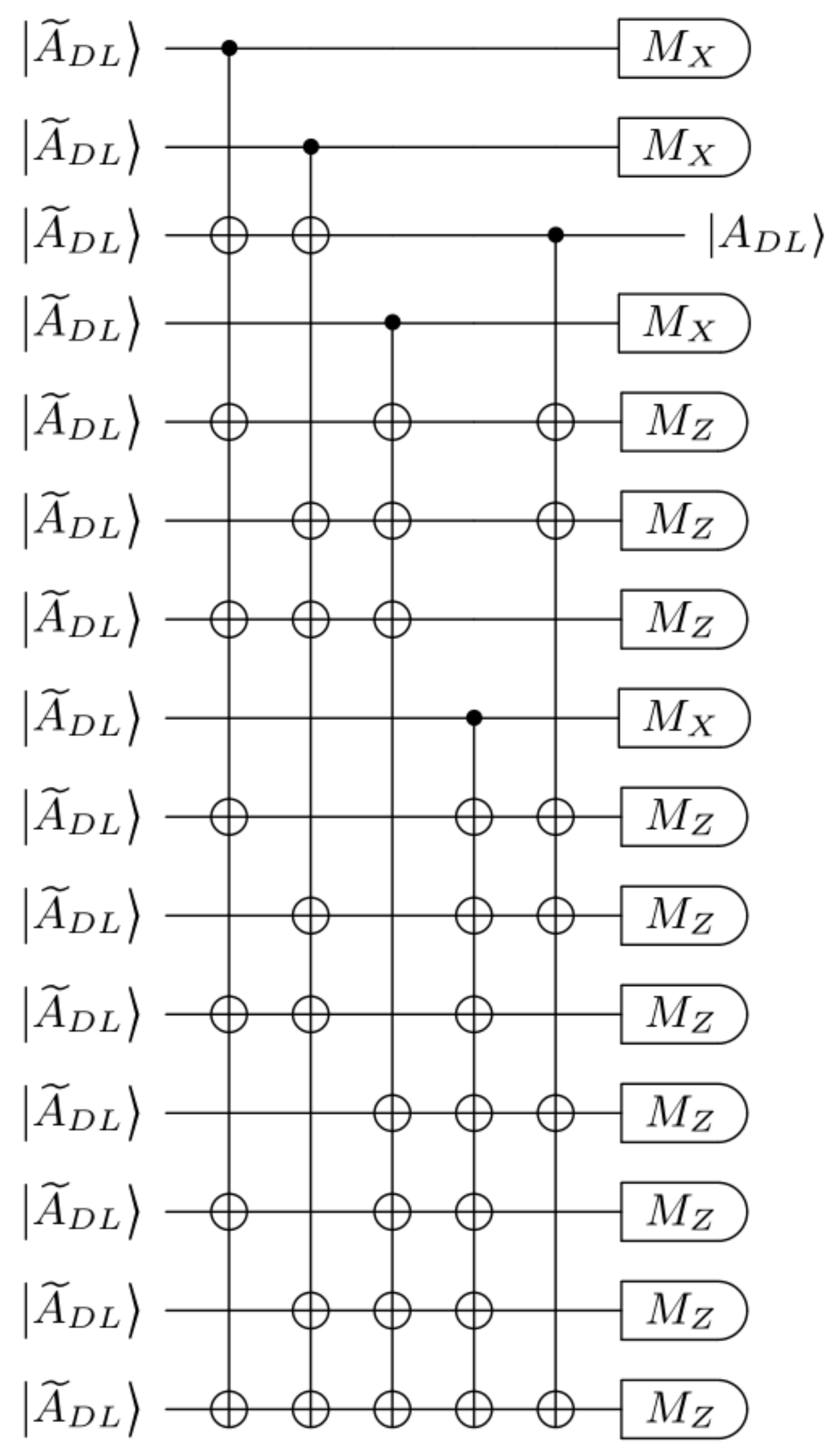}
\caption{\label{fig:A_distillation} Circuits for logical $|A\>$ distillation from imperfect $|\widetilde{A}\>$ states.}
\end{figure}
Repeating this process multiple times, arbitrarily high fidelity $|Y\>$ states can be obtained exponentially quickly if the original fidelity of the input states is higher than some threshold~\cite{reichardt2005quantum}.
A similar distillation circuit exists for the $|A\>$ state, as shown in Fig.~\ref{fig:A_distillation},
which is the reverse of the encoding circuit for the [[$15,1,3$]] truncated Reed-Muller code~\cite{Raussendorf:2006:2242,Fowler:2009:052312}. As before, given a good enough input $|A\>$ state, the convergence is rapid.

Note that these distillation circuits use CNOTs between the same type of qubits, and both types of logical state measurements described in Sec.~\ref{sec:easy_measurement} and ~\ref{sec:difficult_measurement}. If the input states are $X$-cut qubits, then the logical $X$ measurements are of the second kind, and the states after measurement are $|+_{DL}^X\>$ or $|-_{DL}^X\>$, which are ready to be reused to inject $|Y\>$ or $|A\>$ for future state distillation. The logical $Z$ measurements are of the first type, and will prepare logical states $|0_{DL}^X\>$ or $|1_{DL}^X\>$. To recycle these logical qubits to inject new $|Y\>$ or $|A\>$, we need to reset them to $|+_{DL}^X\>$ or $|-_{DL}^X\>$, which can be done by a subsequent logical $X$ measurement:
\begin{center}$
\Qcircuit @C=1em @R=1em {
\lstick{\ket{\psi^X_{DL}}}&\qw &\measureD{M_{Z}} &\targ & \qw &\rstick{|+_{DL}^X\>\text{ or }|-_{DL}^X\>}\qw \\
\lstick{\ket{+_{DL}^{Z}}} &\qw &\qw &\ctrl{-1} & \qw &\measureD{M_{X}}
}$
\end{center}
Note that the ancilla states $|+_{DL}^Z\>$ or $|-_{DL}^Z\>$ introduced here after logical $X$-measurement can also be reused directly as ancilla for another logical $X$-measurement. The recycling process for $Z$-cut qubit inputs is similar.


\subsection{Logical Phase and \emph{T} Gates}\label{sec:S_T}
Given the distilled $|Y\>$ state, we can implement high quality logical $S$ gates and logical $R^X_L(\pi/2)=\exp\left(-i\frac{\pi}{4} X_L\right)$ gates using the following circuits~\cite{Fowler:2009:052312}:
\begin{center}$
\Qcircuit @C=1em @R=1em  {
\lstick{\ket{Y_{DL}}}&\qw  & \ctrl{1}   & \qw &\gate{Z_LX_L}\cwx[1] & \rstick{S_L|\psi_{DL}\>}\qw \\
\lstick{\ket{\psi_{DL}}}&\qw  & \targ  & \measureD{M_Z} &\control\cw\cwx
}$
\end{center}
\begin{center}$
\Qcircuit @C=1em @R=1em  {
\lstick{\ket{Y_{DL}}}&\qw  & \targ &\qw   &\gate{Z_LX_L}\cwx[1] & \rstick{R^X_L(\pi/2)|\psi_{DL}\>}\qw \\
\lstick{\ket{\psi_{DL}}}&\qw  & \ctrl{-1}  & \measureD{M_X}&\control\cw\cwx
}$
\end{center}
If the measurement outcome is $+1$, nothing needs to be done; otherwise, do a $ZX$ gate. Note that this $ZX$ gate can be done in ``software" rather than physically.

The non-Clifford gates play a central role in quantum speedup~\cite{Gottesman:9705052}, and are necessary to obtain a universal gate set. For the surface code, the logical $T$ gate is implemented with high quality distilled logical $|A\>$ states using this circuit~\cite{Folwer2012PhysRevA.86.032324}:
\begin{center}
$\Qcircuit @C=1em @R=1em  {
\lstick{\ket{A_{DL}}} &\qw &\ctrl{1}   &\qw &\gate{Z_LX_LS_L} \cwx[1]& \rstick{T_L|\psi_{DL}\>}\qw \\
\lstick{\ket{\psi_{DL}}} &\qw & \targ  & \measureD{M_Z}&\control\cw\cwx
}$
\end{center}
If the logical $Z$ measurement yields a $+1$ outcome, the output state is the desired one. If the measurement yields a $-1$ outcome, the output is $X_LT^\dag_L |\psi_{DL}\>$ and $Z_LX_LS_L$ needs to be applied to get $T_L$.
Again, the logical $X$ and $Z$ gate can be done in classical ``software"  rather than physically. Details of commuting $X_L$, $Z_L$ through $S_L$ and $T_L$ for classical software control were discussed in Sec.XVI.A of Ref.~\cite{Folwer2012PhysRevA.86.032324}. As usual, the states after the measurements in these circuits can all be recycled and used as ancillas for logical CNOT gates, state injection and state distillation in future computational steps.

\subsection{Hadamard}\label{sec:Hadamard}
In the existing, measurement-based QC on the surface code, a logical Hadamard is realized by first digging a ``moat" around the double logical qubits by measuring single qubits around the double hole to create a logical qubit island. On the ``island", a logical Hadamard gate is then realized by a sequence of code deformations through single qubit and stabilizer measurements, and then the ``moat" at last is repaired ~\cite{Folwer2012PhysRevA.86.032324}. This version of logical Hadamard is easy and efficient enough in measurement-based QC, but difficult to implement in our system when the stabilizer Hamiltonian is turned on.
Instead, the logical Hadamard gate can be done directly:
\beq
\text{Had}=S\cdot R^X(\pi/2) \cdot S.
\eeq
Both logical $S$ and $R^X(\pi/2)$ are fault-tolerant but heavily rely on the state distillation of logical $|Y\>$ state.

There is a more efficient way to do a logical Hadamard, as illustrated in Ref.~\cite{Bombin:2010:30403}, by introducing a nontrivial domain wall on the lattice and moving the holes across the wall. The wall can be created by shifting the geometry of the lattice along a line, as shown in Fig.~\ref{fig:twist}. The five body interaction terms terminating the dislocation are called twists~\cite{Bombin:2010:30403}. One can see that the insertion of two twists changes the degeneracy of ground space. This can form an additional logical qubit, which we call gauge qubit $\mathcal{F}$. The corresponding logical operators of this qubit are also shown in Fig.~\ref{fig:twist}.
\begin{figure}[!ht]
\centering\includegraphics[width=92mm]{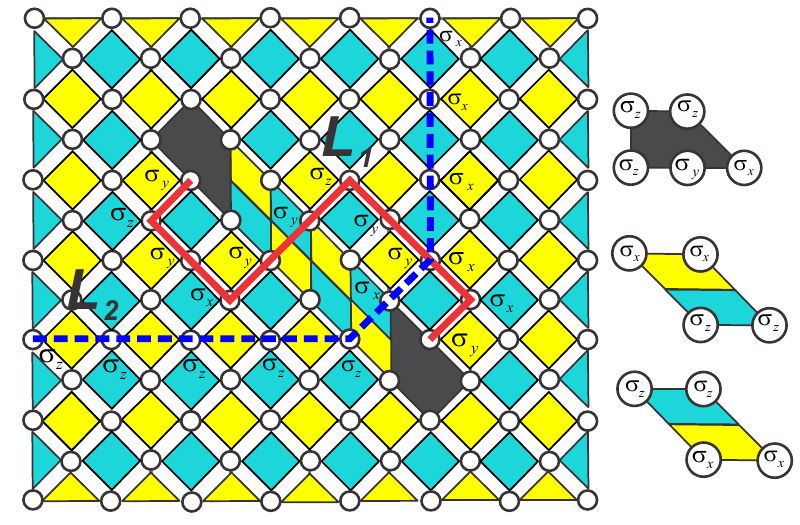}
\caption{\label{fig:twist} (Color online) A dislocation in the geometry of the Hamiltonian produced by shifting the stabilizer generators along a line between two twists. The stabilizer generators corresponding to two different parallelograms (yellow/cyan and cyan/yellow) and a pentagon (dark gray) are shown on the right side. A pair of anticommuting strings of Pauli operators $L_1$ (solid red) and $L_2$ (dashed blue) that commute with all stabilizer generators forms the logical operators of the extra qubit $\mathcal{F}$ attached to the pair of twists.}
\end{figure}

If a single $Z$ ($X$)-cut hole is adiabatically dragged across the wall, it will change to a $X$ ($Z$)-cut hole, as shown in Fig.~\ref{fig:logical_hadamard}. However, note that this process can also change the state of $\mathcal{F}$, since it will change logical operators $L_1$ and $L_2$. This effect in general will yield additional entanglement between data qubit and $\mathcal{F}$.
However, if we drag the second hole of the logical data qubit across the wall, it will reverse the change caused by the first hole and leave the state of $\mathcal{F}$ unchanged.
In summary, adiabatically moving two holes of a logical qubit across the wall will give a state transformation on the data qubit:
\beq
|\psi^Z_{DL}\>\rightarrow \text{Had}\ |\psi^X_{DL}\>,\ \ \ \ \ |\psi^X_{DL}\>\rightarrow \text{Had}\ |\psi^Z_{DL}\>,
\eeq
\begin{figure}[!ht]
\centering\includegraphics[width=90mm]{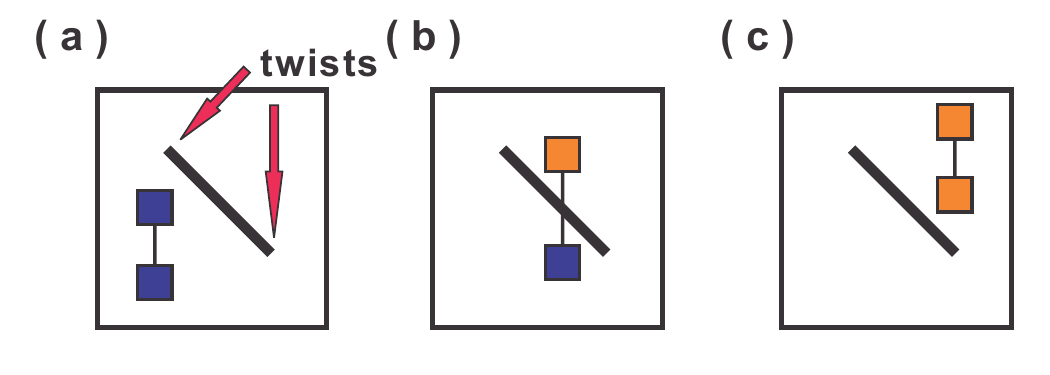}
\caption{\label{fig:logical_hadamard} (Color online) Adiabatically moving a pairs of holes of a $Z$-cut qubit (dark blue holes) across a twist on the surface to get a logical Hadamard gate. This process will transform a $Z$-cut qubit to an $X$-cut qubit (orange holes). }
\end{figure}
\begin{table*}[!ht]
\begin{center}
\begin{tabular}{c|c|c|c|c}
  \hline
  \hline
  {\bf Process} &  {\bf Gap protection} & {\bf Fault-tolerance} & {\bf Dynamics}& {\bf Number of time steps}  \\[2pt]
  \hline
  Creation $|0\>\ (|+\>)$ for $Z\ (X)$-cut qubit& Yes & Yes  & Adiabatic & $\sim d/2$ \\[2pt]
  \hline
  Creation $|0\>\ (|+\>)$ for $X\ (Z)$-cut qubit& Yes & Yes & Adiabatic+Measurement & $\sim d$ \\[2pt]
  \hline
  $Z\ (X)$ measurement for $X\ (Z)$-cut qubit& Yes & Yes  &Adiabatic+Measurement & $\sim d$\\[2pt]
  \hline
  $Z\ (X)$ measurement for $Z\ (X)$-cut qubit& Yes & Yes &Adiabatic+Measurement  & $O(d)$\\[2pt]
  \hline
  Hole enlargement & Yes & No & Adiabatic & $\sim d/2$  \\[2pt]
  \hline
  Hole movement & Yes & Yes & Adiabatic& N/A\\[2pt]
  \hline
  Logical CNOT & Yes & Yes & Adiabatic+Measurement & $O(d)$ \\[2pt]
  \hline
  State injection & Yes & No & Adiabatic+ Pulse control  & $\sim d$ \\[2pt]
  \hline
  State distillation & Yes & Yes & Adiabatic+Measurement & N/A  \\[2pt]
  \hline
  Logical $S$, $T$, Hadamard & Yes & Yes & Adiabatic+Measurement &  N/A\\[2pt]
  \hline
  \hline
\end{tabular}
\caption{Summary.}\label{table_summary}
\end{center}
\end{table*}
for $Z$-cut qubits and $X$-cut qubits. Another problem of this method is that it will change of the type of qubits we are working on. However, we can use an ancilla to swap the data qubit back by the circuit
\begin{center}$
\Qcircuit @C=1em @R=1em {
\lstick{\ket{\psi^Z_{DL}}}&\qw &\ctrl{1} &\targ & \ctrl{1} &\qw &\rstick{|0_{DL}^Z\>\ \text{or}\ \ket{+_{DL}^Z}}\qw \\
\lstick{\ket{0_{DL}^{X}}\ \text{or}\ \ket{+_{DL}^X}} &\qw &\targ &\ctrl{-1} & \targ &\qw &\rstick{|\psi_{DL}^X\>}\qw
}$
\end{center}
for a $Z$-cut qubit, and
\begin{center}$
\Qcircuit @C=1em @R=1em {
\lstick{\ket{\psi^X_{DL}}}&\qw &\targ &\ctrl{1} & \targ &\qw &\rstick{\ket{0_{DL}^X}\ \text{or}\ \ket{+_{DL}^X}}\qw \\
\lstick{\ket{0_{DL}^Z}\ \text{or}\ \ket{+_{DL}^{Z}}} &\qw &\ctrl{-1} &\targ & \ctrl{-1}& \qw &\rstick{|\psi_{DL}^Z\>}\qw
}$
\end{center}
for an $X$-cut qubit. The position of the twists can be fixed on the lattice so that they
can be used repeatedly for Hadamard gates.



\section{Fault-tolerance of the scheme}\label{sec:fault-tolerance}
We have described a way to fault-tolerantly implement QC in surface codes with a constant energy gap to suppress errors in a thermal environment. Table.~\ref{table_summary} lists a summary of each procedure. Note that although adiabatic hole enlargement and state injection are not themselves fault-tolerant, they do not affect the fault-tolerance of the whole QC scheme. In addition to gap protection during the computation, fault-tolerance is guaranteed by performing single qubit and syndrome measurements before errors can propagate to become uncorrectable. We discuss the interval betweens syndrome measurements in Sec.~\ref{sec:period}.

So far, the error models we considered are induced by weak coupling to a thermal bath. We also need to consider other decoherence channels, which may affects qubits collectively or directly act on logical qubits. In this section, we will discuss two of them: local perturbations and adiabatic errors. In the following sections we show that they can both be exponentially bounded.

\subsection{Error correction}\label{sec:period}
A proper time period to turn off the system Hamiltonian and do error correction, in the case that there are no errors detected during the adiabatic hole movement process, is crucially important. We assume that syndrome measurement is done every $m$ time steps, and
$m\exp(-2c\beta J)$ can be regarded as the error rate on each qubit for every $m$ time steps ($m\exp(-2c\beta J)\ll 1$), since all processes necessary for universal QC are protected by a gap of at least $2J$. Besides thermal errors accumulating on each qubit, the following types of physical errors can occur in a single syndrome measurement cycle in Sec.~\ref{sec:surface_code}~\cite{Folwer2012PhysRevA.86.032324}:
\begin{enumerate}
  \item $\sigma_x$ error occurs when a syndrome qubit is initialized to $|0\>$, with  probability $p$.
  \item The Hardamard gate on syndrome qubit is not perfect. There is extra $\sigma_x$, $\sigma_y$ or $\sigma_z$ error following the gate, each with probability $p/3$.
  \item Error occurs when a syndrome qubit is measured, with probability $p$.
  \item CNOT gate on syndrome qubit-data qubit CNOT is not perfect, but with following erros: $I\otimes\sigma_x$, $I\otimes \sigma_y$, $I\otimes\sigma_z$, $\sigma_x\otimes I$, $\sigma_x\otimes\sigma_x$, $\sigma_x\otimes\sigma_y$, $\sigma_x\otimes\sigma_z$, $\sigma_y\otimes I$, $\sigma_y\otimes\sigma_x$, $\sigma_y\otimes\sigma_y$, $\sigma_y\otimes\sigma_z$, $\sigma_z\otimes I$, $\sigma_z\otimes \sigma_x$, $\sigma_z\otimes\sigma_y$ or $\sigma_z\otimes\sigma_z$, each with probability $p/15$.
\end{enumerate}
Note that one needs several cycles of syndrome measurements to establish values of syndrome before actual decoding. Then, the logical error rate of surface code for $m$ time steps with active error correction can be roughly estimated as~\cite{Folwer2012PhysRevA.86.032324}
\beq
P_L^m\approx d\frac{d!}{(d_e-1)!d_e!}\left(me^{-2c\beta J}+7p\right)^{d_e},
\eeq
where $d_e=(d+1)/2$.
A plot of this estimate is shown in Fig.~\ref{fig:logical_error_rate}, for various values of $c\beta J$ , $p$ and $m$.
\begin{figure}[!ht]
\centering\includegraphics[height=60mm,width=90mm]{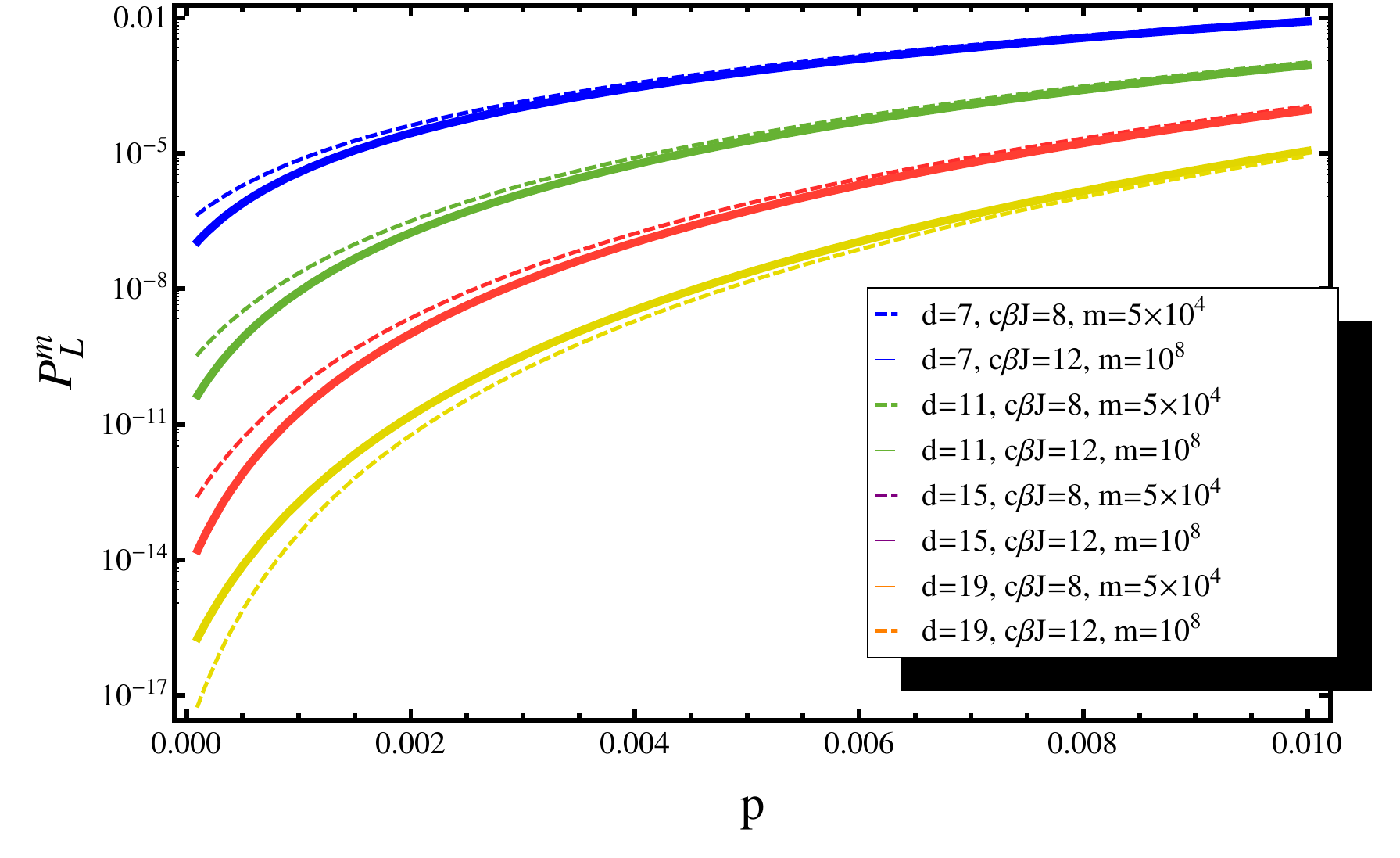}
\caption{\label{fig:logical_error_rate} (Color online) Logical error rate per $m$ time steps for various values of $m$ and $d$. The dashed lines are for $c\beta J=8$ and solid lines for $c\beta J=12$. The blue (top), green (second top), red (third) and yellow (bottom) lines are for $d=7$, $d=11$, $d=15$ and $d=19$, respectively.
}
\end{figure}
We can use these scaling relations to estimate the number of qubits needed to obtain a desired error rate after error correction.
Our goal is that the error rate after the whole computer procedure is bounded by some particular value $\delta \ll 1$. Denoted by $M$ the product of number of logical operation and the number of logical qubits used in an algorithm. We need to have:
\beq\label{eq:logical_error_condition}
P_L^m\lesssim\frac{m\delta}{dM},
\eeq
since each logical operation needs about $d$ time steps in our scheme. For a particular computation like Shor's algorithm implemented on surface codes, $M$ is of the order larger than $10^{14}$~\cite{Raussendorf:2007:199}. We can choose  $p=0.001$ and $c\beta J=12$, which may be achievable in current experiments. Also, set $\delta=0.1$, $d=11$ and $m=10^{8}$, then we have $P_L^m\approx 10^{-8}$, which satisfies the condition of Eq.~(\ref{eq:logical_error_condition}). This requires a number of data and measurement qubits $n_{\text{tot}}=(2d-1)^2\approx 450$ to protect a logical qubit, and perform Shor's algorithm with reasonable success probability. We can see that if large $c\beta J$ is not achievable,
one can always choose a code with larger distance and more frequent error correction to compensate for the small $c\beta J$.
However, if the $c\beta J$ can increase to $15$, we can even reduce $d$ to 7 and $n_q$ to about 170, with $m=10^{10}$ and same value of $\delta$, making it more efficient to build a scalable QC in the near future.



\subsection{Local perturbation}
Perturbations will split the degeneracy of the ground space and cause stochastic phase errors between different logical states. This is one of the main obstacles to realizing non-Abelian holonomic quantum gates on system with a small number of qubits. However, for surface codes, the splitting of the ground space (and any other error space) caused by local perturbations will decay exponentially with the distance of the surface code, as shown by Kitaev in Ref.~\cite{Kitaev:2003:2}. Actually, any system with quantum topological order is in general stable under local perturbations~\cite{Bravyi:2010:093512}. This might suggest that holonomic QC is more naturally suitable with systems with topological order than systems with small number of qubits. Consider a local perturbation of the general form:
\beq
V_{\text{local}}=-\sum_jh_j\vec{\sigma}_j-\sum_{j <p}J_{jp}(\vec{\sigma}_j,\vec{\sigma}_p),
\eeq
which includes all one-qubit and two-qubit interactions. The effect of $V_{\text{local}}$ only occurs in the $d/2-$th order of perturbation theory, and the energy splitting vanishes as
\beq
\Delta_{\text{split}}\sim O\left(Je^{-vd/2}\right),
\eeq
where $v=\min_{ij} \{\ln(J/|h_i|), \ln(J/\|J_{ij}\|_1)\}$, which decreases quickly with growth of the code distance. Consider the case when $d=11$, $J=1$. To achieve an error rate of order $10^{-15}$, we must to control the values of $|h_i|/J$ and $\|J_{ij}\|_1/J$ so that they are less than $10^{-3}$, which is practically achievable for current or near future technology.

\subsection{Adiabatic error}
Another type of error corresponds to imperfect adiabatic evolution. We now discuss adiabatic theorem briefly and their application to bound the corresponding error. The traditional version of the adiabatic theorem stated in ~\cite{Messiah:1965:North} says that the adiabatic approximation is satisfied with precision $\delta\leq\epsilon^2$ during adiabatic evolution if the condition
\beq
\frac{\sup_{t\in[0,T]}\parallel P_{\textbf{s}_\alpha}(t)\frac{\partial}{\partial t}H(t)P_{\textbf{s}_\beta}(t)\parallel_1}{\inf_{t\in[0,T]}K\left(\varepsilon_{\textbf{s}_\alpha}(t)
-\varepsilon_{\textbf{s}_\beta}(t)\right)^2}\leq \epsilon, \ \text{for any}\ \alpha\neq\beta,
\eeq
is satisfied ($K$ is the dimension of the code space). In the case of our adiabatic process, this is equivalent to
\beq
\sup_{q,t\in[t_{q-1},t_q]}\frac{\pi|\partial_t{f}_{q}(t)|}{4}\leq\epsilon
\eeq
for the $q$th time segment.
However, it is known that this statement is neither sufficient nor necessary, and we can obtain better results~\cite{Hagedorn:2002:235,lidarAdiabaticaccuracy:102106}.
Here we apply the result in~\cite{lidarAdiabaticaccuracy:102106} to our piecewise adiabatic evolution, serial or parallel, as described in Sec.~\ref{sec:sketch_scheme}, for the $q$th time segment. We can set $T_q=t_{q}-t_{q-1}$, for a Hamiltonian $H(\vartheta)$($\vartheta=t/T_q$) that is analytic near the region $[0,1]$ in the complex plane, with the absolute value of the imaginary part of the nearest pole being $\gamma$, and the first $\mathcal{N}\geq1$ derivatives at boundaries equal to zero, i.e., $H^{(l)}(0)=H^{(l)}(1)=0$ for $l\leq\mathcal{N}$. If we set
\beq\label{eq:evolution_time}
T_q=\frac{e}{\gamma}\mathcal{N}\frac{\xi_q^2}{\Delta_{\text{min}}^3},
\eeq
with $\xi_q=\sup_{\vartheta\in[0,1]}\parallel\text{d}H/\text{d}\vartheta\parallel_\infty$
(where the $\parallel\cdot\parallel_\infty$ is standard operator norm, and $\Delta_{\text{min}}=2J$), then the adiabatic approximation error satisfies
\beq
\delta_{\text{ad}}\leq(\mathcal{N}+1)^{\gamma+1}e^{-\mathcal{N}},
\eeq
or equivalently,
\beq
\delta_{\text{ad}}\lesssim (c_qT_q+1)^{\gamma+1}e^{-c_qT_q},
\eeq
with $c_q=\frac{\gamma \Delta_{\text{min}}^3}{e\xi_q^2}$. In other words, we can decrease the adiabatic error exponentially with evolution time $T_q$, if it is carefully set to be proportional to $\mathcal{N}$ and $f_q(t)$ is chosen such that a) the boundary condition mentioned above is satisfied, and b) $H(\vartheta)$ is analytic near region $[0,1]$ on the complex plane. The adiabatic error for typical processes listed in Table.~\ref{table_summary} can then be bounded by
\beq
\delta_{\text{ad}}\sim O\left( d\cdot\sup_ q(c_qT_q+1)^{\gamma+1}e^{-c_qT_q}\right).
\eeq
So in principle, we can make adiabatic process arbitrarily small with careful chosen $\{T_q\}$ and $\{f_q\}$. Note that the thermal error rate decreases exponentially with $J$, while the during of each adiabatic time segment decreases as the cube of $J$ at fixed temperature, so the processing time overhead of an adiabatic process can be small if $J$ is large.
\begin{remark}
We've analyzed that it is possible to use on the order of $10^2$ physical qubits to protect a single logical qubit in practical quantum computation with protection by a constant gap enabling fault-tolerant QC in surface codes. This is quite efficient compared to the existing QC scheme in surface codes~\cite{Folwer2012PhysRevA.86.032324}. However, the assumption here is that the thermal error model is local, and the stabilizer Hamiltonian is fundamental, given by Nature. Such 4-body $X_s$ and $Z_p$ interactions are hard to build directly, and usually needs certain techniques, like quantum gadgets~\cite{KempeQuantumGadget,Oliveira:2008:CQS:2016985.2016987}, digital quantum simulator~\cite{RydbergsimulatorNatPhysics,weimer2011digital_rydberg_simulator}
, the low energy approximation from Kitaev's honey-comb model~\cite{kitaev2006anyons} or to be generated dynamically~\cite{becker2013dynamic_self_correction}. If the Hamiltonian is effective, rather than being fundamental, it may dramatically change the local thermal error model we have assumed, and cause nonlocal errors. This possibility calls for future investigation.
\end{remark}

%
%
%

\section{Summary and conclusion}\label{sec:summary}
We have outlined a scheme for fault-tolerant universal HQC based on surface codes, with stabilizer Hamiltonian to protect quantum information encoded in the degenerate ground space, from both thermal errors and small perturbations. We explicitly constructed all necessary processes with energy gap protection and parallel operations. These processes include logical state creation, a logical universal gate set, and logical state measurement. Logical state initialization and measurement are realized by open-loop adiabatic evolution and measurements on single qubits compatible with system Hamiltonian, while the logical CNOT is implemented by a closed-loop holonomic operation. All other logical gates can be implemented using the logical CNOT, logical state preparation, and logical state measurement. It is worth mentioning that if a
twist is allowed to exist on the surface, the logical Hadamard can be done much more efficiently. Conditions for active error correction are also discussed. The number of physical qubits needed to protect a logical qubit for fault-tolerant QC can reduce to the order of $10^2$, if large coupling constant $J$ and low temperature are achievable in experiment.

Theoretical and experimental progress in non-Abelian HQC for single-qubit operations has been made recently, through both
adiabatic~\cite{HQC_adiabatic_realization2013} and
non-adiabatic evolution~\cite{NJPNonAdiabaticHolonomic,NonAbelianNonAdiabaticHolonomicNature, FengGuanruPhysRevLett.110.190501,Luming2014HQCexperimental} on various of physical systems. Applying our scheme to an actual physical system needs local 4-body interactions. Several theoretical proposals have
been proposed to build such interactions effectively, which include low energy perturbations~\cite{kitaev2006anyons,KempeQuantumGadget,Oliveira:2008:CQS:2016985.2016987}
of systems with strong two body interactions, and dynamic simulation
~\cite{RydbergsimulatorNatPhysics,weimer2011digital_rydberg_simulator,becker2013dynamic_self_correction}.
As argued in Sec.~\ref{sec:fault-tolerance}, the effect of such
effective interaction on local error models needs further study. It is important to find out under what conditions these effective Hamiltonians behave like the ideal ones in \emph{open} quantum systems.

We concentrated on surface codes
in this paper, but we hope the methods can be extended to fault-tolerant QC schemes with constant gap protection on other topological codes, including color codes~\cite{Bombin:2006:180501,landahl2011fault_color} and Turaev-Viro codes~\cite{koenig2010quantumUTQC}.

Another interesting question is, could it be possible to do QC fault-tolerantly on an arbitrarily large scale without any active error correction? It has been shown that it is possible to do so with  6\textsc{D} topological color codes~\cite{bombin2013self-correctingQC}. In our scheme on a 2D lattice, if $J$ is very large and the temperature is sufficiently low (which is certainly a challenging engineering problem), then for practical algorithm, it may not be necessary to do active error correction. It has also been shown that a self-correcting quantum memory to store quantum information for a polynomially (or even exponentially) long time in the lattice size exists, if long range interactions between anyons is allowed~\cite{Hamma_self_correction_PhysRevB.79.245122,chesi2010self-correciton_memory,
PedrocchiMemorycoupledtocavitymodePhysRevB.83.115415,hutter2012self-correction_memory,
wootton2013topological_self_correcting}. Theoretical work
to realize such a long range interaction was also proposed
in~\cite{self_correction_in3DPhysRevA.88.062313,becker2013dynamic_self_correction}.
Long range interaction can freeze the density of excited anyons on the lattice for such a long time that logical errors are quite unlikely to happen. One may ask whether such interactions can be allowed when we adiabatically deform the stabilizer Hamiltonian in our scheme. One difficulty here is that, when enlarging or moving the holes, it is hard to define the concept of anyons on the boundaries of the holes. How to introduce similar long range interactions during hole movement and enlargement is an interesting problem, and if it is possible, one may be able to implement self-correcting QC on a 2D lattice.

\vspace{2mm}
\noindent\emph{Addendum:} When writing this manuscript, we note that Cesare, Landahl, Bacon, Flammia and Neels have published a manuscript~\cite{adiabatic_topologic_qc1406.2690} with the idea of implementing adiabatic TQC. There is a similarity of underlying spirit for both schemes: protecting quantum information with a constant energy gap during the process of quantum computation on topological codes. However, they differ a great deal in how they implement logical state preparation, measurement, state injection and uses of logical ancilla states. Also, we don't restrict ourselves to adiabatic process.
Finally, we analyze the errors carefully to establish the fault-tolerance of our scheme.




\section*{ACKNOWLDEGEMENT}
We would like to thank Ben Reichardt and Ching-Yi Lai for fruitful discussion of surface code and fault-tolerant quantum computation. This research was supported in part by ARO MURI Grant No. W911NF-11-0268, and by NSF Grants No. EMT-0829870 and No. TF-0830801.

\appendix
\numberwithin{equation}{section}
\section{Geometric Formulation of HQC}\label{sec:geometric_holonomic}
In this section, we introduce a more abstract geometric setting of holonomic problem which is useful to prove the results in Sec.~\ref{sec:sketch_scheme}. We focus on the ground space for simplicity, however, the formalism is general and can be applied to any eigenspace of system Hamiltonian.

Suppose we have a family of Hamiltonians acting on the Hilbert space $\mathbb{C}^N$, and the ground state of each Hamiltonian is $K$-fold degenerate ($K<N$). The natural mathematical setting to describe this system is the principal bundle $(S_{N,K}(\mathbb{C}), G_{N,K}(\mathbb{C}),\pi, {\rm U}(K))$, which consists of the Stiefel manifold $S_{N,K}(\mathbb{C})$, the Grassmann manifold $G_{N,K}(\mathbb{C})$, the projection map $\pi:S_{N,K}(\mathbb{C})\rightarrow G_{N,K}(\mathbb{C})$, and the unitary structure group ${\rm U}(K)$. We will explain the meaning of these mathematical objects in details below.

The Stiefel manifold is defined as:
\beq
S_{N,K}(\mathbb{C})=\{V\in M(N,K;\mathbb{C})|V^\dag V=I_K\},
\eeq
where $M(N,K;\mathbb{C})$ is the set of $N\times K$ complex matrices and $I_K$ is the $K-$dimensional unit matrix. Physically, each column of $V\in S_{N,K}(\mathbb{C})$ can be regarded as a normalized state in $\mathbb{C}^N$, and $V$ can be viewed as an orthonormal set of $K$ basis of the ground space of Hamiltonian:
\beq
V=\big\{|\varphi_1\>,|\varphi_2\>,\ldots , |\varphi_K\>\big\}.
\eeq
Note that we have freedom to transfer from one orthnormal basis of to another through unitary transformation, we can define a unitary group $\textrm{U}(K)$ that acts on $S_{N,K}(\mathbb{C})$ from the right:
\beq\label{eq:gauge_transform}
S_{N,K}(\mathbb{C})\times\textrm{U}(K)\rightarrow S_{N,K}(\mathbb{C}), \quad (V,h)\mapsto Vh,
\eeq
by the matrix product of $V$ and $h$. $V$ and $Vh$ can be regarded as two
different orthonormal basis corresponding to the same ground space.

During the adiabatic evolution, the ground space of the Hamiltonian will change. The ground space can be represented as a $K$-dimensional hyperplane in $\mathbb{C}^N$. So we introduce the Grassmann manifold in $\mathbb{C}^N$:
\beq
G_{N,K}(\mathbb{C})=\{P\in M(N,N;\mathbb{C})|P^2=P,P^\dag=P,\text{Tr}P=K\},
\eeq
where $P$ is a projection operator onto the hyperplane in $\mathbb{C}^N$, and the condition $\text{Tr}P=K$ indicates that the dimension of the hyperplane is $K$. In our scenario, $P\in G_{N,K}(\mathbb{C})$ can be regarded as the projector onto the $K$-dimensional ground space of the Hamiltonian.

The relationship between the orthonormal basis $V$ and ground space $P$ can be seen as follows. We define the projection map $\pi:S_{N,K}(\mathbb{C})\rightarrow G_{N,K}(\mathbb{C})$ as
\beq
\pi:V\mapsto P:=VV^\dag.
\eeq
The corresponding ground space projector can be obtained when the orthonormal basis is given. We can see that the basis $V$ and basis $Vh$ with $h\in \text{U}(K)$ belong to the same ground space, since
\beq\label{eq:projection}
\pi(Vh)=(Vh)(Vh)^\dag=Vhh^\dag  V^\dag = VV^\dag = \pi(V).
\eeq


For the purpose of the paper, we want to transform the ground space adiabatically during the procedure. To formulate such a process, we need also define the left action of the unitary group ${\rm U}(N)$ on both $S_{N,K}(\mathbb{C})$ and $G_{N,K}(\mathbb{C})$ by the matrix product:
\beq\label{eq:basis_transform}
{\rm U}(N)\times S_{N,K}(\mathbb{C})\rightarrow S_{N,K}(\mathbb{C}),\quad (g,V)\mapsto gV,
\eeq
and
\beq\label{eq:space_transform}
{\rm U}(N)\times G_{N,K}(\mathbb{C})\rightarrow G_{N,K}(\mathbb{C}),\quad (g,P)\mapsto gPg^\dag.
\eeq
It is easy to check that $\pi(gV)=g\pi(V)g^\dag$.
This action is transitive: there is a $g\in{\rm U}(N)$
for any $V,V^\prime\in S_{N,K}(\mathbb{C})$ such that $V^\prime=gV$.
There is also a $g\in{\rm U}(N)$ for
any $P,P^\prime\in G_{N,K}(\mathbb{C})$ such that $P^\prime=gPg^\dag$.
So this action is sufficient to describe any ground space transformation. This is why we choose to use the form of Hamiltonian deformation in Eq.~(\ref{eq:Hamiltonian_change}).

We can further study the topological structure of $S_{N,K}(\mathbb{C})$ and $G_{N,K}(\mathbb{C})$ for completeness. For each point $V$ in $S_{N,K}(\mathbb{C})$, we can define an isotropy group:
\beq
I_{S}(V)=\{g\in \text{U}(N)|gV=V\},
\eeq
which is isomorphic to $\text{U}(N-K)$ for all $V\in S_{N,K}(\mathbb{C})$. Similarly, we can define an isotropy group for each $P\in G_{N,K}(\mathbb{C})$:
\beq
I_G(P)=\{g\in \text{U}(N)|gPg^\dag = P\},
\eeq
which is isomorphic to $\text{U}(K)\times \text{U}(N-K)$ for all $P\in G_{N,K}(\mathbb{C})$. Thus, $S_{N,K}(\mathbb{C})\cong \text{U}(N)/\text{U}(N-K)$ and $G_{N,K}(\mathbb{C})\cong \text{U}(N)/\big(\text{U}(K)\times \text{U}(N-K)\big)$~\cite{Nakahara:2003:IOP}.

The canonical connection form on $S_{N,K}(\mathbb{C})$
is defined as a $\mathbbm{u}(K)$-valued one-form on $G_{N,K}(\mathbb{C})$:
%
\beq
A=V(P)^\dag \text{d}V(P),
\eeq
which is a generalization of the WZ connection in Eq.~(\ref{eq:WZ_connection}). This is the unique connection that is invariant under the transformation in Eq.~(\ref{eq:gauge_transform}):
\beq
\begin{split}
\tilde{A}=&h^\dag V(P)^\dag \text{d}\left(V(P)h\right)\\
=&h^\dag A h + h^\dag \text{d}h.
\end{split}
\eeq

We apply this formalism to the system dynamic of HQC. The state vector $|\psi(t)\>\in\mathbb{C}^N$ evolves according to the Schr\"{o}dinger equation:
\beq\label{eq:shrodinger_2}
i\frac{\text{d}}{\text{d}t}|\psi(t)\>=H(t)|\psi(t)\>.
\eeq
The Hamiltonian has a spectral decomposition,
\beq
H(t)=\sum_{l=0}^L\varepsilon_l(t)P_l(t),
\eeq
with projection operators $P_l(t)$. Therefore, the set of energy eigenvalues $(\varepsilon_0(t),\ldots,\varepsilon_L(t))$ and orthogonal projectors $(P_0(t),\ldots, P_l(t))$ encodes the information of the control parameters of the system. For the ground space, we write $P_0(t)$ as $P(t)$ for simplicity. Suppose the degeneracy $K=\text{Tr}\{P(t)\}$ is constant. For all $t$, there exists $V(t)\in S_{N,K}(\mathbb{C})$ such that $P(t)=V(t)V^\dag(t)$. By the adiabatic approximation, we can substitute for $|\psi(t)\>\in\mathbb{C}^N$ a reduced state vector $\phi(t)\in \mathbb{C}^K$:
\beq
|\psi(t)\>=V(t)\phi(t).
\eeq
Since $H(t)|\psi(t)\>=\varepsilon_0(t)|\psi(t)\>$, the Schr\"{o}dinger equation ~(\ref{eq:shrodinger_2}) becomes
\beq
\frac{\text{d}\phi}{\text{\text{d}}t}+V^\dag\frac{\text{d}V}{\text{d}t}\phi(t)=\varepsilon_0(t)V(t)\phi(t),
\eeq
and the solution can be represented formally as
\beq
\phi(t)=e^{-i\int_0^t\varepsilon_0(\tau)\text{d}\tau}\mathcal {P}\exp\left(-\int V^\dag \text{d}V\right)\phi(0).
\eeq
Therefore, $\psi(t)$ can be written
\beq\label{eq:adiabatic_evolution}
|\psi(t)\>=e^{-i\int_0^t\varepsilon_0(\tau)\text{d}\tau}V(t)\mathcal{P}\exp\left(-\int V^\dag \text{d}V\right)V^\dag(0)|\psi(0)\>.
\eeq
In particular, if the system comes back to its initial point, as $P(T)=P(0)$, the holonomy $\Gamma\in\text{U}(K)$ is defined as
\beq\label{eq:holonomy}
\Gamma=V^\dag(0) V(T)\mathcal{P}\exp\left(-\int V^\dag \text{d}V
\right),
\eeq
and the final state is
\beq
\begin{split}
|\psi(T)\>&=e^{-i\int_0^t\varepsilon_0(\tau)\text{d}\tau}V(0)\Gamma\phi(0).\\
\end{split}
\eeq
According to the formula above, an operation $\Gamma\in {\rm U}(K)$ is applied to the ground space.

If the condition
\beq\label{eq:horizontal_condition}
V^\dag\cdot\frac{\text{d}V}{\text{d}t}=0,
\eeq
is satisfied for all $t$, the curve $V(t)$ in $S_{N,K}(\mathbb{C})$ is called a horizontal lift of the curve $P(t)=\pi(V(t))$ in $G_{N,K}(\mathbb{C})$.Then the holonomy ~(\ref{eq:holonomy}) is greatly simplified to
\beq\label{eq:horizontal_holonomy}
\Gamma=V^\dag(0)\cdot V(T)\in\text{U}(K).
\eeq


For closed-loop HQC, given a desired unitary operation
$U_{\text{op}}\in \text{U}(K)$ and
a fixed initial point $P(0)\in G_{N,K}(\mathbb{C})$,
we want to find a loop $P(t)\in G_{N,K}(\mathbb{C})$ with base points
$P(0)=P(T)$ whose horizontal lift $V(t)\in S_{N,K}(\mathbb{C})$
produces holonomy $\G = U_{\text{op}}$
according to Eq.~(\ref{eq:horizontal_holonomy}). For open-loop adiabatic code deformation, Eq.~(\ref{eq:adiabatic_evolution}) is general to obtain the state evolution when the adiabatic condition is satisfied.

%


Without loss of generality, we can always restrict ourselves to the case such that $P(t)$ has the form:
\beq
P(t)=U(t,0)P(0)U^{\dag}(t,0)=U(t,0)v_0v_0^\dag U^{\dag}(t,0),
\eeq
for some smooth $U(t,0)\in {\rm U}(N)$ according to Eq.~(\ref{eq:space_transform}). Note here, $U(t,0)$ should be chosen such that in general, at any time $t$,
\beq
U(t+\tau,t)P(t)U^{\dag}(t+\tau,t)\neq P(t),
\eeq
for some neighborhood of $t$. In other word, $U(t)$ must not be in the isotropy group of $P(t)$. This condition can also stated as
\beq\label{eq:eff_evolution}
\left[\frac{\partial}{\partial\tau}U(t+\tau,t)|_{\tau=0},P(t)\right]\neq 0.
\eeq
The case where Eq.~(\ref{eq:eff_evolution}) equals 0 is allowed only at a finite number of points in $[0,T]$. The horizontal curve should satisfy the following set of equations:
\beq\label{eq:horizontal_equation_group}
\begin{split}
    V^\dag\cdot\frac{\text{d}V}{\text{d}t}&=0, \\
    P(t)=V(t)V^\dag(t)&= U(t,0)v_0v_0^\dag U^{\dag}(t,0).
\end{split}
\eeq
The general solution to these equations can
be written as:
\beq\label{eq:general_curve}
V(t)=U(t,0)v_0h(t,0)
\eeq
for some $h(t,0)\in {\rm U}(K)$. Substituting Eq.~(\ref{eq:general_curve}) into Eq.~(\ref{eq:horizontal_equation_group})
we get:
\beq\label{eq:ht_evolution}
\dot{h}(t,0)=-v_0^\dag U^{\dag}(t,0)\dot{U}(t,0)v_0 h(t,0),
\eeq
which completely determines the $h(t)$, horizontal lift, and state evolution
for a given adiabatic process.

\section{Proof of Lemma~\ref{lemma:state_evolution},\ \  \ref{lemma:error_propagation},\ \ \ref{lemma:parrallelism}}
We first prove a lemma which will be used to prove other lemmas:
\begin{mylemma}\label{lemma:operator}
$\forall g_q\in\mathcal{G}$ is in the normalizer of $G_n$.
\end{mylemma}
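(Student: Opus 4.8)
The plan is to use the elementary structural fact about the Pauli group that any two elements either commute or anticommute, together with the explicit form of $g_q=\exp\!\left(i\frac{\pi}{4}Q_q\right)$ with $Q_q\in G_n$ Hermitian. First I would record that since $Q_q$ is a Hermitian element of the Pauli group, $Q_q^2=\ident$, so the exponential truncates:
\beq
g_q=\cos\frac{\pi}{4}\,\ident+i\sin\frac{\pi}{4}\,Q_q=\frac{1}{\sqrt{2}}\left(\ident+iQ_q\right),
\eeq
and correspondingly $g_q^\dagger=\frac{1}{\sqrt{2}}\left(\ident-iQ_q\right)$.

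Next I would take an arbitrary $P\in G_n$ and split into the two exhaustive cases. If $[P,Q_q]=0$, then $P$ commutes with any function of $Q_q$, so $g_qPg_q^\dagger=Pg_qg_q^\dagger=P\in G_n$. If instead $\{P,Q_q\}=0$, then $g_qP=Pg_q^\dagger$ (because $Q_qP=-PQ_q$ pushes $Q_q$ through $P$ with a sign flip on every term), whence
\beq
g_qPg_q^\dagger=Pg_q^\dagger g_q^\dagger=P\cdot\frac{1}{2}\left(\ident-iQ_q\right)^2=P\cdot\frac{1}{2}\left(\ident-2iQ_q-\ident\right)=-iPQ_q=iQ_qP.
\eeq
Since $Q_q,P\in G_n$ and $G_n$ is closed under products and under multiplication by the phase $i$, the operator $iQ_qP$ lies in $G_n$. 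In either case $g_qPg_q^\dagger\in G_n$, and applying the same argument to $g_q^\dagger$ (or noting $g_q^{-1}=g_q^\dagger$ is of the same form with $Q_q\to-Q_q$) gives $g_q^\dagger P g_q\in G_n$ as well, so conjugation by $g_q$ is a bijection of $G_n$ onto itself; that is, $g_q\in N(G_n)$.

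I do not anticipate any real obstacle here: this is precisely the statement that each $g_q$ is a Clifford gate, and the only thing to be careful about is bookkeeping of the $\pm1,\pm i$ phases in the anticommuting case and checking that the Pauli group as defined in the paper includes those phases (it does, being the group generated by the single-qubit Paulis). Since a product of normalizer elements is again in the normalizer, this lemma will then immediately feed into the proofs of Lemmas~\ref{lemma:state_evolution}, \ref{lemma:error_propagation} and \ref{lemma:parrallelism}, where $\Omega_p=\prod_l g_l$ and the conjugated error operators $F_\alpha^{pq}$ need to remain Pauli operators.
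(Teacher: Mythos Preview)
Your proof is correct and follows essentially the same strategy as the paper: split into the commuting and anticommuting cases and compute $g_qPg_q^\dagger$ explicitly, obtaining $P$ or $iQ_qP\in G_n$ respectively. The only cosmetic difference is that the paper reaches the anticommuting result via the Baker--Campbell--Hausdorff (nested commutator) expansion of $e^{i\frac{\pi}{4}Q_q}M e^{-i\frac{\pi}{4}Q_q}$, which collapses to $\cos(\pi/2)M+i\sin(\pi/2)Q_qM$, whereas you use the truncated exponential $g_q=\tfrac{1}{\sqrt{2}}(\ident+iQ_q)$ directly; both give the same answer and the same final conclusion.
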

\begin{proof}
For any $M\in G_n$, either $[M,Q_q]=0$ or $\{M,Q_q\}=0$. In the second case, we have $[Q_q,M]=2 Q_qM = 2 M^\prime$, with $M^\prime\in G_n$.
\beq
\begin{split}
g_q M g_q^\dag =& \exp\left(i\frac{\pi}{4} Q_q\right) M \exp\left(-i\frac{\pi}{4} Q_q\right)\\
=& M + i\frac{\pi}{4}[Q_q, M] - \frac{\pi^2}{16\cdot 2!}[Q_q,[Q_q, M]]\ldots\\
=&\cos(\pi/2) M + i\sin(\pi/2) M^\prime \\
=&i M^\prime.
\end{split}
\eeq
Further, if $M$, $Q_q$ are Hermitian, $M^\prime$ is anti-Hermitian and $g_qMg_q^\dag$ is Hermitian.
\end{proof}

\subsection{Lemma~\ref{lemma:state_evolution}}\label{sec:proof_lemma_state_evolution}
The deformation of the Hamiltonian is isospectral, so the number of logical qubits encoded in the ground space is constant, say $k$. The horizontal lift $V_0(t)$ for $P_0(t)$ in general can be written as $V_0(t)=U_q(t,t_{q-1})V_0(t_{q-1})h(t,t_{q-1})$. From Eq.~(\ref{eq:ht_evolution}), $U_q^\dag(t,t_{q-1})\partial_tU(t,t_{q-1})=i\partial_tf_q(t)Q_q$,
\beq
\frac{\partial h}{\partial t}=iV_0^\dag(t_{q-1})\partial_tf_q(t)Q_q V_0(t_{q-1})
\eeq
for $t\in[t_{q-1},t_q]$, and
\beq\label{eq:ht_theom}
V_0(t_{q-1})\partial_t{h}(t,0)V_0^\dag(t_{q-1})=iP_0(t_{q-1})\partial_tf_q(t)Q_qP_0(t_{q-1}).
\eeq
Since $S_j(t_0)\in G_n$ for all $j$, $g_l\in G_n$, for all $l$.
\beq
\begin{split}
P_0(t_{q-1})&=\left(\prod_{l=1}^{q-1}g_{q}\right)P(0)\left(\prod_{l=1}^{q-1}g_{l}\right)^\dag\\
&=\left(\prod_{l=1}^{q-1}g_{q}\right) \prod_{j=0}^{n-k}\frac{I+S_j(0)}{2}\left(\prod_{l=1}^{q-1}g_{q}\right)^\dag\\
&=\prod_{j=1}^{n-k}\frac{I+S_j(t_{q-1})}{2},
\end{split}
\eeq
where $S_j(t_{q-1})=\left(\prod_{l=1}^{q-1}g_{l}\right)S_j(t_0)\left(\prod_{l=1}^{q-1}g_{l}\right)^\dag$ is in $G_n$ because $\{g_q\}$ are all in the normalizer of $G_n$ (Lemma.~\ref{lemma:operator}). Since $[Q_q,H(t_{q-1})]\neq0$, so there exists at least one $S_j(t_{q-1})$ such that $\{Q_q, S_j(t_{q-1})\}=0$. According to Eq.~(\ref{eq:ht_theom}), $V_0(t_{q-1})\partial_th(t,0)V_0^\dag(t_{q-1})=0$ and $h(t,t_q)=I$. Thus $V_0(t)=U_q(t,t_{q-1})V_0(t_{q-1})$ and $V_0(t)=U_q(t,t_{q-1})\left(\prod_{l=1}^{q-1}g_{l}\right)V_0(t_0)$.
From Eq.~(\ref{eq:adiabatic_evolution}).
\beq
|\psi(t)\>=e^{-i\varepsilon_0(t-t_{q-1})}U_q(t,t_{q-1})\left(\prod_{l=1}^{q-1}g_{l}\right)|\psi(t_0)\>.
\eeq
Setting $q=p$ and $t=t_p$, we get
\beq
|\psi(t_p)\>=e^{-i\varepsilon_0(t_p-t_{q-1})}\Omega_p|\psi(t_0)\>.
\eeq

\subsection{Lemma~\ref{lemma:error_propagation}}\label{sec:proof_error_propagation}
First, we show that for any $\alpha\neq\beta$, the adiabatic condition for $P_{\textbf{s}_\alpha}$ and $P_{\textbf{s}_\beta}$ is satisfied. We have $S_j(t_l) \in G_n$ according to Lemma.~\ref{lemma:operator} for $1\leq l\leq q$. Consider the time segment $[t_q,t_{q+1}]$ first. Define the index set $\mathscr{I}=\{1,2,\ldots, n-k
\}$ to be the number of terms in the Hamiltonian $H(t_q)$ with sets $\mathscr{A}_\alpha=\{j\in\mathscr{I}|\{S_j(t_q),F_\alpha\}=0\}$, $\mathscr{B}_\alpha=\mathscr{I}\backslash \mathscr{A}_\alpha$, $\mathscr{C}_{Q_l}=\{j\in\mathscr{I}|\{S_j(t_q),Q_l\}=0\}$ and $\mathscr{D}_{Q_l}=\mathscr{I}\backslash \mathscr{C}_{Q_l}$. Since $F_\alpha\in G_n$,
\beq
\begin{split}
&F^q_\alpha P_0(t_q) \left(F^q_\alpha\right)^\dag\\
=&\mathlarger{\prod}_{j\in \mathscr{A}_\alpha}\frac{I+S_j(t_q)}{2}\mathlarger{\prod}_{j^\prime\in \mathscr{B}_\alpha}\frac{I-S_j^\prime(t_q)}{2}\\
=&\underbrace{\mathlarger{\prod}_{m\in\mathscr{C}_{Q_{q+1}}}\frac{I+s_{\alpha_m} S_m(t_q)}{2}}_{P_{\textbf{s}_\alpha}^\mathscr{C}}
\cdot\underbrace{\mathlarger{\prod}_{m^\prime\in\mathscr{D}_{Q_{q+1}}}\frac{I+s_{\alpha_{m^\prime}}S_{m^\prime}(t_q)}{2}}
_{P_{\textbf{s}_\alpha}^\mathscr{D}}\\
=&P_{\textbf{s}_\alpha}(t_q).
\end{split}
\eeq
Here, $P_{\textbf{s}_\alpha}^\mathscr{C}$ and $P_{\textbf{s}_\alpha}^\mathscr{D}$ are short for $P_{\textbf{s}_\alpha}^{\mathscr{C}_{Q_{q+1}}}$ and $P_{\textbf{s}_\alpha}^{\mathscr{D}_{Q_{q+1}}}$.  For any $\beta\neq \alpha$,
\beq\label{eq:ad_cond_calculation}
\begin{split}
&P_{\textbf{s}_\alpha}(t) \frac{\partial H(t)}{\partial t} P_{\textbf{s}_\beta}(t)=\\
&-i\big(\partial_t f_{q}(t)\big)U_{q+1}(t,t_q)
\Big[P_{\textbf{s}_\alpha}(t_q)Q_{q+1}H(t_q)P_{\textbf{s}_\beta}(t_q)-\\
&P_{\textbf{s}_\alpha}(t_q)H(t_q)Q_{q+1}P_{\textbf{s}_\beta}(t_q)\Big]U_{q+1}^\dag(t,t_q),
\end{split}
\eeq
where $U_{q+1}(t,t_{q})=\exp\left(if_{q+1}(t)Q_{q+1}\right)$. We examine the two terms in the square brackets:
\beq
\begin{split}
&P_{\textbf{s}_\alpha}(t_q)Q_{q+1}H(t_{q})P_{s_\beta}(t_q)\\
=&\varepsilon_{\textbf{s}_\beta}(t_q)Q_{q+1}\mathlarger{\prod}_{m\in\mathscr{C}_{Q_{q+1}}}\frac{I-s_{\alpha_m}S_m(t_q)}{2}P_{\textbf{s}_\alpha}^{\mathscr{D}}(t_q)
P_{\textbf{s}_\beta}^{\mathscr{C}}(t_q)P_{\textbf{s}_\beta}^{\mathscr{D}}(t_q),
\end{split}
\eeq
and
\beq
\begin{split}
&P_{\textbf{s}_\alpha}(t_q)H(t_{q})Q_{q+1}P_{s_\beta}(t_q)\\
=&\varepsilon_{\textbf{s}_\alpha}(t_q)P_{\textbf{s}_\alpha}^{\mathscr{
C}}(t_q)P_{\textbf{s}_\alpha}^{\mathscr{D}}(t_q)P_{\textbf{s}_\beta}^{\mathscr{D}}(t_q)\mathlarger{\prod}_{m\in\mathscr{C}_{Q_{q+1}}}\frac{I-s_{\beta_m}S_m(t_q)}{2}Q_{q+1}.
\end{split}
\eeq
For those $\textbf{s}_\beta$ such that $s_{\alpha_m}\neq s_{\beta_m}$ for any $m\in \mathscr{D}_{Q_{q+1}}$, Eq.~(\ref{eq:ad_cond_calculation}) will be zero, and the adiabatic condition will be satisfied automatically. For those $\textbf{s}_\beta$ such that $s_{\alpha_m}=s_{\beta_m}$ for all $m\in \mathscr{D}_{Q_{q+1}}$, it's easy to check the above two expression are not equal to zero only if  $s_{\beta_m}=-s_{\alpha_m}$ for all $m\in \mathscr{C}_{Q_{q+1}}$. Therefore, there is only one $\beta$ such that $P_{\textbf{s}_\alpha}(t)\partial_t\big(H(t)\big)P_{\textbf{s}_\beta}(t)\neq0$ and hence that needs further checking. For that specific $\beta$, we have a simple relation:
\beq
Q_{q+1}P_{\textbf{s}_{\alpha}}(t_q)Q_{q+1}^\dag=P_{\textbf{s}_\beta}(t_q),
\eeq
and
\beq
\begin{split}
&\big \| P_{\textbf{s}_\alpha}(t)\frac{\partial H(t)}{\partial t}P_{\textbf{s}_\beta}(t_q)  \big\|_1\\
=&\partial_tf_{q+1}(t)\big|\varepsilon_{\textbf{s}_\alpha}(t_q)-\varepsilon_{\textbf{s}_\beta}(t_q) \big| \cdot \big\| P_{\textbf{s}_\alpha}(t_q)Q_{q+1} \big \|_1\\
=&K\partial_t{f}_{q+1}(t)\big|\varepsilon_{\textbf{s}_\alpha}(t_q)-\varepsilon_{\textbf{s}_\beta}(t_q) \big|.
\end{split}
\eeq
The left hand side of Eq.~(\ref{eq:adiabatic_condition_general}) reduces to
\beq
\frac{|\partial_t{f}_{q+1}(t)|}{\big|\varepsilon_{\textbf{s}_\alpha}(t_q)-\varepsilon_{\textbf{s}_\beta}(t_q) \big|},
\eeq
since $|\mathscr{C}_{Q_{q+1}}|$ is odd. We have
\beq
\big|\varepsilon_{\textbf{s}_\alpha}(t_q)-\varepsilon_{\textbf{s}_\beta}(t_q) \big| =\Big|\mathlarger{\sum}_{m\in \mathscr{C}_{Q_{q+1}}}2s_{\alpha_m}\Big|\geq 2.
\eeq
If $\partial_t{f}_{q+1}(t) \ll 1$ is satisfied (which is always possible by setting appropriate controls), then $P_{\textbf{s}_\alpha}(t)$ satisfies the adiabatic condition for time segment $t\in[t_q,t_{q+1}]$. The same argument can be applied to the time segments $l > q$ to show that the adiabatic condition can be satisfied between $P_{\textbf{s}_\beta}(t)$ and $P_{\textbf{s}_\alpha}(t)$ for any $\beta$. According to Eq.~(\ref{eq:adiabatic_evolution}),
\beq
\begin{split}
|\psi(t_p)\>&\propto V_{\textbf{s}_\alpha}(t_p)\left(F_\alpha^q V_0(t_q)\right)^\dag F^q_\alpha V_0(t_q)V_0^\dag(0)|\psi(0)\>\\
&= V_{\textbf{s}_\alpha}(t_p)V_0^\dag(t_q)V_0(t_q)V_0^\dag(0)|\psi(0)\>\\
&= V_{\textbf{s}_\alpha}(t_p)V_0^\dag(0)|\psi(0)\>,
\end{split}
\eeq
where $V_{\textbf{s}_\alpha}(t)$ is defined as
\beq
V_{\textbf{s}_\alpha}(t)= U(t,t_q)F^q_{\alpha}V_0(t_q)h(t,t_q), \ \ t>t_q,
\eeq
and is the horizontal lift of $P_{\textbf{s}_\alpha}(t)$ given the initial condition $F_{\alpha}V_0(t_q)$. From the same argument in the proof of Lemma~\ref{lemma:state_evolution}, we get
\beq
\begin{split}
|\psi(t_p)\>=& \mathlarger{\sum}_\alpha c_\alpha e^{-i\varepsilon_{\textbf{s}_\alpha}(t_p-t_q)}\left(\prod_{l=q+1}^{p}g_{l}\right)F^q_\alpha|\psi(t_q)\>\\
=&\mathlarger{\sum}_\alpha c_\alpha e^{-i\varepsilon_{\textbf{s}_\alpha}(t_p-t_q)}F_\alpha^{pq}\left(\prod_{l=1}^{p}g_{l}\right)|\psi(t_0)\>.
\end{split}
\eeq

\subsection{Lemma~\ref{lemma:parrallelism}}\label{sec:proof_parallesim}
For part 1, according to condition~1,
\beq
U_{q+1}(t,t_q)=\exp\left(i\sum_{r=q+1}^{q+M}f(t)Q_r\right),
\eeq
 for $t\in[t_q,
t_{q+1}]$. From the procedure in the proof of Lemma~\ref{lemma:state_evolution},
\beq
\begin{split}
\frac{\partial h}{\partial t}=&if(t)V_{\textbf{s}_\alpha}^\dag(t_q)P_{\textbf{s}_\alpha}(t_q)\left(\mathlarger{\sum}_{Q_r\in \mathscr{P}_q} Q_r\right)P_{\textbf{s}_\alpha}(t_q)V_{\textbf{s}_\alpha}(t_q)\\
=&0,
\end{split}
\eeq
according to $Q_r\in G_n$
and
\beq
|\psi(t)\>=e^{-i\varepsilon_0(t-t_{q-1})}U_{q+1}(t,t_q)|\psi(t_q)\>.
\eeq
When $t=t_{q+1}$, when $f(t_{q+1})=\pi/4$, and
\beq
|\psi(t_{q+1})\>=e^{-i\varepsilon_0(t_{q+1}-t_{q})}\left(\mathlarger{\prod}_{l=q+1}^{q+M}g_{l}\right) |\psi(t_q)\>,
\eeq
under the adiabatic approximation.

For part 2, suppose $F_\alpha^q$ takes the system from the ground space to $P_{\textbf{s}_\alpha}$. Then for any $\beta\neq\alpha$,
\beq
\begin{split}
&P_{\textbf{s}_\alpha}(t)\frac{\partial H(t)}{\partial t}P_{\textbf{s}_\beta}(t) =\\
&i\left(\partial_tf(t)\right)U_{q+1}(t,t_q)\sum_{Q_r\in\mathscr{P}_q}\Big[ P_{\textbf{s}_\alpha}(t_q)Q_r H(t_q)P_{\textbf{s}_\beta}(t_q)-\\
&P_{\textbf{s}_\alpha}(t_q)H(t_q)Q_r P_{\textbf{s}_\beta}(t_q)\Big]U_{q+1}^\dag(t,t_q).
\end{split}
\eeq
By the same argument as in the proof of Lemma~\ref{lemma:error_propagation}, for each $Q_r$, there is only one $\beta_r$ such that $P_{\textbf{s}_\alpha}(t_q)Q_r H(t_q)P_{\textbf{s}_{\beta_r}}(t_q)$ and $P_{\textbf{s}_\alpha}(t_q)H(t_q) Q_r P_{\textbf{s}_{\beta_r}}(t_q)$ do not equal 0. Since $\mathscr{C}_{Q_{r}} \bigcap \mathscr{C}_{Q_{m}}=\emptyset$ for any $Q_r,Q_m \in \mathscr{P}_q$, $\beta_r\neq \beta_m$ when $r\neq m$. Then, for any such $\beta_r$,
\beq
\big \| P_{\textbf{s}_\alpha}(t)\frac{\partial H(t)}{\partial t}P_{\textbf{s}_{\beta_r}}(t) \big \|_1
=K\partial_t{f}(t)\  \big|\ \varepsilon_{\textbf{s}_\alpha}(t_q)-\varepsilon_{\textbf{s}_{\beta_r}}(t_q)\big|.
\eeq
Since $|\mathscr{C}_{Q_{r}}|$ is odd, then $|\varepsilon_{\textbf{s}_\alpha} (t_q)-\varepsilon_{\textbf{s}_{\beta_r}}(t_q)|\geq 2$, the adiabatic condition Eq.~(\ref{eq:adiabatic_condition_general}) holds for arbitrary $\beta$, and we get
\beq
|\psi(t_{q+1})\>=\mathlarger{\sum}_\alpha c_\alpha e^{-i\varepsilon_{\textbf{s}_\alpha}(t_{q+1}-t_q)}F_\alpha^{q+1,q}
\left(\mathlarger{\prod}_{l=q+1}^{q+M}g_{l}\right) |\psi(t_q)\>.
\eeq

\newpage
%

\end{document}